
\documentclass[9pt, journal]{IEEEtran}
\usepackage{ifthen}
\usepackage{amsmath,bm}
\usepackage{soul}
\IEEEoverridecommandlockouts                              
\usepackage[dvipsnames]{xcolor}
\usepackage{url}
\usepackage{csquotes}

\usepackage{amsfonts,amssymb} 
\usepackage{mathtools,lipsum, nccmath}
\usepackage{graphics} 
\usepackage{mathptmx} 
\usepackage{times} 
\usepackage{cancel}
\usepackage{mathtools}

\usepackage{algorithm}
\usepackage{algorithmic}
\usepackage{lingmacros}
\usepackage{color}
\usepackage{float}
\usepackage{hyperref}
\usepackage{cleveref}
\usepackage{bm}
\usepackage{array}
\usepackage{tabu}

\hypersetup{
	colorlinks=true,
	linkcolor=blue,
	filecolor=blue,      
	urlcolor=blue,
	citecolor=blue
}

\newcommand{\kl}{Kullback-Leibler }


\usepackage{amsthm}

{
	\newtheorem{assumption}{Assumption}[section]

	\newtheorem{theorem}{Theorem}[section]
	
	\newtheorem{lemma}{Lemma}[section]
	\newtheorem{definition}{Definition}
	\newtheorem{proposition}{Proposition}[section]%
	\newtheorem{remark}{Remark}[section]
	
}
\let\oldproofname=\proofname
\renewcommand{\proofname}{\rm\bf{\oldproofname}}
\makeatletter
\renewenvironment{proof}[1][\proofname]{%
	\par\pushQED{\qed}\normalfont%
	\topsep6\p@\@plus6\p@\relax
	\trivlist\item[\hskip\labelsep\bfseries#1\@addpunct{.}]%
	\ignorespaces
}{%
	\popQED\endtrivlist\@endpefalse
}
\makeatother

\DeclareMathOperator{\Tr}{Tr}
\DeclareMathOperator{\col}{col}

\newcommand{\mbs}{\mathbf{s}}
\newcommand{\mba}{\mathbf{a}}
\newcommand{\mbf}{\mathbf{f}}
\newcommand{\mbe}{\mathbf{e}}
\newcommand{\mbw}{\mathbf{w}}
\newcommand{\mbF}{\mathbf{F}}

\newcommand{\mbSigma}{\pmb{\Sigma}}
\newcommand{\mbLambda}{\pmb{\Lambda}}
\newcommand{\mbsigma}{\pmb{\sigma}}
\newcommand{\mbmu}{\pmb{\mu}}
\newcommand{\mbepsilon}{\pmb{\epsilon}}
\newcommand{\mbzeta}{\pmb{\zeta}}
\newcommand{\mbz}{\mathbf{z}}

\newcommand{\bluecolor}{\color{blue}}

\newcommand{\mbbP}{\mathbb{P}}
\newcommand{\mbbQ}{\mathbb{Q}}

\newcommand{\figsmallwidth}{0.330\linewidth}

\usepackage{soul}
\let\oldmb\mathbold
\protected\def\mathbold{\oldmb}

\usepackage{graphicx}
\usepackage{verbatim}   
\usepackage{color}
\usepackage{cancel}
\usepackage[normalem]{ulem}
\usepackage{subcaption}
\newcommand*{\rom}[1]{\expandafter\@slowromancap\romannumeral #1@}
\makeatother

\begin{document}
	\newcounter{keep_kde_plots}
	\setcounter{keep_kde_plots}{1}
	%
	
	\title{Dynamic Programming Approximate Optimal Control for Model-Based Reinforcement Learning}
	\author{Prakash Mallick\textsuperscript{1},  Zhiyong Chen\textsuperscript{2}\\
		\textsuperscript{1}Australian Institute of Machine Learning (AIML), 
		\textsuperscript{2}University of Newcastle\\

			\thanks{Prakash Mallick is with the Australian Institute of Machine Learning (AIML) affiliated to University of Adelaide and Zhiyong Chen is with the School of Engineering,
				University of Newcastle, Callaghan, NSW 2308, Australia and  Emails: {\tt\small prakash.mallick@adelaide.edu.au, zhiyong.chen@newcastle.edu.au}}%
		}

	\markboth{}%
	{Schaich \MakeLowercase{\textit{et al.}}: \thetitle}
	%

	
	\maketitle
	\begin{abstract}                          
		
This article proposes an improved trajectory optimization approach for stochastic optimal control of dynamical systems affected by measurement noise by combining optimal control with maximum likelihood techniques to improve the reduction of the cumulative cost-to-go. A modified optimization objective function that incorporates dynamic programming-based controller design is presented to handle the noise in the system and sensors. Empirical results demonstrate the effectiveness of the approach in reducing stochasticity and allowing for an intermediate step to switch optimization that can allow an efficient balance of exploration and exploitation mechanism for complex tasks by constraining policy parameters to parameters obtained as a result of this improved optimization. This research study also includes theoretical work on the uniqueness of control parameter estimates and also leverages a structure of the likelihood function which has an established theoretical guarantees. Furthermore, a theoretical result is also explored that bridge the gap between the proposed optimization objective function and existing information theory (relative entropy) and optimal control dualities. 
	\end{abstract}
	
	\begin{IEEEkeywords}
		Stochastic systems, optimal control, trajectory optimization, {robust control},
		maximum likelihood, {robust} expectation maximization 
	\end{IEEEkeywords}

	%
	\IEEEpeerreviewmaketitle
	\section{Introduction}
	
	Model-based trajectory optimization offers a general method of planning, and therefore has been utilized to generate/reproduce a wide range of behaviours; see e.g., \cite{todorov2005generalized,toussaint2009robot,Schulman13findinglocally}. However, such methods does not directly handle the partially observed environments. Research study conducted by~\cite{mallick2022stochastic} and~\cite{mallick2022reinforcement} handled the latency in states fairly well over a 2D task; however these proposed paradigms have not been tested on more complicated tasks. This paper delves into empirical study of the expectation-maximization-based stochastic control approach widely evaluated on complicated robotic simulation framework assigned to carry out complex tasks. The new modified optimization approach leverages the fundamentals of dynamic programming and the newly formulated optimization routine has the same functional structure that of the likelihood. As a consequence of this, all of the theoretical results presented in the research studies of~\cite{mallick2022stochastic} can be leveraged into this newly proposed optimization procedure. The modified objective function is experimented on two complex tasks involving a \texttt{PR2} similar robotic simulation platform i.e., \texttt{MUJOCO}~\cite{mujoco}; details of which will be described later in the chapter.

{
The contributions of this research can be summarized as follows, with references to relevant works:
\begin{itemize}
    \item The research methods developed in \cite{mallick2022stochastic,mallick2022reinforcement} propose a sophisticated approach to address trajectory optimization, particularly in the presence of measurement noise. While Expectation-Maximization (EM) methods are effective in handling measurement noise, the simplicity and computational efficiency of iLQG \cite{todorov2005generalized} make it highly appealing for practical and industrial applications. Therefore, the subsequent sections of this chapter aim to integrate iLQG and EM-based stochastic optimal control (SOC) into a unified framework, which is evaluated on complex tasks.
    \item In order to introduce a novel dynamic programming-based optimization for locally optimal controller design, the optimization objective function has been modified. Building on the approach in \cite{chebotar2017combining}, the locally optimal solutions of the newly proposed objective function guide the iLQG control design routine. The new policy is constrained by previous policies, and significant results are presented in the simulation experiments.
    \item 
Moreover, inspired by the research connection between information theory and optimal control dualities, particularly the work in \cite{theodorouinformation}, this study leverages these connections, with a focus on free energy \cite{theodorou2015nonlinear}. The developed model can be interpreted as a special case of information theory (relative entropy) and optimal control dualities.
\item 
Theoretical foundations establish the uniqueness characteristic of the maximizer of the likelihood function based on dynamic programming. This provides a practical and viable approach to achieve a lower-dimensional approximation in each iteration. Additionally, it is demonstrated that this new optimization technique effectively explores the state space, which is characterized by high uncertainty in the control actions of the robotic joints of the manipulator. Furthermore, numerical evidence showcases improved state trajectories with reduced stochasticity and better reduction in the cumulative cost-to-go.
\end{itemize}
}

 	The paper is organised in the following manner: Section~\ref{chap5:problemstatementandprocedure} elucidates fundamental definitions of \kl divergence, relative entropy and free energy in a manner pertinent to the context of optimal control. Then, section~\ref{chap5:problemstatementandprocedure} delineates the underlying mathematical problem statement, provides an explanation of the procedure of the solution. This is followed by Section~\ref{sec:DPSOC}, which demonstrates the new DP optimization and a few theoretical results which is new in accordance to the new objective function. Then, Section~\ref{sec:connection} of this manuscript focusses on providing an information theoretic view of the EM-based stochastic optimal control approach.
	
	\begin{table}[t]
		\caption{Summary of symbols}
		\centering
		\begin{tabular}{ |p{2.5cm}|p{5.5cm} | } 
			\hline
			\textbf{Symbol}& \textbf{Definition} \\
			$k,T$ & Time instant, Length of episode \\
			$\mathbf{s}_k $   &  Measured state  { (implementation) } \\
			& or { latent state (optimization) at time instant $k$ }\\
			$\mathbf{x}_k ,\mathbf{a}_k$   & Real state  and control action  at time instant $k $ \\
			$Y_k(\mathbf{s}_k,\mathbf{a}_k)$   or 
			$Y_k(\mathbf{s}_k,\phi_k)$   
			& Instantaneous cost at time instant $k$ \\
			$y_k$ & Observed cost  $p(Y_k)$ \\
			{${\mathbb{S}_{T+1}}$} &  {Measured or latent variable $\{\mathbf{s}_1,\mathbf{s}_2,\cdots, \mathbf{s}_{T+1}\}$}\\
			${\mathbb{Y}_{T}}$/${\mathbb{A}_{T}}$   & Reward observation   $\{y_1, y_2, y_3,\cdots, y_{T}\}$/Set of actions $\{\mba_1, \mba_2, \mba_3,\cdots, \mba_{T}\}$\\
			$\hat{\phi}^{i}$ & Controller parameter $\phi$ at the  $i$-th iteration\\
			$\mathbb{E}$ & Expectation of a random variable \\ 
			$V_\phi ( {\mathbb{S}_{T+1}})$ & Cumulative sum of expected costs \\
			$L_\phi ({\mathbb{Y}_T})$ , 	$\mathcal{L}(\phi,\hat{\phi}^i)$ & Observation log-likelihood, Mixture likelihood \\
			$\text{vec}(\cdot )$ & Column vector stacked by vector  of its matrix argument \\
			$\col (\cdots)$ & Column vector stacked by its column arguments \\
			$\Tr (\cdot)$ & Trace of its matrix argument\\
      $(\Omega,\mathcal{F})$  & Sample space of an experiment, $\sigma-$algebra respectively. \\
     $\pmb{{P}} (\Omega)$ &   probability measure on $\sigma-$algebra $\mathcal{F}$\\
			$\nabla$, $\nabla^2$  &  Gradient vector field, Hessian matrix; second-order partial derivative of a scalar function \\
			$^\top$, 	$\otimes$ & Transpose operator, Kronecker product operator \\
			$\mathbb{R}$ /  $\mathbb{R}^+$ & Set of real numbers / positive numbers \\
			$ \mathbf{I}_{(s)} $ & Identity matrix (of dimension $s$)  \\
			\hline
		\end{tabular}\label{summary}
	\end{table}

	\section{Preliminaries, Problem Statement and Solution Procedure} \label{chap5:problemstatementandprocedure}
	In this section, the dynamic model being investigated, the problem formulation, and the proposed solvability procedure are presented. Additionally, the mathematical notations involved in addressing the problem discussed in this paper are elaborated upon. For a summary of the symbols used, readers may refer to Table~\ref{summary}. 
	
	\subsection{Preliminaries}  \label{sec:prelim}
 In this section, we briefly review the key concepts underlying the fundamental duality relationships between free energy and relative entropy~\cite{dai1996connections}. Definition~\ref{def:FreeEnergy} presents the concept of \enquote{free energy}, involving measurable function and a probability measure on a measurable space. The free energy is defined as the expectation of the function under the probability measure, and it is obtained by taking the logarithm of the integral of the exponential of the function with respect to the probability measure. The function captures a state-dependent cost and includes a terminal cost as well as a sum of costs over a sequence of states. The parameter $\rho$ determines a measure of risk associated with the free energy. Further details on this definition can be found in~\cite[Section 3]{theodorou2015nonlinear}.
	
	Definition~\ref{def:Entropy} introduces the notion of \enquote{relative entropy} or \enquote{Kullback-Leibler divergence}. It measures the difference between two probability measures, denoted as $\mathbb{P}$ and $\mathbb{Q}$. The relative entropy of $\mathbb{Q}$ with respect to $\mathbb{P}$ is defined as an integral involving the logarithm of the ratio of the densities of $\mathbb{Q}$ and $\mathbb{P}$. Further details and formal expressions can be found in~\cite{dai1996connections}
	
	\begin{definition}  \label{def:FreeEnergy}
		Let the pair $(\Omega,\mathcal{F})$ denote a measurable space\footnote{The definition of measurable space can be found~\cite[Page 160]{Billingsley1995probability}}, where $\Omega$ and $\mathcal{F}$ corresponds to symbols mentioned in~\ref{summary}, and let $\pmb{{P}} (\Omega)$ define a probability measure\footnote{Please refer to~\cite[Page 160]{Billingsley1995probability}} on $\sigma-$algebra $\mathcal{F}$. Consider $\mathbb{P} \in \pmb{{P}}( \Omega )$ and the function $\mathcal{J}(\cdot): \mathcal{F}  \rightarrow \mathbb{R}$ be a measurable function. Then the expectation term:
		\begin{align}
			\mathbb{E} \Big( \mathcal{J} (\mbs_k,k) \Big) = \log \int  e^{(\rho \mathcal{J} (\mbs_k,k))} d\mathbb{P}
		\end{align}
		is called \textbf{free energy} of $\mathcal{J}$ with respect to $\mathbb{P}$, where $\mathcal{J}$ can be defined as, 
		\begin{align}
			\mathcal{J}(\cdot) \triangleq \mathcal{J} (\mbs_k,k) = g_{T} (\mbs_T)  +   \sum_{j=k}^{T-1} g_j(\mbs_j,\mbw_j) 
		\end{align}
		a state dependent cost as described in~\cite[Section 3]{theodorou2015nonlinear} and $\mathbf{w}_k$ is a r.v. which corresponds to noise. The term $\rho$ is a parameter which determines some measure of risk\textemdash please see~\cite{theodorou2015nonlinear,theodorouinformation} for details.
	\end{definition}
	\begin{definition}  \label{def:Entropy}
		Let $\mathbb{P}\in \pmb{{P}} ({ \Omega	 })$, the \textbf{relative entropy} of $\mathbb{P}$ with respect to $\mathbb{Q}$ be defined as:
		\begin{equation}
		D_{KL} (\mathbb{Q} || \mathbb{P}  )=\begin{cases}
				\int \log \frac{d\mathbb{Q}}{d\mathbb{P}} \mathbb{Q} , & \text{if $ \mathbb{Q} <<\mathbb{P}$ and $ \log \frac{d\mathbb{Q}}{d\mathbb{P}} d\mathbb{Q}  \in L^1$ }.\\
				+\infty, & \text{otherwise}.
			\end{cases}
		\end{equation}
		The term $L^1$ mentioned above describes functions on $[0,\infty)$ integrable in the Lebesgue sense, and where \enquote{$<<$} denotes the absolute continuity of $\mathbb{Q}$ with respect to $\mathbb{P}$. If one says that $\mathbb{Q}$ is absolutely continuous with respect to $\mathbb{P}$, one can write $\mathbb{Q}<<\mathbb{P}$  if $\mathbb{P}(H)=0 \implies \mathbb{Q}(H)=0, \forall H \in \mathcal{F}$, where $\mathcal{F}$  denotes a $\sigma-$algebra.
	\end{definition}
	After having described few definitions, the next sub-section focuses on mathematical modelling of the dynamical systems, controller parameter, the objective function and a brief step-by-step procedure to attain the objective.

\subsection{Mathematical notation and modeling}

The paper takes into account a stochastic dynamics that does not have a known model
from first principles,  in the presence of uncertainties such as parameter variation, external disturbance, 
sensor noise, etc.
The completed system is considered to be a global model, $O$, that is composed of multiple local models 
$o^l,\;  l = \{1, 2, \cdots\}$, and each of which follows an MDP, 
called a local model. 
We are   interested in a finite-horizon optimal control for a particular initial state, rather than for all possible initial states. {The procedure to handle numerous initial states is laid out in \cite{mallick2022reinforcement} which utilizes the policy structure of this paper to generalize to numerous initial conditions.
}

The {partially observable MDP} encompasses a {\it latent state} $\mathbf{s}_{k} \in \mathbb{R}^{n_s}$ 
and a {\it control action} $\mathbf{a}_k \in \mathbb{R}^{n_a}$, at  time instant $k = 1, 2, \cdots$, and the 
local state transition dynamic model is represented by a conditional probability density function (p.d.f.) expressed as: 
\begin{align} \label{modelpdf}
	p (\mathbf{s}_{k+1} | \mathbf{s}_k, \mathbf{a}_k).
\end{align}
In particular, for $k=1$,  $\mathbf{s}_1 \in \mathbb{R}^{n_s}$ is called the {\it initial state}, obeying a specified distribution. Variables $n_s$ and $n_a$ are integers which are the dimensions of state
and action space.
We specifically consider a finite-horizon MDP in this paper  for $k = 1, 2, \cdots, T$, called an 
{\it episode}, with the time instant $T$ being the end of episode. It is  worth mentioning that the 
p.d.f. in \eqref{modelpdf} varies with time $k$ and 
the time-varying nature is capable of characterizing more complicated dynamical behaviors but also brings more challenges 
in control design. It will be elaborated in Section~\ref{sec:DPSOC}. 

The entity $Y_k(\mathbf{s}_k,\mathbf{a}_k) \in \mathbb{R}^+$ denotes the instantaneous real valued {\it cost} for executing action $\mathbf{a}_k$ at state $\mathbf{s}_k$. 
It has a more specific expression as follows,
\begin{align} \label{quad_reward}
	Y_k(\mathbf{s}_k,\mathbf{a}_k) = (\mathbf{s}_k-\mathbf{s}^*)^\top \mathbf{Q_s} (\mathbf{s}_k-\mathbf{s}^*) + (\mathbf{a}_k-\mathbf{a}^*)^\top \mathbf{Q_a} (\mathbf{a}_k-\mathbf{a}^*),
\end{align} where $\mathbf{s}^*$ and $\mathbf{a}^*$ are the target state and 
control action, respectively, and  $\mathbf{Q_s} > 0$ and $\mathbf{Q_a}> 0$ are some specified matrices. 
As $\mathbf{s}_k$ and $\mathbf{a}_k$ are random variables,  
$ Y_k(\mathbf{s}_k,\mathbf{a}_k)$  (with $Y_k$ a continuous and deterministic function) is also a random variable, shorted 
as $Y_k$. We develop another {variable, i.e., $y_k=F(Y_k) \in \mathbb{R}^+$ (known as \textit{observed cost}) which is the exponential transformation of the immediate}  cost $Y_k(\mathbf{s}_k,\mathbf{a}_k)$ following a p.d.f. 
$ p (y_k | \mathbf{s}_k, \mathbf{a}_k)$, which  will be elaborated later.

Overall, the MPD consists of the transition dynamics $p(\mathbf{s}_{k+1}|\mathbf{s}_k,\mathbf{a}_k)$ and the cost observation p.d.f. $p(y_k|\mathbf{s}_k,\mathbf{a}_k)$ in an augmented form, i.e.,{
	\begin{align} \label{ltv_eq}
		{ p \Big(  \begin{bmatrix}
				\mathbf{s}_{k+1}       \\
				y_k      
			\end{bmatrix} | \mathbf{s}_k, \mathbf{a}_k \Big) }   = \mathcal{N} \Big( { \mathbf{A}^o_k }
		{ \begin{bmatrix}
				\mathbf{s}_k       \\
				\mathbf{a}_k       
		\end{bmatrix} }  ,{ \pmb{\Sigma}^o_k  \Big)},
	\end{align} 
	which is referred to as the {\it dynamic model} in the subsequent parts of the paper.}
{The symbol $\mathcal{N} (\cdot,\cdot)$ is used to denote a Gaussian distribution with the two parameters 
	of mean and variance.}
A time-varying linear Gaussian p.d.f. is used in  \eqref{ltv_eq} as 
an approximation of a real model which is in general nonlinear,  where 
the matrices $ \mathbf{A}^o_k$ and $\pmb{\Sigma}^o_k$ is  to be determined in Section~\ref{learning_section}
using the dynamic model fitting technique. 
\begin{figure}[t]
	\centering
	\includegraphics[scale=0.335]{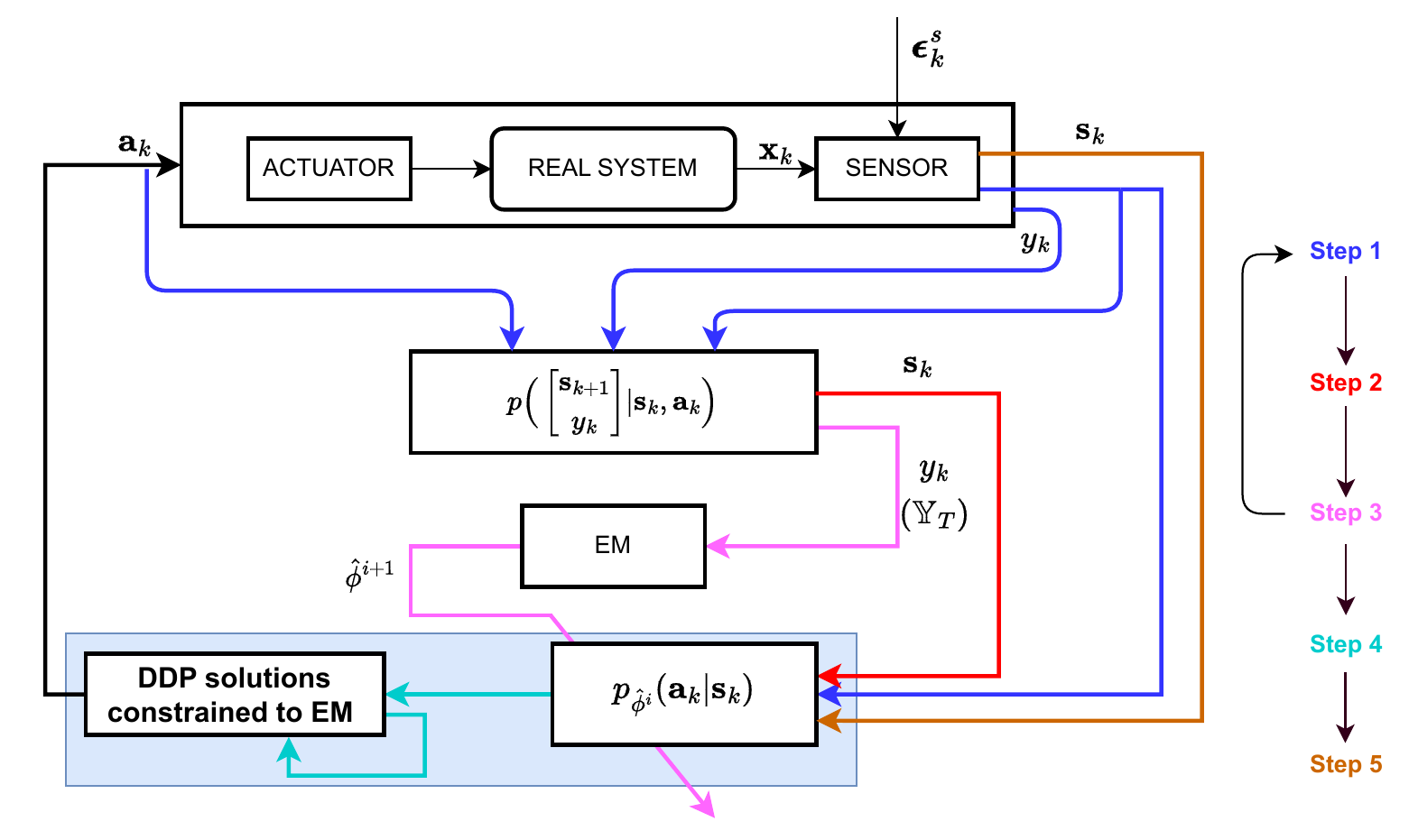}
	\caption{Schematic diagram of the dp-based design procedure in five steps.}
	\label{fig:overall}
\end{figure}
\subsection{Controller parameter space} 
This subsection presents the definition of parameter space of a controller that is utilized in the paper. The control action is sampled from a  linear Gaussian p.d.f. that describes the policy as shown below,
\begin{align}  
	p_{{\phi_k}}({\mathbf{a}}_k|\mathbf{s}_k) = \mathcal{N} ( \mbF_k \mathbf{s}_k + \mbe_k ,{\mbSigma_k} )  ,  \label{controlChap1}
\end{align}
for some matrices $\mbF_k, \mbSigma_k$ and a vector $\mbe_k$, representing state feedback control. {\bluecolor }
The matrix $\mbSigma_k$ is symmetric positive definite, and
$\mbSigma_k^{\frac{1}{2}}$ is the square root of ${\mbSigma_k}$ satisfying 
${\mbSigma_k}  =(\mbSigma_k^{\frac{1}{2}}) ^\top \mbSigma_k^{\frac{1}{2}} $. 
Let $\mbf_k =  \text{vec} (\mbF_k)$ and
$\mbsigma_k =  \text{vec} (\mbSigma_k^{\frac{1}{2}} )$.
Then,  the vector 
\begin{align*} 
	{\phi}_k =\col (\mbf_k, \mbe_k, \mbsigma_k),
\end{align*} is called the  controller parameter vector. 
Over the episode under consideration, the controller parameters are lumped as follows, 
\begin{align}
	\phi  =\col (\phi_1, \phi_2, \cdots, \phi_T) \in \Phi,
\end{align}
{where ${{\Phi}}$ is a non-empty convex compact subset of $\mathbb{R}^{({n_a n_sT+n_aT} + {n_a n_aT} ) }$.
	The time-varying feature of the controller is represented by the variation of $\phi_k$ with $k$,
	which aims to account for the complexity of the dynamical system.
	A stochastic policy adopted in this paper lays down the exploration mechanism  in a reinforcement learning setting. 
}

{Before diving into our problem formulation this section provides a definition of the so-called  SOC problem (see e.g., \cite{stengel1994optimal}).} 
Specifically, 
for the stochastic dynamical model \eqref{ltv_eq},  the SOC problem is formulated as follows,  
\begin{align} \label{cumsum_obj}
	& \min_{\mathbf{a}_1, \mathbf{a}_{2}, . . ,\mathbf{a}_T} \mathbb{E} \sum_{k=1}^T Y_k(\mathbf{s}_k,\mathbf{a}_k),
\end{align}
where $\mathbf{s}_k$ and $\mathbf{a}_k$ are the
variables of the dynamical model p.d.f.  \eqref{ltv_eq}
and  the control action p.d.f. \eqref{controlChap1}.
{The expectation in the above equation is taken over next states which are a result of instantiations of the noise in the dynamical equation at a start state of $\mbs_1$.} To express the cost penalty to be explicitly dependent on 
$\phi_k$, we rewrite 
$Y_k(\mathbf{s}_k,\mathbf{a}_k)$ as $Y_k(\mathbf{s}_k,\phi_k)$ with slight abuse of notation. 
Also, we can rewrite \eqref{cumsum_obj} in terms of the control parameter $\phi$, i.e., 
\begin{align} \label{cumsum_obj1}
	&  \min_{\phi}  \mathbb{E}  V_\phi ( {\mathbb{S}_{T+1}})  \;
	\text{for} \;
	V_\phi ( {\mathbb{S}_{T+1}}) \triangleq   \sum_{k=1}^T Y_k(\mathbf{s}_k,\phi_k).
\end{align}
with ${\mathbb{S}_{T+1}}  =   \{\mathbf{s}_1,\mathbf{s}_2,\cdots, \mathbf{s}_{T+1}\}$.

{Please note that, exactly solving the optimization problem of Eq. \eqref{cumsum_obj1} will result in an optimal (global) control law $p_{\phi^*}(a_k)$ which is independent of the starting initial state. Nevertheless, exactly finding this control law is highly unlikely in POMDPs because it is intractable to solve \cite{10.2307/3689975} \cite{platt2010belief}. An optimal control problem for POMDP cannot be solved efficiently without approximations. Therefore, to this extent numerous approximate control strategies have been proposed in the literature of optimal control; see e.g.,\cite{li2004iterative}, \cite{tassa2012synthesis}, \cite{levine2014motor}, \cite{ziebart2010modeling(a)}, \cite{kappen2012optimal}, \cite{todorov2007linearly}. Additionally, in a POMDP version of reinforcement learning setting it has been proved that the solution is NP-complete; see e.g., \cite{littman1994memoryless} and \cite{lusena2001nonapproximability}. As a result, this attribute calls for approximations to POMDP solutions which makes it analytically tractable.}

Despite employing a comparable formulation, the approach to addressing the problem differs from that mentioned in~\cite[Section 4]{mallick2022stochastic}. A five-step methodology for addressing the problem statement is delineated below with reference to a schematic diagram as per~\ref{fig:overall}. The efficacy of this five-step process has been verified through empirical testing on a simulation framework, which will be expounded upon in the subsequent sections.

{\it \textbf{Step 1:} Dynamic model fitting:}
From an initial state $\mathbf{s}_1$ sampled from a specified distribution, 
the real system is operated with the controller \eqref{controlChap1}
for a pre-selected controller parameter $\phi = \hat\phi^0=\col (\hat\phi^0_1,\hat \phi^0_2, \cdots, \hat\phi^0_T)$
and the control actions $\{\mathbf{a}_1,\mathbf{a}_2,\cdots, \mathbf{a}_{T}\}$ 
and the states
$\{\mathbf{s}_1,\mathbf{s}_2,\cdots, \mathbf{s}_{T+1}\}$ are recorded.
Calculate $Y_k(\mathbf{s}_k,\hat\phi^0_k)$ and hence $y_k = e^{ - Y_k }$. 
{Then,  the dynamic model \eqref{ltv_eq} is identified by fitting it to 
	to the collected tuples of data $\{\mathbf{s}_k,\;\mathbf{a}_k,\;\mathbf{s}_{k+1},\;y_k\}$, $k=1,\cdots, T$.}

{\it \textbf{Step 2:} Generation of cost observation:}
From an initial state $\mathbf{s}_1$ sampled from a specified distribution, 
the   cost observations ${\mathbb{Y}_{T}}= \{y_1, y_2, y_3,\cdots, y_{T}\}$ are generated using
the dynamic model \eqref{ltv_eq}  (obtained from Step 1)
and the controller  \eqref{controlChap1} with the controller parameter $\phi = \hat\phi^0$.  

{\it \textbf{Step 3:} Optimization of control action:}  Let  ${ p_{\phi} (\mathbb{S}_{T+1} | \mathbb{Y}_T)}$ be the probability of 
the {\it latent states}
${\mathbb{S}_{T+1}}  = \{\mathbf{s}_1,\mathbf{s}_2,\cdots, \mathbf{s}_{T+1}\}$
given the observation ${\mathbb{Y}_{T}}$ (obtained from Step 2), obeying the 
closed-loop system composed of the dynamic model \eqref{ltv_eq}  (obtained from Step 1)
and the controller  \eqref{controlChap1} with a 
controller parameter $\phi$. 
The optimization of a local control policy is formulated as follows
\begin{align} \label{valuefn_ltvChap5}
	\phi^{*} = \arg\min_{\phi}\mathbb{E}_{ p_{\hat\phi^0} (\mathbb{S}_{T+1} | \mathbb{Y}_T)} V_\phi ( {\mathbb{S}_{T+1}}).
\end{align}

A practical approach for solving the optimization problem \eqref{valuefn_ltvChap5} is to use the following strategy,
\begin{align} \label{valuefn_ltvi3}
	{ \hat \phi^{i*}}  = \arg\min_{\phi}\mathbb{E}_{ p_{\hat\phi^i} (\mathbb{S}_{T+1} | \mathbb{Y}_T)} V_\phi ( {\mathbb{S}_{T+1}}) ,
\end{align}
recursively with $\hat \phi^{i+1} = { \hat \phi^{i*}}$,  for $i=0,1, \cdots$.
It is expected that $ \hat \phi^{i} $ approaches $\phi^*$ as $i$ goes to $\infty$. 

{\it \textbf{Step 4:} Implementation of optimization inside the framework of iLQG:} 

After carrying out the maximization, I utilize the optimal parameters
to again collect samples, carry out the dynamics fitting of piecewise linear
models and then with the help of those dynamics carry out another degree of optimization according to eq~\eqref{eqn:constrainObjectiveChap5}. The term $p_{\hat{\phi}^{i+1}}$ in eq~\eqref{eqn:constrainObjectiveChap5} denotes the distribution of state marginals when excited with parameters $\phi^{i+1}$. The parameters $\hat{\phi}^{i+1}$ utilized in eq~\eqref{eqn:constrainObjectiveChap5} are generated as a result of~\eqref{valuefn_ltvi3}.
\begin{align}  \label{eqn:constrainObjectiveChap5}
	\phi = \text{min}_\phi \mathbb{E}_{p_\phi }  \big[ Q(\mbs_k,\mba_k)  \big]  \text{ s.t. } D_{KL} (p_\phi   || p_{\hat{\phi}^{i+1}} ) < \nu_k,
\end{align}
where $\nu_k$ is a parameter which gives the degree to which the probability distributions are close. 

{\it \textbf{Step 5:} Evaluation:}
Run the real system with the controller $\phi$ from~\eqref{eqn:constrainObjectiveChap5} and evaluate the performance.

Further in the sequel we describe a new way to approximate eq~\eqref{valuefn_ltvi3} which is eventually a major contribution of this chapter.

{ 
	\begin{remark} 
		In Steps 1 and 5, the real dynamical system is operated for data generation and performance evaluation, respectively. 
		The state $\mathbf{s}_k$ is physically measured, which represents the observed system state carrying measurement noise. 
		However, in  Steps 2 and 3, only theoretical computation is conducted without operating the real system, thus leveraging the latency nature of states.
		In this context,  $\mathbf{s}_k$ represents the explored state
		obeying a  joint probability  ${ p_{\phi} (\mathbb{S}_{T+1} | \mathbb{Y}_T)}$ and and contains noise. Thus, the actual or true system states remain unobserved, and the model described in Equation \eqref{ltv_eq} is treated as a partially observable Markov decision process (POMDP). In Step 4, a DDP-based optimal control design is employed while constraining the policy derived from Step 3.
	\end{remark}

	The subsequent sections are concerned with the optimization problem \eqref{valuefn_ltvi3},
	which is regarded as the approximation of the original optimization problem \eqref{cumsum_obj1}. 
	It is easy to see that \eqref{valuefn_ltvi3}
	is equivalent to   \begin{align}  \label{valuefn_ltv3}
		\min_{\mathbf{a}_1, \mathbf{a}_{2}, . . ,\mathbf{a}_T} \mathbb{E}_{p_{\hat\phi^i} (\mathbb{S}_{T+1}|\mathbb{Y}_T)}
		\sum_{k=1}^T Y_k(\mathbf{s}_k, \mathbf{a}_k)   .
	\end{align}
	It is expected that recursively solving the problem  \eqref{valuefn_ltvi3} or \eqref{valuefn_ltv3} 
	will approach a solution to \eqref{cumsum_obj1}.
	However, the global convergence is a great challenge and the effectiveness can only be 
	numerically/empirically verified.  The next section describes the new optimization approach.

{ 
	
	\section{Dynamics Fitting}  \label{learning_section}

In this section, the authors first provide a specific definition of $y_k$ and an expression of its density function. Additionally, they provide a detailed explanation of the procedure formulated in Step 1 for obtaining linear time-varying parameter estimates of the dynamic model represented by equation \eqref{ltv_eq}. This procedure combines the methodology employed in~\cite{levine2016end} with existing variational Bayesian (VB) strategies used for a finite mixture model, as outlined in~\cite{bishop2006pattern}.

A specific definition of the variable $y_k$ is presented as follows:
\begin{align} \label{pdf_of_y_k}
y_k(\mbs_k,\mba_k) = e^{- Y_k (\mbs_k,\mba_k)}.
\end{align}

The variable $y_k(\mbs_k,\mba_k)$, which takes values in the interval $(0,1]$, characterizes the likelihood of the state-action pair $(\mbs_k,\mba_k)$ being in proximity to the optimal trajectory. A lower cost value $Y_k (\mbs_k,\mba_k)$ corresponds to a higher value of $y_k(\mbs_k,\mba_k)$, indicating an increased likelihood of proximity to the optimal trajectory. The exponential transformation has been demonstrated to be successful in determining the probability of an optimal event occurring in optimal control, as evidenced by previous studies such as~\cite{cooper2013method}.

 The following is an assumption for the p.d.f. of $Y_k$.
 \begin{assumption} \label{assumption:pdf_Yk}
The p.d.f. of $Y_k$  follows an exponential distribution with parameter $\lambda$, i.e.,
\begin{align}\label{pYk}
 p(Y_k) = \lambda e^{-\lambda Y_k} \text{ where }  \lambda>1.
\end{align}
 \end{assumption}
The use of an exponential distribution is supported by relevant work. For instance,~\cite{dayan1997using} assumes rewards (negative costs) are sampled from an exponential distribution, while~\cite{norouzi2016reward} utilizes an exponentiated payoff distribution to establish a connection between maximum likelihood and an optimal control objective.

\begin{lemma} \label{lemma_prob2}
For $Y_k$ of  the p.d.f. \eqref{pYk},
the random variable $y_k$ in \eqref{pdf_of_y_k} has a p.d.f. of the form 
\begin{align} \label{pdf_y_k}
   p(y_k) = {\lambda} { y_k  }^{{\lambda} -1} . 
\end{align}
\end{lemma}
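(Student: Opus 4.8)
The plan is to treat this as a standard transformation-of-variables problem for a single continuous random variable. Since $y_k = e^{-Y_k}$ is a strictly decreasing, continuously differentiable bijection, the cleanest route is either the direct change-of-variables (Jacobian) formula or, equivalently, the cumulative distribution function (CDF) method; I would use the Jacobian formula and cross-check with the CDF method as a consistency test.

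First I would record the supports. The exponential law \eqref{pYk} is supported on $Y_k \in [0,\infty)$, and under $y_k = e^{-Y_k}$ this maps bijectively onto $y_k \in (0,1]$, consistent with the range stated in the text just after \eqref{pdf_of_y_k}. Inverting the transformation gives $Y_k = -\ln y_k$, which is well defined and smooth on $(0,1]$.

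Next I would apply the one-dimensional change-of-variables formula $p(y_k) = p(Y_k)\,\bigl| dY_k / dy_k \bigr|$ evaluated at $Y_k = -\ln y_k$. Here $dY_k/dy_k = -1/y_k$, so the Jacobian factor is $1/y_k$. Substituting $p(Y_k) = \lambda e^{-\lambda Y_k}$ from Assumption~\ref{assumption:pdf_Yk} and using $e^{-\lambda(-\ln y_k)} = e^{\lambda \ln y_k} = y_k^{\lambda}$ yields $p(y_k) = \lambda\, y_k^{\lambda} \cdot y_k^{-1} = \lambda\, y_k^{\lambda-1}$, which is exactly \eqref{pdf_y_k}. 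As a sanity check I would verify normalization, $\int_0^1 \lambda\, y_k^{\lambda-1}\,dy_k = 1$.

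The calculation itself is routine, so the nearest thing to an obstacle is bookkeeping rather than difficulty: correctly identifying the image interval $(0,1]$, using the absolute value of the (negative) derivative for the decreasing map, and confirming the result is a genuine density. It is worth noting that normalization already holds for any $\lambda>0$; the stronger hypothesis $\lambda>1$ is not needed for this lemma per se (it merely guarantees that $\lambda\, y_k^{\lambda-1}$ is bounded on $(0,1]$ instead of singular at $y_k=0$), and I would flag that this hypothesis is presumably reserved for later developments.
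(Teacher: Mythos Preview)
Your proposal is correct; this is the standard one-dimensional change-of-variables computation, and the paper itself does not give an independent proof but simply refers the reader to \cite{mallick2022stochastic}, where the same routine argument is carried out. Your additional remarks on the support, the normalization check, and the role of the hypothesis $\lambda>1$ are all accurate and appropriate.
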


\begin{proof}
Please see~\cite{mallick2022stochastic}.
\end{proof}

	One can run one iteration of experiment and collect tuples of measured 
	$\{\mathbf{x}_k,\;\mathbf{u}_k,\;\mathbf{x}_{k+1},\;y_k\}$ for one episode $k=1,\cdots, T$. Practically, 
	one can repeat the experiments for $M$ times from the same initial conditions with a random seed value to gather sufficiently many samples, each of which is denoted by 
	\begin{align*}
		\mathcal{D}_k^m=\{\mathbf{x}_k,\;\mathbf{u}_k,\;\mathbf{x}_{k+1},\;y_k\}_{\text{$m$-th experiment}},
	\end{align*}
	for $m=1,\cdots,M$.  Let 
	$\mathcal{D}_k = \{\mathcal{D}_k^1,\cdots, \mathcal{D}_k^M \}$ 
	and $\mathcal{D} = \{\mathcal{D}_1,\cdots, \mathcal{D}_T \}$.
	Then, one can fit a Gaussian mixture model (GMM) to the data set $\mathcal{D}$. 
	In particular, the VB inference method is used to determine the parameters of the GMM, i.e., the means, covariances and weights of the Gaussians. 
	
	The GMM produced as a result of VB inference acts as a considerable global prior and it helps in bringing in information to construct a solitary normal-inverse Wishart (NIW) distribution. This NIW acts as a conjugate prior for a Gaussian distribution 
	\begin{align} \label{NIWp}
		p (\mathbf{x}_k,\mathbf{u}_k,\mathbf{x}_{k+1},y_k) = \mathcal{N} (\mbmu_k, \mbLambda_k).
	\end{align}
	Next, it will be elaborated that the NIW prior plays an essential role in attaining the parameters, i.e., the mean $\mbmu_k$ and the covariance ${\mbLambda}_k$.

	The procedure of fitting GMM to $\mathcal{D}$ involves constructing NIW distributions to act as prior for means and covariances of Gaussian distributions involved in mixture model. In addition to it, Dirichlet distributions are defined to be the prior on the weights of the Gaussian distributions which would explain the mixing proportions of Gaussians. {Then, iterative VB strategy is adopted} to increase the likelihood of joint variational distribution (see e.g., \cite{bishop2006pattern}-Section 10.2) for attaining the parameters of GMM. 
	The attained parameters of GMM are utilized to further obtain the parameters of the solitary NIW prior which acts a representative of the global GMM prior. The purpose of NIW prior is to garner the information contained in the global GMM prior. The mean of the solitary NIW prior is $\mbmu^0_k = \sum_{c=1}^{C} ( w^c_k  \mbmu^c_k )$ where $w^c_k$ and $\mbmu^c_k$ are the weight and mean of the $c$-th  Gaussian in the GMM and ${C}$ is the total number of initialized Gaussian clusters.  The precision matrix $\mbLambda^0_k$ of the solo NIW prior is evaluated by calculating the deviation of each cluster from $\mbmu^0_k$. There are two more essential parameters of the solo NIW conjugate prior namely $n_0$ and $k_0$, which are set for the total $M$ samples. 
	Define the empirical mean, $\mbmu^{emp}_{k}\in \mathbb{R}^{n_u+2n_x+1}$ and the covariance $\pmb{\Lambda}^{emp}_k \in \mathbb{R}^{{(n_u+2n_x+1)}\times {(n_u+2n_x+1)}}$ as follows, for the data set $\mathcal{D}_k$, 
	\begin{align}
		&  \mbmu^{emp}_{k}=\frac{1}{M}\sum_{m=1}^{M} {\mathcal{D}^m_k} \text{ } \& \text{ }
		{\mbLambda}^{emp}_k = \frac{1}{M} {\sum_{m=1}^M}   ( \mathcal{D}^m_k  -{{\mbmu_k^{emp}}}) ( \mathcal{D}^m_k  -{{\mbmu_k^{emp}}})^{\top}.  \label{emp}
	\end{align}
	Next, one can carry out Bayesian update that results in a-posteriori estimates of the mean and precision matrix
	for the solo Gaussian in \eqref{NIWp}, that is, 
	\begin{align} \label{Bayesian}
		\mbmu_k=\frac{{ k_0  \mbmu^0_k} + M \mbmu^{emp}_k  }{k_0+M}  ,\;
		{\mbLambda}_k =   \frac{{ (\mbLambda^0_k)^{-1} } + M\cdot \pmb{\Lambda}^{emp}_k + \kappa_k }{M+n_0} 
	\end{align}
	where $\kappa_k =   [k_0 M / (k_0 + M)] ({\mbmu^{emp}_k}-\mbmu^0_k) ({\mbmu^{emp}_k}-{\mbmu^0_k})^{\top}$.

	The Gaussian distribution  \eqref{NIWp} can be conditioned on states and action, i.e., $(\mathbf{x}_k,\mathbf{u}_k)$, using standard identities of multivariate Gaussians, which delivers
	the following parameters of \eqref{ltv_eq} i.e.,
	
	\begin{align*}
		\mathbf{A}^P_k = \begin{bmatrix}
			\mathbf{A}^d_k      & {\mathbf{B}^d_k} \\
			{\mathbf{A}^y_k}       & {\mathbf{B}^y_k} 
		\end{bmatrix} ,  \pmb{\Sigma}^P_k = {\begin{bmatrix}
				{\pmb{\Sigma}^d_k} & {{\pmb{\Sigma}^{yd}_k}} \\
				{\pmb{\Sigma}^{yd}_k}^\top      & { {\pmb{\Sigma}^y_k}}
		\end{bmatrix}  } .
	\end{align*}
	The dimensions of the matrices are  $\mathbf{A}^d_k \in \mathbb{R}^{n_s \times n_s}$, ${\mathbf{B}^d_k} \in \mathbb{R}^{n_s \times n_a}$, ${\pmb{\Sigma}^d_k} \in \mathbb{R}^{n_s \times n_s}$, ${{\mathbf{A}^y_k}} \in \mathbb{R}^{1 \times n_s}$, ${\mathbf{B}^y_k} \in \mathbb{R}^{1 \times n_a}$, ${ {\pmb{\Sigma}^y_k}} \in \mathbb{R}$, $ \mathbf{A}^P_k \in \mathbb{R}^{({n_s + 1}) \times (n_a +n_s) }$ and ${ {\pmb{\Sigma}^P_k} }  \in \mathbb{R}^{(n_s+1) \times (n_s+1)}$.
	{
		\begin{algorithm}[ht]
			\caption{DPSOC-EM algorithm}\label{alg:DPSOCEM}
			\begin{algorithmic}[1]
				{	
					
					\STATE  \textbf{Initialization:} Set $i=0$ and initialize $\hat{\phi}^0$ from one of the baselines (iLQG).
					
					\STATE   Run the real system under the controller \eqref{controlChap1}  with the controller parameter vector $\hat{\phi}^i$ to collect the data set  $\mathcal{ D}$; 
					\STATE Identify the dynamic model \eqref{ltv_eq} by fitting it to $\mathcal{ D}$ via VB inference.

					\STATE  Generate the cost observations ${\mathbb{Y}_{T}}$  using
					the dynamic model \eqref{ltv_eq}   
					and the controller  \eqref{controlChap1} with the controller parameter vector $\hat\phi^i$.

					\STATE  Perform the Kalman filter and R.T.S. smoother recursions to evaluate \eqref{eqn:KalmanValues}
					and hence  $ {\bar{\mathcal{L}} _k(\phi_k,\hat{\phi^i})}$, $k=1,\cdots, T$.
					
					\STATE Find $\hat{ \phi}_j^{i*} $, $j=T,\cdots 1$ according to \eqref{proposition:lastEq}.
					
					\STATE Update  $\hat{ \phi}^{i+1} = \hat{ \phi}^{i*} $. 
					
					\IF{$i+1 <$ number of recursions}
					\STATE   Let $i=i+1$; go to 2.
					\ELSE 
					\STATE Generate DDP solutions $\pi_{\mathcal{G}_1},\cdots, \pi_{\mathcal{G}_n}$
					\STATE Build initial sample set $ {S}$ from $\pi_{\mathcal{G}_1}, \cdots,\pi_{\mathcal{G}_n}, \pi_{\hat{\phi}^{i+1}}$  
					
					\FOR{number of iterations of iLQG $g \in \{0,\cdots,G\}$}   
					\STATE  Choose current sample set $S_g  \subset S$   
					\STATE Optimize $  \phi = \text{min}_\phi \mathbb{E}_{p_\phi }  \big[ Q(\mbs_k,\mba_k)  \big]  \text{ s.t. } D_{KL} (p_\phi   || p_{\hat{\phi}^{i+1}} ) \leq \nu_k  $ as per~\cite[Section~3.1]{levine2014learning} 
					\STATE Append samples from $\pi_{\phi}$ to $S_g$.
					\ENDFOR   
					\ENDIF
					\STATE Return the best policy $\pi_{\phi}$.
				}
			\end{algorithmic}
		\end{algorithm}
	}
	\section{DP Optimization of Mixture Likelihood} \label{sec:DPSOC}
	This section describes the details of the expression of the objective function which will be optimized, two theoretical results which are obtained by taking a similar approach to that of~\cite{mallick2022stochastic}.
	
	\subsection{Composite expression of mixture likelihood}

{This section introduces key concepts related to the widely recognized EM algorithm. The EM algorithm aims to estimate the parameter vector $\phi$ (which is customizable by the user) that maximizes the likelihood of an observed data set $\mathbb{Y}T$. Specifically, the likelihood of observing the data, denoted as $p_\phi(\mathbb{Y}_T)$, is iteratively maximized during the algorithm's execution as per,
\begin{equation} \label{theta_ml}
			\hat{\phi}_{EM} \in \{  \phi \in \Phi : p_{\phi} (\mathbb{Y}_{T}) \geq p_{\hat{\phi^i}} (\mathbb{Y}_{T}) \},
		\end{equation}
		where   
		$\hat{\phi^i}$ is a (known) considerably good parameter estimate with which the EM approach is initialized (at the
		iteration labeled $i$).

The algorithm ensures that the estimate $\hat{\phi}{EM}$ satisfies the condition expressed by Equation \ref{theta_ml}, where $\hat{\phi^i}$ represents a known and reasonably accurate initial parameter estimate at iteration $i$. This condition guarantees that the likelihood of the data $p{\phi} (\mathbb{Y}_{T})$ is not lower than the likelihood associated with $\hat{\phi^i}$.

The EM algorithm involves the computation of the \textit{observation log-likelihood} denoted as $L_\phi ({\mathbb{Y}_T})$, as defined in eq.~\ref{Lphi}. Additionally, it approximates the logarithm of the \textit{mixture likelihood} of latent variables ($\mathbb{S}_{T+1}$) and observations ($\mathbb{Y}_{T}$) using a surrogate function denoted as $\mathcal{L}(\phi,\hat{\phi}^{i})$, defined in Equation \ref{Ltheta_kheta}.

It is assumed that both $L_\phi ({\mathbb{Y}_T})$ and $\mathcal{L}(\phi,\hat{\phi}^{i})$ are differentiable with respect to $\phi\in\Phi$.} 
	{ 
		\begin{align}
			L_\phi ({\mathbb{Y}_T}) & {\triangleq \log p_\phi(  {\mathbb{Y}_{T}} )} \label{Lphi} 
		\end{align}
		\begin{align}
			\mathcal{L}(\phi,\hat{\phi}^{i})  &  \triangleq \mathbb{E}_{\hat{\phi}^i}   (  \log  p_\phi (\mathbb{S}_{T+1}, \mathbb{Y}_T) |\mathbb{Y}_T)  
			.  \label{Ltheta_kheta}
		\end{align}

{ 
Next, the primary objective is to maximize the mixture likelihood $\mathcal{L} (\phi,\hat{\phi}^i)$. The proposed optimization approach aims to find an improved policy parameter $\phi =\hat\phi^{i+1}$ for the next iteration that maximizes (or increases) $\mathcal{L} (\phi,\hat\phi^i)$ compared to $\phi =\hat\phi^{i}$.}
This will be done as per,
		\begin{align} \label{optiphi3}
			\hat{ \phi}^{i*}    = \arg\max_{\phi} \mathcal{L} (\phi, \hat\phi^i).
		\end{align}
		So, it is ideal to select $\hat\phi^{i+1} = \hat{ \phi}^{i*} $.
		The explicit expression of the mixture likelihood $\mathcal{L}(\phi, \hat{\phi^i})$  and a result presenting a connection between mixture likelihood and cumulative sum of cost-to-go metntioned in eq~\eqref{cumsum_obj1} is given in Lemma~\ref{lemma:LTV} and~\ref{thm:main_theorem} respectively. 	These results are adopted from~\cite{mallick2022stochastic} and are utilized to introduce an alternative approximate optimization framework that capitalizes on the underlying principles of the principle of optimality (refer to~\cite[Section 1.3]{bertsekas1995dynamic}). By employing this approach, optimal estimates of the control policy can be attained. A detailed description of the proposition, which relies on the principle of optimality, can be found in Proposition~\ref{propositionDPSOC}.
		\begin{lemma} \label{lemma:LTV}
The expression for the function $\mathcal{L}(\phi, \hat{\phi^i})$ considering the dynamic model \eqref{ltv_eq} and the controller \eqref{ltv_eq} is given by:   \begin{align} \label{calLbarcalL}
				\mathcal{L} (\phi,\hat{\phi^i})    =\log p(\mathbf{s}_1) +  \sum_{k=1}^T  \bar{\mathcal{L}}_k (\phi_k, \hat{\phi}^i),
			\end{align}
			for
			\begin{align} \label{ltv_surrogate}
				\log p(\mathbf{s}_1) =& -\frac{1}{2} \log |\mathbf{P}_1| + (\mathbf{s}_1 - \pmb{\mu}_1)^\top \mathbf{P}_1^{-1} (\mathbf{s}_1 - \pmb{\mu}_1) , \nonumber \\
				{\bar{\mathcal{L}} _k(\phi_k,\hat{\phi^i})} =&   -\frac{1}{2} \Tr \{ {{\pmb{\Sigma}}_k^o}^{-1}(  \mathbb{E}_{\hat\phi^i} ({\mbzeta_k \mbzeta_k ^\top | \mathbb{Y}_{T}) } -  \mathbb{E}_{\hat\phi^i} ({\mbzeta_k \mbz_k ^\top | \mathbb{Y}_{T}) } {\mathbf{A}^o_k}^{\top}  \nonumber\\
				   - {\mathbf{A}^o_k}  & \big[{ \mathbb{E}_{\hat\phi^i}  ({\mbzeta_k \mbz_k ^\top | \mathbb{Y}_{T}) }}\big]^{\top}  +  {\mathbf{A}^o_k}   \mathbb{E}_{\hat\phi^i} ({\mbz_k \mbz_k ^\top | \mathbb{Y}_{T}) }  {\mathbf{A}^o_k}^{\top}) \} -\frac{1}{2} \log  |{{{\pmb{\Sigma}}_k^o}}| ,
			\end{align}
			where
			{initial state $\mathbf{s}_1$ follows a Gaussian distribution with some known mean $\pmb{\mu}_1$ and covariance $\mathbf{P}_1$} and $\mbzeta_k = \col (\mbs_{k+1} ,\; y_k)$ and $\mbz_k = \col (\mathbf{s}_k, \mathbf{a}_k )$.
		\end{lemma}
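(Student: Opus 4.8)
The plan is to derive the claimed quadratic expression by combining the Markov factorization of the complete-data density with the linear--Gaussian structure of the augmented model \eqref{ltv_eq}, and then to convert the resulting quadratic forms into second moments via the trace identity. This is the standard expectation-step computation for a linear--Gaussian state-space model, specialized to the closed loop formed by \eqref{ltv_eq} and the controller \eqref{controlChap1}.

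First I would write out the complete-data joint density $p_\phi(\mathbb{S}_{T+1},\mathbb{Y}_T)$. Because the closed loop is a (time-varying) Markov chain in the augmented variable, the joint density factorizes over the horizon into the initial-state density $p(\mathbf{s}_1)$ times the per-step transition densities $p(\mbzeta_k \mid \mbz_k)$, with $\mbzeta_k = \col(\mathbf{s}_{k+1}, y_k)$ and $\mbz_k = \col(\mathbf{s}_k,\mathbf{a}_k)$ as defined in the statement. Taking logarithms turns the product into the sum $\log p(\mathbf{s}_1) + \sum_{k=1}^T \log p(\mbzeta_k \mid \mbz_k)$. Applying $\mathbb{E}_{\hat\phi^i}(\cdot \mid \mathbb{Y}_T)$ (the smoothing posterior frozen at the current iterate) and pushing it inside the finite sum by linearity already produces the additive structure \eqref{calLbarcalL}: the first term is $\log p(\mathbf{s}_1)$, whose Gaussian form follows from $\mathbf{s}_1 \sim \mathcal{N}(\pmb{\mu}_1,\mathbf{P}_1)$, and the remaining terms are the per-step contributions $\bar{\mathcal{L}}_k(\phi_k,\hat\phi^i)$.

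Next, for each $k$ I would substitute the Gaussian density from \eqref{ltv_eq}, so that, up to the additive $2\pi$ normalization constant, $\log p(\mbzeta_k \mid \mbz_k) = -\tfrac12 \log|\pmb{\Sigma}_k^o| - \tfrac12 (\mbzeta_k - \mathbf{A}_k^o \mbz_k)^\top (\pmb{\Sigma}_k^o)^{-1} (\mbzeta_k - \mathbf{A}_k^o \mbz_k)$. I would expand the quadratic form into its four monomials and apply $\mathbf{u}^\top \mathbf{M}\mathbf{v} = \Tr(\mathbf{M}\mathbf{v}\mathbf{u}^\top)$ to write each monomial as a trace against the common factor $(\pmb{\Sigma}_k^o)^{-1}$. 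Since $\mathbb{E}_{\hat\phi^i}(\cdot\mid\mathbb{Y}_T)$ commutes with the deterministic matrices $\mathbf{A}_k^o$, $(\pmb{\Sigma}_k^o)^{-1}$ and with the trace, the expectation falls only on the outer products, leaving the three second-moment matrices $\mathbb{E}_{\hat\phi^i}(\mbzeta_k\mbzeta_k^\top\mid\mathbb{Y}_T)$, $\mathbb{E}_{\hat\phi^i}(\mbzeta_k\mbz_k^\top\mid\mathbb{Y}_T)$ and $\mathbb{E}_{\hat\phi^i}(\mbz_k\mbz_k^\top\mid\mathbb{Y}_T)$ that appear in \eqref{ltv_surrogate}. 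Collecting these recovers $\bar{\mathcal{L}}_k(\phi_k,\hat\phi^i)$, and summing over $k$ closes the argument.

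The bookkeeping here is routine, so the subtle points are where I would spend care. The cross term must symmetrize into the two transpose-paired summands shown in \eqref{ltv_surrogate}, which I would justify using $[\mathbb{E}_{\hat\phi^i}(\mbzeta_k\mbz_k^\top\mid\mathbb{Y}_T)]^\top = \mathbb{E}_{\hat\phi^i}(\mbz_k\mbzeta_k^\top\mid\mathbb{Y}_T)$. I would also make explicit that the dependence of $\bar{\mathcal{L}}_k$ on the optimization variable $\phi_k$ is carried entirely by the closed-loop parameters $\mathbf{A}_k^o$ and $\pmb{\Sigma}_k^o$ (which encode the controller \eqref{controlChap1}), whereas the posterior second moments are held fixed at $\hat\phi^i$; conflating these is the one place the derivation can go wrong. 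Finally, I would note that discarding the constant normalization term is harmless, since it does not affect the maximizer in \eqref{optiphi3}, which is the quantity of interest downstream.
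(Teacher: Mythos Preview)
The paper does not prove this lemma in-text; it states that the result is adopted from \cite{mallick2022stochastic}. Your derivation---Markov factorization of the complete-data density, substitution of the Gaussian transition \eqref{ltv_eq}, expansion of the quadratic form via the trace identity, and pushing the smoothing expectation onto the outer products---is the standard E-step computation and is the correct skeleton.

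Where your proposal slips is the final paragraph, precisely at the point you flag as ``the one place the derivation can go wrong''. You assert that the dependence on the optimization variable $\phi_k$ is carried by $\mathbf{A}_k^o$ and $\pmb{\Sigma}_k^o$, which you say ``encode the controller'', while the posterior second moments are held fixed at $\hat\phi^i$. In the paper's setup this is reversed. The matrices $\mathbf{A}_k^o,\pmb{\Sigma}_k^o$ are the \emph{fitted plant} parameters produced in Step~1 (Section~\ref{learning_section}) and are fixed throughout the optimization; they do not encode the controller. The $\phi_k$-dependence enters through $\mathbf{a}_k$ inside $\mbz_k=\col(\mathbf{s}_k,\mathbf{a}_k)$, because the controller \eqref{controlChap1} makes $\mathbf{a}_k$ an explicit function of $(\mathbf{F}_k,\mathbf{e}_k,\pmb{\Sigma}_k)$ and the latent state. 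Consequently the second moments $\mathbb{E}_{\hat\phi^i}(\mbz_k\mbz_k^\top\mid\mathbb{Y}_T)$ and $\mathbb{E}_{\hat\phi^i}(\mbzeta_k\mbz_k^\top\mid\mathbb{Y}_T)$ \emph{do} depend on $\phi_k$; only the posterior over the state $\mathbf{s}_k$ is frozen at $\hat\phi^i$. The paper relies on exactly this reading in the proof of Theorem~\ref{cont3:uniqueness1}, noting that ``$\mathbf{a}_k$ explicitly depends on $\phi_k$'' and then differentiating the $\mathbf{a}_k$-bearing second moments with respect to $\phi_j$. Had you instead marginalized $\mathbf{a}_k$ out and absorbed the controller into closed-loop system matrices, $\mbz_k$ would collapse to $\mathbf{s}_k$ alone and the resulting expression would no longer match \eqref{ltv_surrogate}.
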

		
		\begin{lemma}[Theorem IV.1 of~\cite{mallick2022stochastic}] \label{thm:main_theorem}
			Suppose the parameter $\hat{\phi}^{i+1}$ is produced such that
			\begin{align}  \label{calLi}
				\mathcal{L}( \hat{\phi}^{i+1}, \hat{\phi^i})  \geq    \mathcal{L} ( \hat{\phi^i} , \hat{\phi^i}).  \end{align}
			Then,  the cumulative sum of expected costs defined in \eqref{cumsum_obj} satisfies
			\begin{align} \label{EVi}
				\mathbb{E}_{ {\hat\phi^i}} ( V_{\hat\phi^{i+1}} ( {\mathbb{S}_{T+1}}) | \mathbb{Y}_T )  \leq 
				\mathbb{E}_{ {\hat\phi^i}} ( V_{\hat\phi^i} ( {\mathbb{S}_{T+1}}) | \mathbb{Y}_T ).
			\end{align}
		\end{lemma}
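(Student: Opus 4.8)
The plan is to reproduce the classical Expectation--Maximization monotonicity mechanism, but to read it through the exponential cost transform so that the quantity being monotonically improved is the conditional expected cost in \eqref{EVi} rather than a raw likelihood. Two ingredients are needed: a ``control-as-inference'' identity that makes $V_\phi$ appear explicitly inside the joint log-likelihood via $y_k = e^{-Y_k}$, and a Gibbs/Jensen inequality that lets the assumed surrogate increase \eqref{calLi} propagate to the objective of interest.

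First I would set up the standard decomposition. Factoring $\log p_\phi(\mathbb{S}_{T+1},\mathbb{Y}_T) = \log p_\phi(\mathbb{S}_{T+1}\mid\mathbb{Y}_T) + L_\phi(\mathbb{Y}_T)$ and taking $\mathbb{E}_{\hat\phi^i}(\cdot\mid\mathbb{Y}_T)$ gives $\mathcal{L}(\phi,\hat\phi^i) = L_\phi(\mathbb{Y}_T) + H(\phi,\hat\phi^i)$ with $H(\phi,\hat\phi^i) := \mathbb{E}_{\hat\phi^i}(\log p_\phi(\mathbb{S}_{T+1}\mid\mathbb{Y}_T)\mid\mathbb{Y}_T)$. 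Gibbs' inequality yields $H(\phi,\hat\phi^i)\le H(\hat\phi^i,\hat\phi^i)$, the gap being a nonnegative \kl divergence, so the hypothesis $\mathcal{L}(\hat\phi^{i+1},\hat\phi^i)\ge\mathcal{L}(\hat\phi^i,\hat\phi^i)$ forces the monotone step
\begin{align*}
L_{\hat\phi^{i+1}}(\mathbb{Y}_T) - L_{\hat\phi^i}(\mathbb{Y}_T) \;=\; \big[\mathcal{L}(\hat\phi^{i+1},\hat\phi^i)-\mathcal{L}(\hat\phi^i,\hat\phi^i)\big] + \big[H(\hat\phi^i,\hat\phi^i)-H(\hat\phi^{i+1},\hat\phi^i)\big] \;\ge\; 0.
\end{align*}

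Second I would convert this likelihood statement into the cost statement using the exponential transform and the explicit form of Lemma~\ref{lemma:LTV}. Because $y_k = e^{-Y_k(\mathbf{s}_k,\mathbf{a}_k)}$ and $Y_k$ carries the densities of Assumption~\ref{assumption:pdf_Yk} and Lemma~\ref{lemma_prob2}, the observation block of each Gaussian factor $\bar{\mathcal{L}}_k(\phi_k,\hat\phi^i)$ contributes, up to $\phi$-independent constants, precisely $-\,\mathbb{E}_{\hat\phi^i}(Y_k(\mathbf{s}_k,\phi_k)\mid\mathbb{Y}_T)$; summing over $k$ extracts $-\,\mathbb{E}_{\hat\phi^i}(V_\phi(\mathbb{S}_{T+1})\mid\mathbb{Y}_T)$. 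Matching this against the monotone step above turns ``$L$ does not decrease'' into ``the conditional expected cost does not increase,'' which is \eqref{EVi}.

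The hard part is isolating that cost term cleanly: the closed-loop factor $p_\phi(\mathbf{s}_{k+1},y_k\mid\mathbf{s}_k)$ depends on $\phi$ through both the transition mean $\mathbf{F}_k\mathbf{s}_k+\mathbf{e}_k$ and the policy covariance, so $\mathcal{L}$ is not simply the negative expected cost plus a constant, and the residual transition/state-prior contribution is not obviously signed. I expect to control it by exploiting that the identified dynamics $(\mathbf{A}^P_k,\pmb{\Sigma}^P_k)$ are frozen during the E-step and that the $y_k$-row of the combined model is what carries the cost, so the residual either cancels against the Gibbs gap or has the favourable sign; the requirement $\lambda>1$ in Assumption~\ref{assumption:pdf_Yk} is what secures the integrability of $\log y_k$ and the correct monotone direction. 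Carrying out this bookkeeping on the explicit moments of Lemma~\ref{lemma:LTV} is the technical crux, and it is exactly the computation established as Theorem~IV.1 of~\cite{mallick2022stochastic}, which may be invoked here.
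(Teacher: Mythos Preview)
The paper does not prove this lemma at all; it is stated with the bracket ``Theorem~IV.1 of~\cite{mallick2022stochastic}'' and the surrounding text says explicitly that the result is adopted from that reference. The only related argument the paper supplies is the appendix Lemma~\ref{lemma:EM_proof}, which is the standard EM inequality $L_{\hat\phi^{i+1}}(\mathbb{Y}_T)-L_{\hat\phi^i}(\mathbb{Y}_T)\ge\mathcal{L}(\hat\phi^{i+1},\hat\phi^i)-\mathcal{L}(\hat\phi^i,\hat\phi^i)$, i.e.\ exactly your step~1. So in terms of what is actually written here, your proposal already goes further than the paper.

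That said, your two-step logic has a mismatch. Step~1 establishes monotonicity of the \emph{marginal} observation log-likelihood $L_\phi(\mathbb{Y}_T)$, but the conclusion \eqref{EVi} is about the \emph{conditional} expected cost $\mathbb{E}_{\hat\phi^i}(V_\phi\mid\mathbb{Y}_T)$, and the latter is what you extract from the surrogate $\mathcal{L}(\phi,\hat\phi^i)$ in step~2, not from $L_\phi$. Your sentence ``turns `$L$ does not decrease' into `the conditional expected cost does not increase'\,'' conflates the two objects: there is no mechanism by which the $L$-monotonicity of step~1 feeds into \eqref{EVi}. The implication you actually need is the direct one: if $\mathcal{L}(\phi,\hat\phi^i)=-\,c\,\mathbb{E}_{\hat\phi^i}(V_\phi\mid\mathbb{Y}_T)+R(\phi)$ with $c>0$, then the hypothesis \eqref{calLi} gives \eqref{EVi} provided the residual $R$ does not help, i.e.\ $R(\hat\phi^{i+1})\le R(\hat\phi^i)$ or $R$ is $\phi$-independent. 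Step~1 is therefore a correct but irrelevant detour.

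The genuine content is entirely in what you call ``the hard part'': isolating $-\mathbb{E}_{\hat\phi^i}(V_\phi\mid\mathbb{Y}_T)$ from the Gaussian factors $\bar{\mathcal{L}}_k$ and controlling the residual transition/policy contribution. You correctly identify that this bookkeeping is the crux and then invoke Theorem~IV.1 of~\cite{mallick2022stochastic} for it --- which is precisely, and only, what the paper itself does. So once the detour through $L$ is dropped, your proposal and the paper's treatment coincide: both defer the substantive computation to the cited reference.
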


	}
	
}
\begin{proposition} \label{propositionDPSOC}
	For an initial state $s_1$, the optimal likelihood function $\mathcal{L}^*(s_1) \triangleq  \mathcal{L}( \hat{\phi}^{*}, \hat{\phi^i})$ of the above-mentioned problem (i.e., eq~\eqref{optiphi3}) is given by the last step of the following algorithm, which proceeds backwards in time from period $T$ to period 1. Therefore, the optimal parameters are obtained as a consequence of sequentially maximizing the likelihood backwards as presented in the following equations,
	
	\begin{align}  
		\hat{ \phi}_T^{i*}   
		& =  \arg\max_{\phi_T  }   \bar{\mathcal{L}}_T (\phi_T, \hat{\phi}^i) \label{eq1Proposition} \\
		\hat{ \phi}_{T-1}^{i*}   
		& =  \arg\max_{\phi_{T-1}  }    \bar{\mathcal{L}}_{T-1} (\phi_{T-1}, \hat{\phi}^i) +  \bar{\mathcal{L}}_{T} (\hat{\phi}_{T}^*, \hat{\phi}^i)  \label{eq2Proposition} \\
		\vdots \label{series} \\
		\hat{ \phi}_j^{i*}   
		& =  \arg\max_{\phi_{j}   } \bar{\mathcal{L}}_{j} (\phi_{j} , \hat{\phi}^i )   + \sum_{k=j+1 }^{T} \bar{\mathcal{L}}_k (\hat{\phi}^*_{k},  \hat{\phi}^i )  , \qquad j=T,\cdots, 2   \label{proposition:lastEq}
	\end{align}
	where 
	$  \bar{\mathcal{L}}_k (\phi_k, \hat{\phi}^i)$ is defined as in~\eqref{ltv_surrogate}   
	and the expectation inside each of the terms is taken with respect the to the measurement states which are a result of instantiations
	of the noise in the dynamical equation. In addition to that, if $a_k^* $ maximizes the term on the right in eq~\eqref{proposition:lastEq} for each $\mbs_k$ and $k$, the policy $\pi^*=\{\mba_1^*,\cdots,\mba^*_{T}\}$ is optimal.
\end{proposition}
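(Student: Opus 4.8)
The plan is to reduce the claim to the additive separability of the mixture likelihood already established in Lemma~\ref{lemma:LTV} and then invoke the principle of optimality. The starting point is the decomposition \eqref{calLbarcalL}, $\mathcal{L}(\phi,\hat{\phi}^i)=\log p(\mathbf{s}_1)+\sum_{k=1}^T \bar{\mathcal{L}}_k(\phi_k,\hat{\phi}^i)$. Two structural observations drive the whole argument: the initial-state term $\log p(\mathbf{s}_1)$ carries no dependence on $\phi$, and---reading off \eqref{ltv_surrogate}---the $k$-th summand $\bar{\mathcal{L}}_k$ depends on the full parameter vector $\phi=\col(\phi_1,\dots,\phi_T)$ only through the single block $\phi_k$, since every expectation appearing in it is taken under the frozen estimate $\hat{\phi}^i$ and is therefore a constant computed once by the Kalman/RTS recursions. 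Consequently the objective is a sum of terms over disjoint coordinate blocks with no coupling across stages.

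First I would set up the backward value recursion. Define the terminal maximizer by \eqref{eq1Proposition}, $\hat{\phi}_T^{i*}=\arg\max_{\phi_T}\bar{\mathcal{L}}_T(\phi_T,\hat{\phi}^i)$, and proceed by downward induction on $j$. At stage $j$ the quantity being maximized in \eqref{proposition:lastEq} is $\bar{\mathcal{L}}_j(\phi_j,\hat{\phi}^i)+\sum_{k=j+1}^T\bar{\mathcal{L}}_k(\hat{\phi}_k^{*},\hat{\phi}^i)$; because the tail sum has already been optimized and is independent of $\phi_j$, it is an additive constant and does not affect the argmax, so $\hat{\phi}_j^{i*}=\arg\max_{\phi_j}\bar{\mathcal{L}}_j(\phi_j,\hat{\phi}^i)$. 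Running this to $j=1$ produces $\hat{\phi}^{*}=\col(\hat{\phi}_1^{i*},\dots,\hat{\phi}_T^{i*})$ together with $\mathcal{L}^*(s_1)=\log p(\mathbf{s}_1)+\sum_{k=1}^T\bar{\mathcal{L}}_k(\hat{\phi}_k^{i*},\hat{\phi}^i)$, which is exactly the value returned at the last step of the recursion. Since the maximum of a separable sum equals the sum of the blockwise maxima, this $\hat{\phi}^{*}$ is a global maximizer of $\mathcal{L}(\cdot,\hat{\phi}^i)$ over $\Phi$, establishing \eqref{optiphi3}; existence of each blockwise maximizer follows from continuity of $\bar{\mathcal{L}}_k$ together with compactness of $\Phi$.

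The second assertion, that pointwise-in-state maximizing actions yield an optimal policy, is where the principle of optimality (\cite[Section~1.3]{bertsekas1995dynamic}) does genuine work. Here one must argue that maximizing the per-stage surrogate $\bar{\mathcal{L}}_k$ over the block $\phi_k$ is equivalent to selecting, for every realized $\mathbf{s}_k$, the action $\mathbf{a}_k^*$ that maximizes the stage term---that is, an optimal feedback law rather than an open-loop value. The Gaussian control parametrization \eqref{controlChap1} makes $\phi_k$ encode precisely such a state-feedback law, so blockwise optimization of $\phi_k$ and pointwise optimization over $\mathbf{a}_k^*$ coincide. Concatenating these stage-wise optimal feedback laws gives the global maximizer by the separability already proved, and the induced policy $\pi^*=\{\mathbf{a}_1^*,\dots,\mathbf{a}_T^*\}$ is optimal; one may additionally close the loop with Lemma~\ref{thm:main_theorem} to read the likelihood optimum as the reduction of expected cost-to-go \eqref{EVi}.

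I expect the main obstacle to be justifying that the backward recursion is exact rather than merely a convenient heuristic. The crux is verifying, directly from \eqref{ltv_surrogate}, that $\bar{\mathcal{L}}_k$ truly involves no parameter block other than $\phi_k$: in ordinary optimal control the stages are coupled through the controlled state, and it is only because the expectations are frozen under $\hat{\phi}^i$ that this coupling disappears and the tail sums degenerate into constants. The remaining delicate point is the equivalence between blockwise maximization of $\bar{\mathcal{L}}_k$ and pointwise (in $\mathbf{s}_k$) maximization of the action, which is what licenses the feedback interpretation of $\pi^*$ and the appeal to the principle of optimality; everything else is routine given Lemma~\ref{lemma:LTV}.
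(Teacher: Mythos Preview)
Your proposal is correct, and in fact it is more explicit than what the paper provides. The paper does not give a formal proof of this proposition: it states the result and then, in the Remark that follows, motivates it by invoking Lemma~\ref{thm:main_theorem} (the likelihood--cost connection) together with the principle of optimality from \cite[Section~1.3]{bertsekas1995dynamic}, treating the backward recursion as the natural DP analogue once likelihoods are viewed as costs.

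Your route is the same in spirit but makes the mechanism precise: you read off from Lemma~\ref{lemma:LTV} and \eqref{ltv_surrogate} that $\bar{\mathcal{L}}_k$ depends on $\phi$ only through the block $\phi_k$ (because the conditional expectations are frozen at $\hat{\phi}^i$), so the objective is additively separable and the backward recursion is exact, with the tail sums acting as inert constants in each argmax. That observation---that the usual stagewise coupling through the controlled state is absent here precisely because the E-step fixes the smoothing distribution---is the substantive point, and you identify it clearly. The paper leaves this implicit and leans on the DP analogy; your argument supplies the missing justification for why that analogy is not merely heuristic.
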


\begin{remark}
	Precisely, there are two main reasons for proposing optimization according to eq~\eqref{proposition:lastEq},
	\begin{itemize}
		\item  Lemma~\ref{thm:main_theorem} establishes connections that likelihoods can be treated as costs with certain assumptions and approximations. Therefore, one can intuitively think of optimizing the approximate likelihood function from the last time step backwards in time in a very similar manner to that of the principle of optimality~\cite{bertsekas1995dynamic}.
		\item  The variance of the estimators of $\phi_j$ can be reduced by utilizing the fact that the calculation of future optimal parameters of actions mentioned in~\eqref{proposition:lastEq} does not take into account past likelihood functions unlike the objective function mentioned in~\cite[Section V-B]{mallick2022stochastic}. 
	\end{itemize}

\end{remark}

{ 
Attention is now directed towards the resolution of equation~\eqref{proposition:lastEq}. The proposed optimization paradigm aims to identify a superior policy parameter $\phi =\hat\phi^{i+1}$ for the next iteration compared to $\phi =\hat\phi^{i}$, with the objective of maximizing (or increasing) the right-hand side of equation~\eqref{proposition:lastEq}. Hence, it is desirable to choose $\hat\phi^{i+1} = \hat{ \phi}^{i*}$.

Nevertheless, computing the optimal value of $\hat{ \phi}^{i*}$ across the entire control action sequence ${\mathbf{a}_1, \mathbf{a}_2,.., \mathbf{a}_T}$ is typically challenging. The principle of expectation-maximization (EM) as an optimal control technique simplifies the maximization of ${\mathcal{L}} (\phi,\hat{\phi^i})$ over the entire control action sequence for each time step. In other words, one aims to maximize ${\mathcal{L}} (\phi,\hat{\phi^i})$ individually for each time step through a backward iterative process, as indicated by equations~\eqref{eq1Proposition}-\eqref{proposition:lastEq} for $j=T,\cdots,1$.}

Then, a better policy parameter for the next iteration is selected as 
$\hat{ \phi}^{i+1} = \col(\hat{ \phi}^{i*}_1, \cdots, \hat{ \phi}^{i*}_T) \approx \hat{ \phi}^{i*} $.
For brevity we  would refer the above optimization problem as \textbf{DPSOC-EM}.
After carrying out the maximization, we utilize the optimal parameters to collect samples of trajectories, that are utilized to carry out the dynamics fitting of piecewise linear models according to the procedure mentioned in~\cite{mallick2022reinforcement}. Then, with the help of those dynamics I carry out another degree of optimization as per:
\begin{align} \label{eqn:constrainequationEMandILQG}
	\phi = \text{min}_\phi \mathbb{E}_{p_\phi }  \big[ Q(\mbs_k,\mba_k)  \big]  \text{ s.t. } D_{KL} (p_\phi   || p_{\hat{\phi}^{i+1}} ) < \nu_k 
\end{align}
where $Q (s_k,a_k)$ is a second-order taylor series expansion employed around the nominal state-action pair ($\hat{\mbs}_k,\hat{\mba}_k$) as mentioned in~\cite[Chapter 3.4]{levine2014motor}. The term $p_{\hat{\phi}^{i+1}}$ in eq~\eqref{eqn:constrainequationEMandILQG} denotes the distribution of policy when excited with parameters $\hat{\phi}^{i+1}$. The term $\nu_k$ is the maximum \kl ~divergence between the new policy and the EM policy parameterized with $\hat{\phi}^{i+1}$. More details can be found in~\cite{levine2014learning,chebotar2017path}.

Having, described the new proposition, I present two results described in Theorem~\ref{cont3:uniqueness1} and~\ref{theorem:global_maximizerLastChap} that delivers some properties of the optimal estimates of~\textbf{DPSOC-EM}.

\begin{theorem} \label{cont3:uniqueness1} 
	The following equation,  
	\begin{align} \label{cont3:jacobian_set_zero}
		\nabla_{\phi_j} \Big\{  \bar{\mathcal{L}}_{j} (\phi_{j} , \hat{\phi}^i )   + \sum_{k=j }^{T} \bar{\mathcal{L}}_k (\hat{\phi}^*_{j+1},  \hat{\phi}^i )  \Big\}   = \mathbf{0}, \; j=T,\cdots, 1
	\end{align}
	has a unique solution for any given parameter  $\hat{\phi}^i$ for the function ${\bar{\mathcal{L}}_k (\phi_k,\hat{\phi^i})}$ defined in \eqref{ltv_surrogate}
\end{theorem}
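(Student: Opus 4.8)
The plan is to reduce the vector first-order condition \eqref{cont3:jacobian_set_zero} to a single-block stationarity problem and then exploit the Gaussian-likelihood structure of $\bar{\mathcal{L}}_j$. First I would observe that in the bracketed objective of \eqref{cont3:jacobian_set_zero} every summand $\bar{\mathcal{L}}_k(\hat{\phi}^*_{j+1},\hat{\phi}^i)$ is evaluated at the already-fixed optima and not at $\phi_j$, and is therefore constant in $\phi_j$; consistently with the backward recursion \eqref{proposition:lastEq}, the only term carrying a dependence on $\phi_j$ is $\bar{\mathcal{L}}_j(\phi_j,\hat{\phi}^i)$. Hence the equation collapses to $\nabla_{\phi_j}\bar{\mathcal{L}}_j(\phi_j,\hat{\phi}^i)=\mathbf{0}$, and it suffices to prove that this single block has a unique stationary point for every fixed $\hat{\phi}^i$ and every $j$.

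Next I would make the Gaussian structure explicit. Using $\mbzeta_k=\col(\mbs_{k+1},y_k)$ and $\mbz_k=\col(\mbs_k,\mba_k)$, the map $\phi_j\mapsto\bar{\mathcal{L}}_j(\phi_j,\hat{\phi}^i)$ in \eqref{ltv_surrogate} is, up to additive constants, the expected complete-data log-likelihood of a linear-Gaussian regression, i.e. of the form $-\tfrac12\Tr(\pmb{\Sigma}^{-1}\,S(\mathbf{A}))-\tfrac12\log|\pmb{\Sigma}|$, where $S(\mathbf{A})$ is quadratic in the mean/feedback block and $\pmb{\Sigma}$ is the covariance block; here the conditional moments $\mathbb{E}_{\hat{\phi}^i}(\mbzeta_j\mbzeta_j^\top|\mathbb{Y}_T)$, $\mathbb{E}_{\hat{\phi}^i}(\mbzeta_j\mbz_j^\top|\mathbb{Y}_T)$ and $\mathbb{E}_{\hat{\phi}^i}(\mbz_j\mbz_j^\top|\mathbb{Y}_T)$ are fixed numbers computed in the E-step. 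I would then split $\phi_j=\col(\mbf_j,\mbe_j,\mbsigma_j)$ into the mean/feedback parameters $(\mbf_j,\mbe_j)$, which enter affinely inside the quadratic form, and the covariance parameter $\mbsigma_j$ entering through $\mbSigma_j=(\mbSigma_j^{1/2})^\top\mbSigma_j^{1/2}$ and the log-determinant.

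The core of the argument is a profiling (block-coordinate) computation. For any fixed covariance the objective is a strictly concave quadratic in $(\mbf_j,\mbe_j)$, so setting its gradient to zero yields a linear system whose matrix is the augmented second-moment matrix $\mathbb{E}_{\hat{\phi}^i}(\mbz_j\mbz_j^\top|\mathbb{Y}_T)$ (bordered to absorb the offset $\mbe_j$); this admits a unique weighted-least-squares solution provided that matrix is positive definite. I would establish positive definiteness from the nondegeneracy of the underlying Gaussians — the controller covariance $\mbSigma_k\succ0$ and the model covariance $\pmb{\Sigma}^o_k\succ0$ force the Kalman/RTS smoothed posteriors to be positive definite, hence so are the conditional second moments. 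Substituting the unique mean/feedback solution back produces a concentrated objective in $\mbSigma_j$ of the canonical form $-\tfrac12\Tr(\mbSigma_j^{-1}S_j)-\tfrac12\log|\mbSigma_j|$ with $S_j\succ0$ the residual second moment, whose only positive-definite critical point is $\mbSigma_j=S_j$ by the strict concavity of $\log\det$ on the positive-definite cone. Stacking the two blocks gives a single stationary point, which is the required unique solution.

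I expect the main obstacle to be the covariance block, for two reasons. First, the full Gaussian log-likelihood is not jointly concave in $(\text{mean},\,\mbSigma_j)$, so uniqueness cannot be read off a single Hessian and must be obtained through the profile/concentration argument above, together with a clean proof that the concentrated map has no spurious critical points. Second, the square-root parameterization $\mbSigma_j=(\mbSigma_j^{1/2})^\top\mbSigma_j^{1/2}$ is many-to-one (an orthogonal ambiguity in $\mbSigma_j^{1/2}$), so strictly speaking I would phrase uniqueness at the level of $\mbSigma_j$ (equivalently, fix a canonical symmetric or Cholesky square root so that $\mbsigma_j$ is determined) and verify that the chain rule through $\mbsigma_j\mapsto\mbSigma_j$ introduces no additional stationary points with $\mbSigma_j\succ0$. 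The positive-definiteness of $\mathbb{E}_{\hat{\phi}^i}(\mbz_j\mbz_j^\top|\mathbb{Y}_T)$ is the other load-bearing fact, and I would isolate its verification (via the smoother recursions) as a separate lemma. Throughout, the structure mirrors the uniqueness argument of \cite{mallick2022stochastic}, adapted to the per-timestep backward objective \eqref{proposition:lastEq}.
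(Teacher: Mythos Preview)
Your reduction to the single block $\nabla_{\phi_j}\bar{\mathcal{L}}_j(\phi_j,\hat{\phi}^i)=\mathbf{0}$ is fine, but the subsequent analysis rests on a misreading of how $\phi_j$ enters $\bar{\mathcal{L}}_j$. Two things go wrong. First, the conditional moments $\mathbb{E}_{\hat{\phi}^i}(\mbzeta_j\mbz_j^\top|\mathbb{Y}_T)$ and $\mathbb{E}_{\hat{\phi}^i}(\mbz_j\mbz_j^\top|\mathbb{Y}_T)$ are \emph{not} fixed numbers: since $\mbz_j=\col(\mbs_j,\mba_j)$ and $\mba_j=\mbF_j\mbs_j+\mbe_j+\mbSigma_j^{1/2}\epsilon$, these moments depend on the very parameter $\phi_j$ you are optimizing (only the smoothed state moments $\hat{\mbs}_{j|T},\hat{\mathbf{P}}_{j|T}$ are E-step constants). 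The paper exploits precisely this dependence. Second, and more seriously, the covariance in the $\log$-determinant of \eqref{ltv_surrogate} is $\pmb{\Sigma}^o_k$, the \emph{model} covariance learned in Step~1, not the controller covariance $\mbSigma_j$. Hence $-\tfrac12\log|\pmb{\Sigma}^o_j|$ is a constant in $\phi_j$, and the concentrated objective in $\mbSigma_j$ is \emph{not} of the form $-\tfrac12\Tr(\mbSigma_j^{-1}S_j)-\tfrac12\log|\mbSigma_j|$. Instead $\mbSigma_j$ enters only through the trace term $\Tr\big((\pmb{\Sigma}^d_j)^{-1}\mbB^d_j\,\mbSigma_j\,{\mbB^d_j}^\top\big)$, which is linear in $\mbSigma_j$ and therefore purely quadratic in $\mbsigma_j=\mathrm{vec}(\mbSigma_j^{1/2})$.

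Because of this, your profiling/log-det argument does not apply. The paper's route is different: once $\mba_j$ is substituted, the whole map $\phi_j\mapsto\bar{\mathcal{L}}_j(\phi_j,\hat{\phi}^i)$ is a quadratic form in $\phi_j=\col(\mbf_j,\mbe_j,\mbsigma_j)$, so $\nabla_{\phi_j}\bar{\mathcal{L}}_j=\mathbf{0}$ is a \emph{linear} system $\phi_j^\top M+\pmb{\mathcal{C}}^\top=\mathbf{0}$. Uniqueness then reduces to showing $M\succ0$, which the paper does by assembling Kronecker-product blocks $\pmb{\mathcal{M}}_k^{1},\pmb{\mathcal{M}}_k^{2},\pmb{\mathcal{M}}_k^{3}$ from $\hat{\mbs}_{j|T},\hat{\mathbf{P}}_{j|T},(\pmb{\Sigma}^d_j)^{-1},\mbB^d_j$ and verifying positive definiteness via a Schur-complement factorization (using $\mbB^d_j$ full column rank and $\hat{\mathbf{P}}_{j|T}\succ0$). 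Your instinct about nondegeneracy of the smoother is the right ingredient, but it feeds into a PD-Hessian argument, not a Gaussian-MLE concentration argument; rework the covariance block accordingly.
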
 
\begin{proof}
	The function $\bar{\mathcal{L}}_k (\phi_j, \hat{\phi}^i ) $ defined in \eqref{cont3:jacobian_set_zero} is expressed in terms of
	\begin{align}  \label{eqn:KalmanValues}
		\mathbb{E}_{{\hat{\phi}^i} } (y_k y_k^{\top}| {\mathbb{Y}_T}) ,\; \mathbb{E}_{{\hat{\phi}^i}} (\mathbf{a}_k \mathbf{a}_k^{\top}| {\mathbb{Y}_T}), \; \mathbb{E}_{{\hat{\phi}^i} } (y_k \mathbf{a}_k^{\top}| {\mathbb{Y}_T}),
	\end{align}  
	where $\mathbf{a}_k$ explicitly depends on ${\phi}_k$.
	Now, one can take a very similar approach as that of~\cite[Theorem IV.1]{mallick2022stochastic} to prove the following theorem, where components of expression of Jacobian can be described as follows,
	\begin{align}
		\nabla_{\phi_j} \Big\{  \bar{\mathcal{L}}_{j} (\phi_{j} , \hat{\phi}^i )   + \sum_{k=j }^{T} \bar{\mathcal{L}}_k (\phi^*_{j+1},  \hat{\phi}^i )  \Big\}   
		=  \nabla_{\phi_j } \Tr {\pmb{\lambda}_k}  + {\pmb{\mathcal{C}}}^\top    \label{equation_79Chap5}
	\end{align}
	where 
	\begin{align}
		{\pmb{\lambda}_k}  = {\pmb{\Sigma}^d_k}^{-1} \mathbf{B}^d_k \mathbb{E}_{\hat{\phi}^i} \big( \mathbf{a}_k \mathbf{a}_k^\top |\mathbb{Y}_T  \big) {\mathbf{B}^d_k}^\top     
	\end{align}
	and ${\pmb{\mathcal{C}}}$  represents some constant column vector,  independent of ${\phi}_j =\col (\mbf_j, \mbe_j, \mbsigma_j)$.
	In particular,  one has ${\pmb{\lambda}_k}= \pmb{\lambda}_{1,k}+\pmb{\lambda}_{2,k}+\pmb{\lambda}_{3,k} + \pmb{\lambda}_{4,k}+\pmb{\lambda}_{5,k}  $  with
	\begin{align*}
		\pmb{\lambda}_{1,k}&=(\pmb{\Sigma}^d_k)^{-1} \mathbf{B}^d_k \mathbf{F}_j \mathbf{G}_k \mathbf{F}_j^\top  {\mathbf{B}^d_k}^\top \\
		\pmb{\lambda}_{2,k}&=(\pmb{\Sigma}^d_k)^{-1} \mathbf{B}^d_k \mbe_j \mbe_j^\top {\mathbf{B}^d_k}^\top\\
		\pmb{\lambda}_{3,k}&=  (\pmb{\Sigma}^d_k)^{-1} \mathbf{B}^d_k \mbSigma_j {\mathbf{B}^d_k}^\top \\
		\pmb{\lambda}_{4,k}&= ({\pmb{\Sigma}^d_k})^{-1} \mathbf{B}^d_k \mathbf{F}_j \hat{\mathbf{s}}_{k|T} \mbe_j^\top {\mathbf{B}^d_k}^\top  \\
		\pmb{\lambda}_{5,k}&= ({\pmb{\Sigma}^d_k})^{-1} \mathbf{B}^d_k \mbe_j  \hat{\mathbf{s}}_{k|T}^\top \mathbf{F}_j^\top {\mathbf{B}^d_k}^\top .
	\end{align*}
	Also,  ${\pmb{\mathcal{C}}}$ is of the special structure
	\begin{align} \label{OstructureChap5}
		{\pmb{\mathcal{C}}} =   \begin{bmatrix}
			\sum_{k=j}^T \pmb{\mathcal{C}}_{1,k}\\
			\sum_{k=j}^T \pmb{\mathcal{C}}_{2,k}\\
			\mathbf{0}
		\end{bmatrix} , \end{align} 
	with the dimensions of $\pmb{\mathcal{C}}_{1,k}$, $ \pmb{\mathcal{C}}_{2,k}$
	and $\mathbf{0}$ corresponding to those of ${\mbf_{j}}$, ${\mbe_{j}}$ and  
	$\mbsigma_{j}$, respectively. The explicit expression of  
	$ \pmb{\mathcal{C}}_{1,k}$ and $ \pmb{\mathcal{C}}_{2,k}$ can be obtained from the equations in~\cite[Appendix A]{mallick2022stochastic}.
	
	Below, we calculate the derivative of the terms in \eqref{equation_79Chap5} with respect to 
	$\mbf_j$, $\mbe_j$, and $\mbsigma_j$, respectively. 
	
	Firstly, with respect to  $\mbf_j$, one has 
	\begin{align*}
		\nabla_{\mbf_j}   \Tr  \pmb{\lambda}_{1,k} =\mbf_j^\top  \pmb{\mathcal{M}} ^{1}_k ,\;
		\nabla_{\mbf_j}  \Tr  \pmb{\lambda}_{2,k} =  0,\;
		\nabla_{\mbf_j}    \Tr  \pmb{\lambda}_{3,k} = 0
	\end{align*}
	and hence 
	\begin{align*}
		& \nabla_{\mbf_j}     \Tr  \pmb{\lambda}_{4,k} \\
		&=\nabla_{\mbf_j}    \Tr  \Big\{ ({\pmb{\Sigma}^d_k})^{-1} \mathbf{B}^d_k \mathbf{F}_j \hat{\mathbf{s}}_{k|T} \mbe_j^\top {\mathbf{B}^d_k}^\top \Big\} \\ 
		&=\nabla_{\mbf_j}     {\Tr  \Big\{ (\mbe_j^\top {\mathbf{B}^d_k}^\top) ({\pmb{\Sigma}^d_k}^{-1} \mathbf{B}^d_k \mathbf{F}_j \hat{\mathbf{s}}_{k|T}  ) \Big\} } \\
		&=  (\mathbf{B}^d_k \mbe_j)^\top \nabla_{\mbf_j}  (\hat{\mathbf{s}}_{k|T}^\top \otimes { \pmb{\Sigma}_k^d}^{-1} \mathbf{B}^d_k)
		\mbf_j \\
		&  =   [(\hat{\mathbf{s}}_{k|T} \otimes {  {\mathbf{B}^d_k}^\top {\pmb{\Sigma}_k^d}^{-1}}^\top )  (\mathbf{I} \otimes {\mathbf{B}^d_k} )
		\mbe_j ]^\top \nonumber 
	\end{align*}
	and, similarly, 
	\begin{align*}
		\nabla_{\mbf_j}    \Tr  \pmb{\lambda}_{5,k}  = [ (\hat{\mathbf{s}}_{k|T} \otimes {\mathbf{B}^d_k}^\top )  (\mathbf{I} \otimes  {  {\pmb{\Sigma}_k^d}^{-1}}  {{\mathbf{B}^d_k}}  )  \mbe_j]^\top.
	\end{align*}
	Here,   $\pmb{\mathcal{M}} ^1_k = \pmb{\mathcal{M}}_k^{1,0} +  \pmb{\mathcal{M}} ^{1,1}_k$ with
	\begin{align*}
		\pmb{\mathcal{M}} _k^{1,0}   &= 2  \hat{\mathbf{s}}_{k|T} \hat{\mathbf{s}}_{k|T}^\top  \otimes  {\mathbf{B}^d_k}^\top {\pmb{\Sigma}^d_k}^{-1} \mathbf{B}^d_k   \\
		\pmb{\mathcal{M}} _k^{1,1} &= 2  \hat{\mathbf{P}}_{k|T}   \otimes  {\mathbf{B}^d_k}^\top {\pmb{\Sigma}^d_k}^{-1} \mathbf{B}^d_k  .
	\end{align*}
	
	Secondly, with respect to  $\mbe_j$, one has 
	\begin{align*}
		\nabla_{\mbe_j}   \Tr  \pmb{\lambda}_{1,k} &= 0\\
		\nabla_{\mbe_j}   \Tr  \pmb{\lambda}_{2,k} &= \mbe_j^\top \pmb{\mathcal{M}} ^{2}_k \\
		\nabla_{\mbe_j}   \Tr  \pmb{\lambda}_{3,k} &=   0 \\
		\nabla_{\mbe_j}  \Tr  \pmb{\lambda}_{4,k} &= 
		\mbf_j^\top (\mathbf{I} \otimes {\mathbf{B}^d_k}^\top) (\hat{\mathbf{s}}_{k|T} \otimes { \pmb{\Sigma}_k^d}^{-1} \mathbf{B}^d_k ) \\
		\nabla_{\mbe_j}   \Tr  \pmb{\lambda}_{5,k}  &= \mbf_j^\top (\mathbf{I} \otimes {\pmb{\Sigma}_k^d}^{-1} {\mathbf{B}^d_k})^\top (\hat{\mathbf{s}}_{k|T} \otimes \mathbf{B}^d_k ) 
	\end{align*}
	with
	\begin{align*}
		\pmb{\mathcal{M}} ^2_k  &=2    \mathbf{I} \otimes  {\mathbf{B}^d_k}^\top {\pmb{\Sigma}^d_k}^{-1} \mathbf{B}^d_k   . 
	\end{align*}

	From above, the equation \eqref{cont3:jacobian_set_zero} is equivalent to    
	\begin{align} \label{optimal_estimatesChap3}
		\nabla_{\phi_j} \Big\{  \bar{\mathcal{L}}_{j} (\phi_{j} , \hat{\phi}^i )   + \sum_{k=j+1}^{T} \bar{\mathcal{L}}_k (\phi^*_{j+1},  \hat{\phi}^i )  \Big\}    =
		\phi_j^\top   \begin{bmatrix}
			\pmb{\mathcal{M}}  & \mathbf{0}\\
			\mathbf{0} & \pmb{\mathcal{M}}^2 
		\end{bmatrix}        +   \pmb{\mathcal{C}}^\top  = \mathbf{0}
	\end{align}
	where   $\pmb{\mathcal{M}}^2 = \sum_{k=1}^T \pmb{\mathcal{M}}^2_k$, $\pmb{\mathcal{M}}  = \sum_{k=1}^T \pmb{\mathcal{M}}_k$, 
	$\pmb{\mathcal{M}}_k=  \begin{bmatrix}
		\pmb{\mathcal{M}}_k^1   &  \pmb{\mathcal{M}}_k^3  \\
		{\pmb{\mathcal{M}}_k^3}^\top &    {\pmb{\mathcal{M}}_k^2 }
	\end{bmatrix}$ with $
	\pmb{\mathcal{M}} ^3_k  = 2 (  \mathbf{I}_{1} \otimes {\mathbf{B}^d_k}^\top) (\hat{\mathbf{s}}_{k|T}^\top \otimes {\pmb{\Sigma}^d_k}^{-1} {\mathbf{B}^d_k}).
	$

	What is left is to prove the existence of a unique solution $\phi_j$ to the equation
	\eqref{optimal_estimatesChap3}. It suffices to show that  $\pmb{\mathcal{M}}  >0$
	and  $ \pmb{\mathcal{M}}^2>0$, or $\pmb{\mathcal{M}}_k >0$
	and  $ \pmb{\mathcal{M}}^2_k>0$.

	Since the matrix $\mathbf{B}^d_k$ has a  full column rank, $\pmb{\Sigma}^d_k >0$ and $\hat{\mathbf{P}}_{k|T} > 0$, one has
	$\pmb{\mathcal{M}} ^2_k  >0$ and  $\pmb{\mathcal{M}} _k^{1,1} >0$. 
	Next,   the decomposition of the matrix $\pmb{\mathcal{M}}_k$ gives
	\begin{align} \label{decomposition3}
		\pmb{\mathcal{M}}_k  = & \begin{bmatrix}
			\mathbf{I} &  \pmb{\mathcal{M}} ^3_k  {\pmb{\mathcal{M}} ^2_k}^{-1} \\
			\mathbf{0} & \mathbf{I}
		\end{bmatrix} 
		\begin{bmatrix}
			\pmb{\mathcal{M}} _k^{1}    -  \pmb{\mathcal{M}} _k^3 {\pmb{\mathcal{M}} _k^2}^{-1}  {\pmb{\mathcal{M}} _k^3}^\top   &  \mathbf{0} \\
			\mathbf{0} &  \pmb{\mathcal{M}} _k^2
		\end{bmatrix} &  \\
		& \qquad \qquad \qquad  \qquad \qquad \qquad \qquad \begin{bmatrix}
			\mathbf{I} & \mathbf{0}  \\
			{\pmb{\mathcal{M}} ^2_k}^{-1} {\pmb{\mathcal{M}} ^3_k}^\top   & \mathbf{I}
		\end{bmatrix}.
	\end{align}
	It is noted that $\pmb{\mathcal{M}}_k ^{1,0} = \pmb{\mathcal{M}}_k^3 {\pmb{\mathcal{M}}_k^2}^{-1}  {\pmb{\mathcal{M}}^3_k}^\top $, 
	which implies
	\begin{align*}
		\pmb{\mathcal{M}} _k^{1}    -  \pmb{\mathcal{M}} _k^3 {\pmb{\mathcal{M}} _k^2}^{-1}  {\pmb{\mathcal{M}} _k^3}^\top
		=\pmb{\mathcal{M}}_k ^{1,1} >0
	\end{align*}
	and hence $\pmb{\mathcal{M}}_k >0$. The proof is thus completed.
\end{proof}
\begin{figure}[H]
	\centering
	{\includegraphics[scale=0.248]{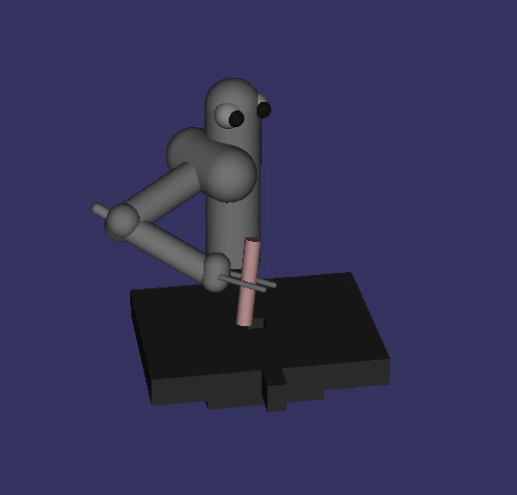}}
	\hfill
	{\includegraphics[scale=0.134]{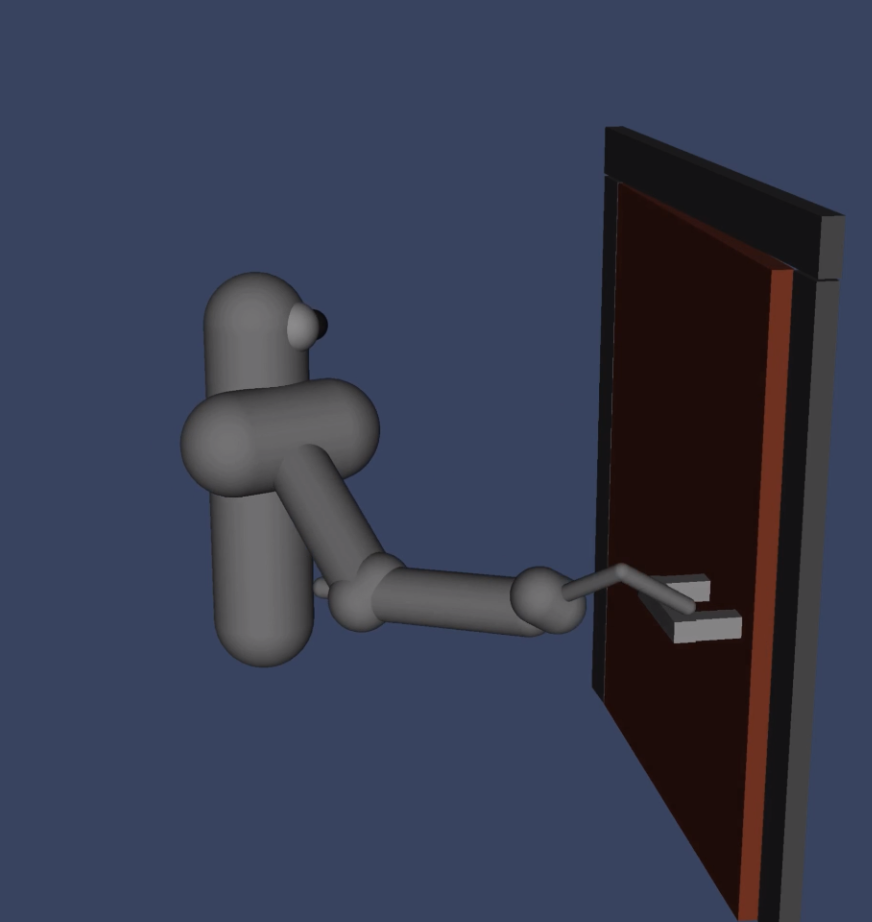}}
	\caption{\texttt{MUJOCO}-based simulation environment on which the DP-based stochastic optimal control algorithm is tested; \textbf{left:} A \texttt{PR2} robot has the task of inserting the peg in the hole; \textbf{right: }  A \texttt{PR2} robot has the task of opening the door to a certain minimum angle.}
	\label{fig:contribution3MujocoSnaps}
\end{figure}
\begin{figure}[H]
	\centering
	\includegraphics[width=\columnwidth]{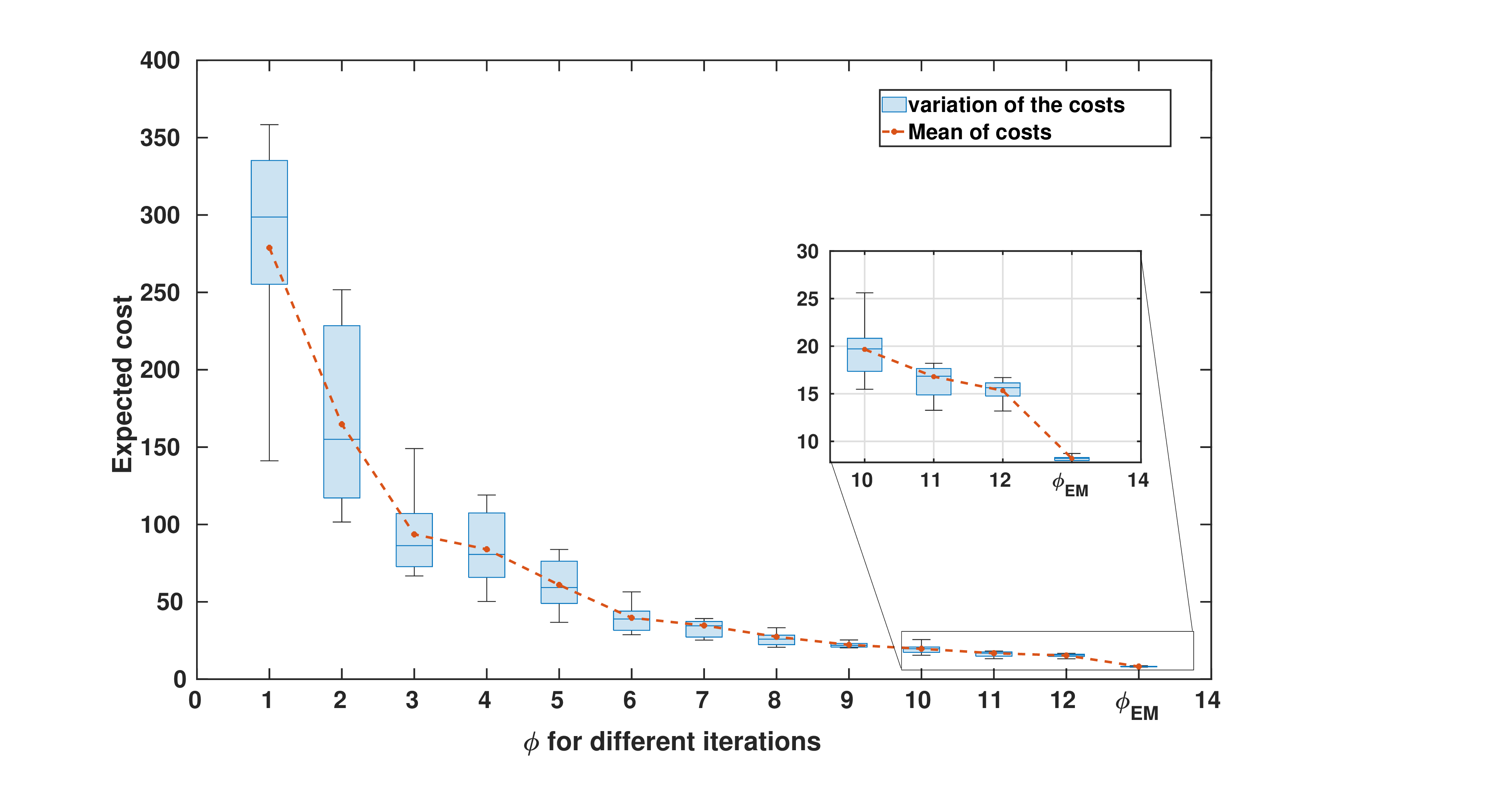}%
	\caption{Expected costs for peg in the hole task; the zoomed-in plot shows the decline in the cost when $\phi_{12}$ is utilized to initiate the EM strategy, i.e., we used iLQG to initiate EM when I do not see much change in the previous costs.}
	\label{fig:contributionStatesPeg}
\end{figure}
\begin{theorem} \label{theorem:global_maximizerLastChap}
	For the function ${\bar{\mathcal{L}}_k (\phi_k,\hat{\phi}^i)}$ defined in \eqref{ltv_surrogate},  the following inequality  
	\begin{align}
		& - \nabla^2_{\phi_j} \big[ \bar{\mathcal{L}}_{j } (\phi_{j}, \hat{\phi}^i )   + \sum_{k=j }^{T} \bar{\mathcal{L}}_k (\hat{\phi}^*_{j+1}, \hat{\phi}^i ) \big] > 0, \; j=T,\cdots, 1  \label{total_matris_kraceLastChap}
	\end{align}
	always holds for any given parameter $\hat{\phi^i}$; $\hat{\phi}^*_{j+1}$ represents the optimal parameters obtained from the backward steps of optimization as per the series of equations~\ref{series}.
\end{theorem}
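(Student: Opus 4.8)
The plan is to treat Theorem~\ref{theorem:global_maximizerLastChap} as the second-order counterpart of Theorem~\ref{cont3:uniqueness1} and to reuse essentially all of the latter's computation. The starting observation is that, as a function of $\phi_j$, the per-stage objective $\bar{\mathcal{L}}_{j}(\phi_j,\hat{\phi}^i)+\sum_{k=j}^{T}\bar{\mathcal{L}}_k(\hat{\phi}^*_{j+1},\hat{\phi}^i)$ is purely quadratic: the summed terms are evaluated at the already-fixed future optima $\hat{\phi}^*_{j+1}$ and hence are constant in $\phi_j$, while $\bar{\mathcal{L}}_j(\phi_j,\hat{\phi}^i)$ enters through the trace term of \eqref{ltv_surrogate} with a $-\tfrac{1}{2}$ prefactor and carries at most quadratic dependence on $\mbf_j$, $\mbe_j$, and on $\mbsigma_j$ (the last through $\mbSigma_j=(\mbSigma_j^{\frac12})^\top\mbSigma_j^{\frac12}$). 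Theorem~\ref{cont3:uniqueness1} has already shown this gradient to be affine in $\phi_j$ with a symmetric block-diagonal coefficient matrix plus the $\phi_j$-independent constant $\pmb{\mathcal{C}}^\top$, cf.\ \eqref{optimal_estimatesChap3}; this is precisely the statement that the Hessian is constant.

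First I would differentiate the affine gradient \eqref{optimal_estimatesChap3} a second time with respect to $\phi_j$. Since that gradient is affine, the constant part $\pmb{\mathcal{C}}^\top$ drops out and the second derivative is the constant symmetric matrix
\[
-\nabla^2_{\phi_j}\Big[\bar{\mathcal{L}}_{j}(\phi_j,\hat{\phi}^i)+\sum_{k=j}^{T}\bar{\mathcal{L}}_k(\hat{\phi}^*_{j+1},\hat{\phi}^i)\Big]=\tfrac{1}{2}\begin{bmatrix}\pmb{\mathcal{M}}&\mathbf{0}\\\mathbf{0}&\pmb{\mathcal{M}}^2\end{bmatrix},
\]
the overall sign following from the $-\tfrac{1}{2}$ prefactor, so that the objective is concave and it is the \emph{negated} Hessian that is the positive object. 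The one point to verify carefully here is that this block-diagonal form is exact, i.e.\ the mixed $\mbsigma_j$–$(\mbf_j,\mbe_j)$ second derivatives vanish: inspecting $\pmb{\lambda}_{1,k},\dots,\pmb{\lambda}_{5,k}$ shows that $\mbsigma_j$ occurs only in $\pmb{\lambda}_{3,k}$, which contains neither $\mbf_j$ nor $\mbe_j$, and a short $\text{vec}$ computation identifies the resulting $\mbsigma_j$-block with $\pmb{\mathcal{M}}^2=2\sum_k\mathbf{I}\otimes{\mathbf{B}^d_k}^\top{\pmb{\Sigma}^d_k}^{-1}\mathbf{B}^d_k$.

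It then remains only to observe that $\tfrac{1}{2}\,\mathrm{diag}(\pmb{\mathcal{M}},\pmb{\mathcal{M}}^2)$ is positive definite, and this is already contained in the proof of Theorem~\ref{cont3:uniqueness1}. There $\pmb{\mathcal{M}}^2>0$ and $\pmb{\mathcal{M}}>0$ were obtained from their per-index blocks — namely $\pmb{\mathcal{M}}^2_k>0$, and, via the Schur-complement identity $\pmb{\mathcal{M}}_k^1-\pmb{\mathcal{M}}_k^3(\pmb{\mathcal{M}}_k^2)^{-1}{\pmb{\mathcal{M}}_k^3}^\top=\pmb{\mathcal{M}}_k^{1,1}>0$, also $\pmb{\mathcal{M}}_k>0$ — using the full column rank of $\mathbf{B}^d_k$, $\pmb{\Sigma}^d_k>0$ and $\hat{\mathbf{P}}_{k|T}>0$. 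Summation over $k$ preserves positive definiteness, and a block-diagonal matrix whose diagonal blocks are positive definite is positive definite; this yields \eqref{total_matris_kraceLastChap} for every $j=T,\dots,1$. Finally, because the Hessian is constant in $\phi_j$, strict concavity is global, so the unique stationary point furnished by Theorem~\ref{cont3:uniqueness1} is in fact the global maximizer of the per-stage objective.

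I do not expect a substantial obstacle, since the argument collapses onto the computation already carried out for Theorem~\ref{cont3:uniqueness1}. The only genuinely delicate bookkeeping is the sign/prefactor (making sure it is the negated Hessian, not the Hessian, that is the positive-definite object) and the exactness of the block-diagonal decoupling between $\mbsigma_j$ and $(\mbf_j,\mbe_j)$; both are settled by inspecting the $\pmb{\lambda}_{\cdot,k}$ terms and noting that $\mbsigma_j$ appears only in $\pmb{\lambda}_{3,k}$.
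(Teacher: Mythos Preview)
Your proposal is correct and follows essentially the same approach as the paper: the paper's own proof is a one-line reference back to Theorem~\ref{cont3:uniqueness1} (and its antecedent in \cite{mallick2022stochastic}), and you are simply making that reference explicit by differentiating the affine gradient \eqref{optimal_estimatesChap3} once more and invoking the already-established positive definiteness of $\pmb{\mathcal{M}}$ and $\pmb{\mathcal{M}}^2$. The extra care you take with the sign/prefactor and with the $\mbsigma_j$--$(\mbf_j,\mbe_j)$ decoupling is exactly the bookkeeping the paper leaves implicit.
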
 
\begin{figure*}[ht]
	\centering
	{\includegraphics[scale=0.35]{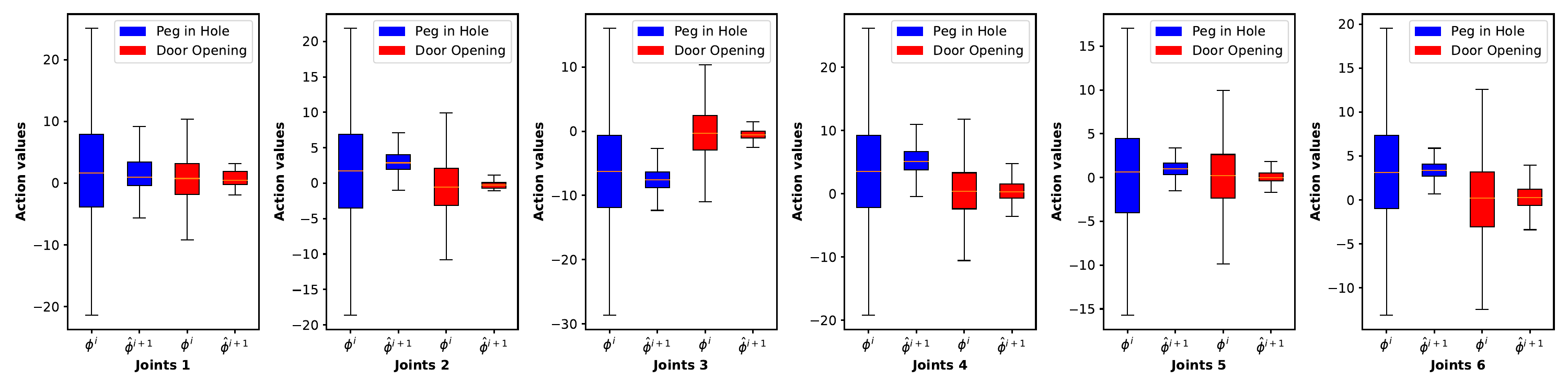}}
	\caption{The plots represents the over all variation of control actions for $40$ sample paths and each sample path contains samples of horizon length T for each of the $6$ DOF of \texttt{PR2} robotic arm joints evaluated on two tasks corresponding to the labels of colors red and blue. The term $\phi^i$ and $\hat{\phi}^{i+1}$ corresponds to baseline (iLQG) and improved parameters (EM) respectively.}
	\label{fig:contribution3ControlNoise}
\end{figure*}
\begin{proof}
By combining the insights from Theorem~\ref{cont3:uniqueness1} with the proof presented in~\cite[Theorem 3]{mallick2022stochastic}, one can directly demonstrate the theorem's validity.
\end{proof}
\begin{remark}
	The two differences between the objective function in eq~\eqref{proposition:lastEq} and the objective function mentioned in~\cite[Section 4B]{mallick2022stochastic} are:
	\begin{enumerate}
		\item   For each time-varying optimization, there is the total summation across $j^{th}$ to $T^{th}$ time instants rather than all time instants. 
		\item A backward substitution of optimal parameters is utilized as per eq~\eqref{eq2Proposition} that leverages optimal parameters from eq~\eqref{eq1Proposition}.
	\end{enumerate} 
	Nevertheless, the functional form which governs the optimization is essentially the same. As a result one can take very similar approach to carry out theoretical developments as per~\cite{mallick2022stochastic,mallick2022reinforcement}.
\end{remark}
\begin{figure*}[ht]
	\begin{subfigure}{\figsmallwidth}
		\includegraphics[width=\columnwidth]{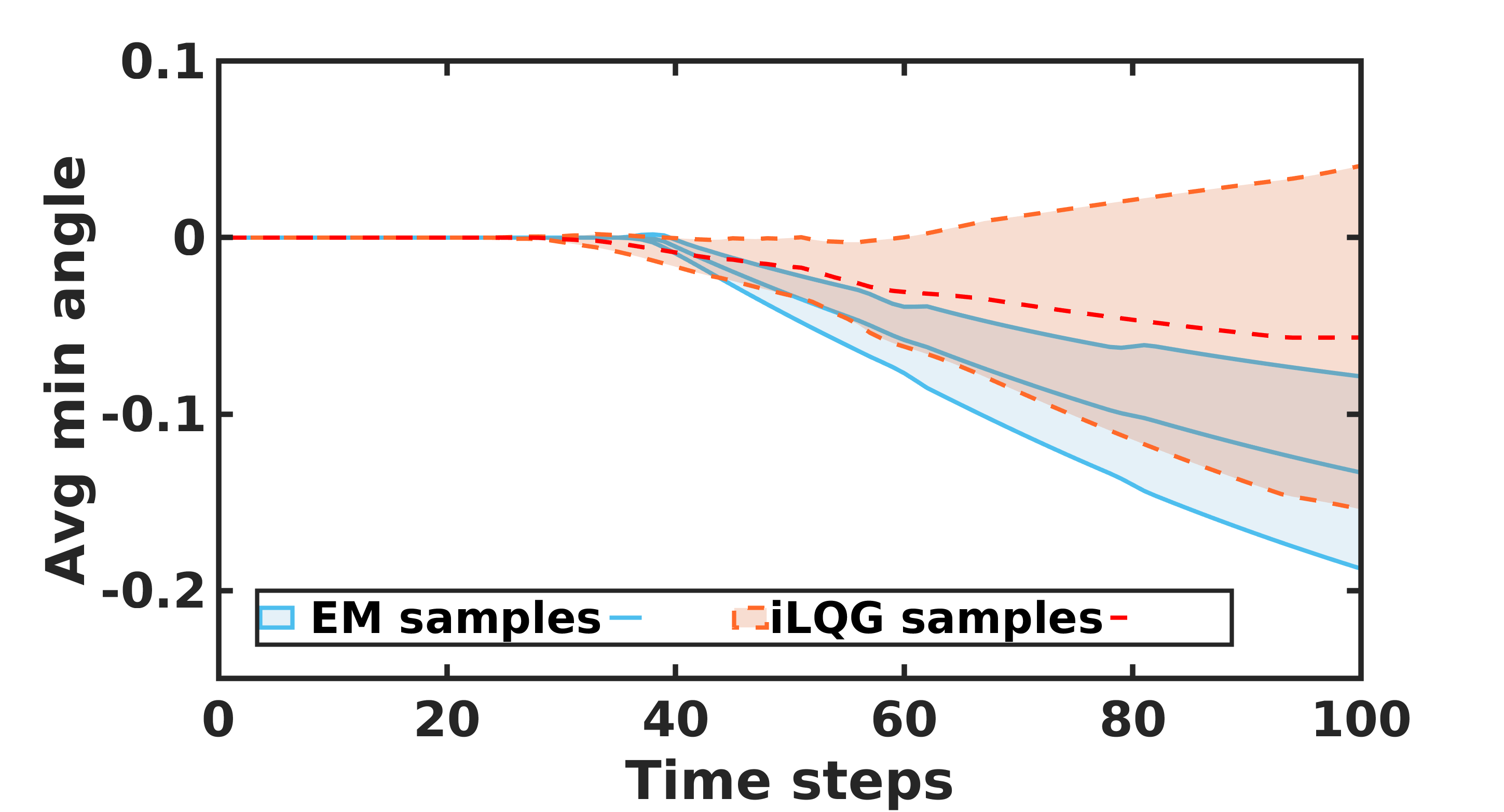}%
	\end{subfigure} 
	\begin{subfigure}{\figsmallwidth}
		\includegraphics[width=\columnwidth]{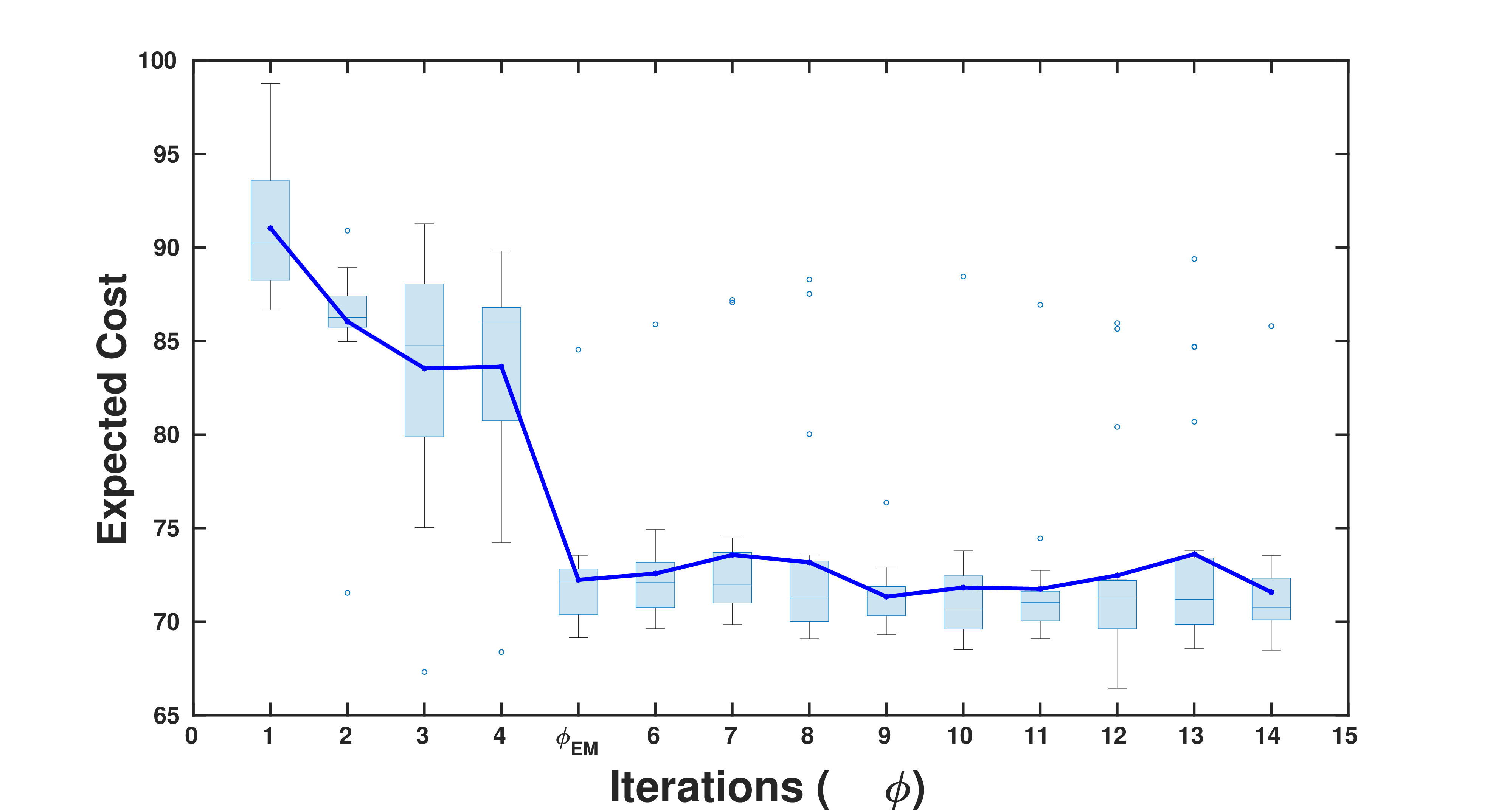}%
	\end{subfigure} 
	\begin{subfigure}{\figsmallwidth}
		\includegraphics[width=\columnwidth]{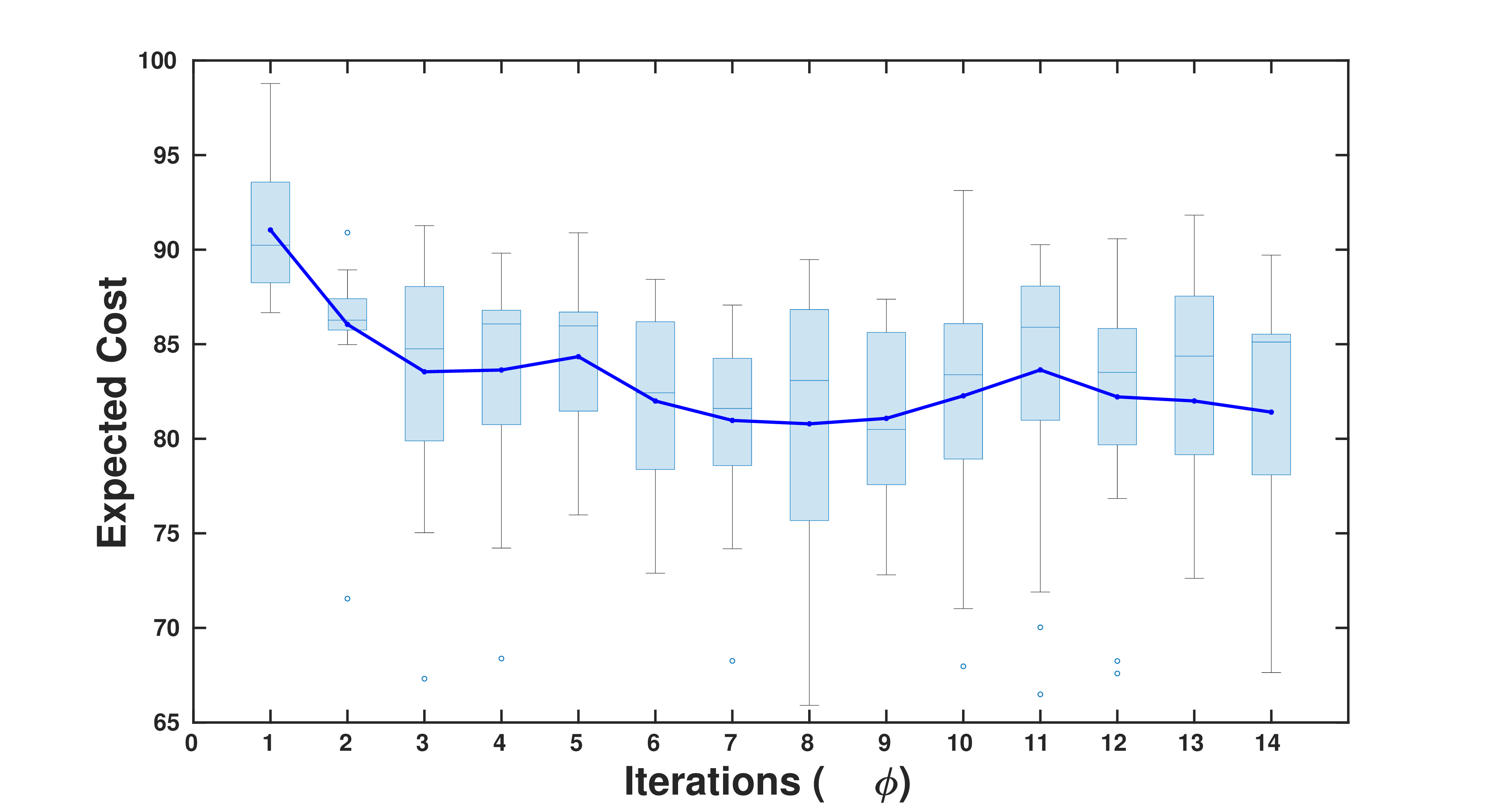}%
	\end{subfigure}
	\caption{ \textbf{Top: } Average minimum angle in radians of the door hinge opened by the hook attached to the end effector of the \texttt{PR2} robot (lower angle is better); \textbf{Bottom left: } Expected costs for door opening task based on Algorithm~\ref{alg:DPSOCEM}; \textbf{Bottom right: } Expected costs for door opening task for 14 iterations based on parameter obtained from iLQG.}
	\label{fig:contribution3CostILQGEMILQG}
\end{figure*}
 \section{New Interpretations/Connections of EM-based SOC with Legendre Transform} \label{sec:connection}
	{

		This section aims at providing an information-theoretic perspective on the EM algorithm by highlighting its relationship with optimal control and its connections to earlier findings in the literature of stochastic control theory and dynamic games~\cite{dai1996connections}.
		
		An investigation conducted by~\cite{theodorouinformation} has yielded a comprehensive theoretical analysis, wherein they have associated the concept of \enquote{free energy} with the formulation of optimal control. Employing the definition of free energy as a foundational premise, the authors exploit the Legendre transformation mentioned in Lemma~\ref{lemma:radon}, thereby delivering a pivotal theorem that is presented in Theorem~\ref{theorem:information_theory}. The inclusion of the following lemma serves to ensure self-containedness, enhance the overall coherence of this section and later this is used to deliver a theoretical result.
		
  \begin{lemma} \label{lemma:radon}
			Consider $(\Omega, \mathcal{F})$ to be a measurable space, where $\Omega$ denotes the sample space and $\mathcal{F}$ denotes a $\sigma-$algebra, and let $\mathcal{P}(\Omega)$ define a probability measure on the $\sigma-$algebra $\mathcal{F}$. Consider $\mathbb{P},\mathbb{Q} \in \mathbf{\pmb{P}} (\Omega)$ and the definitions of free energy and relative entropy from Definitions~\ref{def:FreeEnergy} and~\ref{def:Entropy} and under assumption $\mathbb{Q}<<\mathbb{P}$, the following equality holds:
			\begin{align} \label{inf_sup3}
				& \text{   } [\frac{1}{\rho} \log \int e^{\rho\mathcal{J}(\mathbb{S}_T,\mathbb{A}_T)} d\mathbb{P}  ]    =  \  \text{   }\sup_{\mathbb{Q}} [ \mathbb{E}_{\mathbb{Q}} \mathcal{J}(\mathbb{S}_T,\mathbb{A}_T) - \frac{1}{\rho} \text{D}_{\text{KL}} (\mathbb{Q}||\mathbb{P})) ], 
			\end{align}
			where $\mathcal{J}$ is a measurable function, i.e., $\mathcal{J}(\cdot):\Omega \rightarrow \mathbb{R}$ and can be thought of as a cost function.\footnote{This cost function may be obtained as a result of controlled or uncontrolled dynamics as per~\cite{theodorouinformation}.}
		\end{lemma}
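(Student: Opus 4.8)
The plan is to recognize \eqref{inf_sup3} as the Gibbs (Donsker--Varadhan) variational principle, which expresses the scaled log-partition functional as the Legendre dual of the \kl divergence. The whole argument reduces to exhibiting the maximizing measure explicitly and then invoking the non-negativity of relative entropy. Concretely, I would introduce the \emph{tilted} (Gibbs) measure $\mathbb{Q}^{\star}$ through its Radon--Nikodym derivative
\begin{align}
\frac{d\mathbb{Q}^{\star}}{d\mathbb{P}} = \frac{e^{\rho \mathcal{J}}}{Z}, \qquad Z \triangleq \int e^{\rho \mathcal{J}}\, d\mathbb{P},
\end{align}
which is a legitimate probability measure provided $0 < Z < \infty$; note that $\tfrac{1}{\rho}\log Z$ is precisely the left-hand side of \eqref{inf_sup3}.

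First I would fix an arbitrary competitor $\mathbb{Q} \ll \mathbb{P}$ (if $\mathbb{Q}\not\ll\mathbb{P}$ the right-hand side is $-\infty$ by Definition~\ref{def:Entropy} and the inequality is trivial) and apply the chain rule for Radon--Nikodym derivatives to write $\log \frac{d\mathbb{Q}}{d\mathbb{P}} = \log \frac{d\mathbb{Q}}{d\mathbb{Q}^{\star}} + \rho\mathcal{J} - \log Z$. Integrating against $\mathbb{Q}$ then yields the pivotal identity
\begin{align}
D_{KL}(\mathbb{Q}\,||\,\mathbb{P}) = D_{KL}(\mathbb{Q}\,||\,\mathbb{Q}^{\star}) + \rho\,\mathbb{E}_{\mathbb{Q}}\mathcal{J} - \log Z.
\end{align}
Rearranging gives $\mathbb{E}_{\mathbb{Q}}\mathcal{J} - \tfrac{1}{\rho}D_{KL}(\mathbb{Q}\,||\,\mathbb{P}) = \tfrac{1}{\rho}\log Z - \tfrac{1}{\rho}D_{KL}(\mathbb{Q}\,||\,\mathbb{Q}^{\star})$, which isolates the entire $\mathbb{Q}$-dependence into a single non-negative penalty term.

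The conclusion then follows in one line: since $D_{KL}(\mathbb{Q}\,||\,\mathbb{Q}^{\star}) \geq 0$, with equality if and only if $\mathbb{Q} = \mathbb{Q}^{\star}$ (Gibbs' inequality, itself a consequence of Jensen's inequality for the convex map $t \mapsto t\log t$), the objective on the right is bounded above by $\tfrac{1}{\rho}\log Z$ for every admissible $\mathbb{Q}$, and this bound is attained exactly at the tilted measure $\mathbb{Q}^{\star}$. Taking the supremum over $\mathbb{Q}$ therefore reproduces \eqref{inf_sup3}, and as a by-product identifies $\mathbb{Q}^{\star}$ as the optimal change of measure, which is precisely the optimally controlled distribution in the control-theoretic reading of the duality.

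I expect the only genuine obstacle to be technical rather than structural: one must verify that $Z$ is finite and strictly positive so that $\mathbb{Q}^{\star}$ is well-defined, and that $\mathbb{E}_{\mathbb{Q}}\mathcal{J}$ is finite so that the rearrangement of the identity is not an $\infty - \infty$ indeterminacy. Under the standing hypothesis $\mathbb{Q}\ll\mathbb{P}$ together with the $L^1$ requirement already built into Definition~\ref{def:Entropy}, and assuming $\mathcal{J}$ is bounded (or possesses exponential moments under $\mathbb{P}$, which holds for the quadratic-type costs of \eqref{quad_reward}), these conditions hold automatically; the remaining measure-theoretic bookkeeping is routine.
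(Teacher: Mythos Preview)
Your argument is correct and in fact more complete than the paper's. The paper proceeds by a direct application of Jensen's inequality to the concave logarithm: it writes $\log\int e^{\rho\mathcal{J}}\,d\mathbb{P} = \log\int e^{\rho\mathcal{J}}\tfrac{d\mathbb{P}}{d\mathbb{Q}}\,d\mathbb{Q}$, pulls the log inside to obtain $\log\mathbb{E}_{\mathbb{P}}[e^{\rho\mathcal{J}}]\geq \rho\,\mathbb{E}_{\mathbb{Q}}\mathcal{J} - D_{KL}(\mathbb{Q}\,\|\,\mathbb{P})$, and then divides by $\rho$. This establishes only the inequality $\tfrac{1}{\rho}\log Z \geq \sup_{\mathbb{Q}}[\cdots]$; the paper does not exhibit an attaining measure and so, strictly speaking, leaves the equality unproven.

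Your route---defining the tilted measure $\mathbb{Q}^{\star}$ and deriving the exact identity $\mathbb{E}_{\mathbb{Q}}\mathcal{J} - \tfrac{1}{\rho}D_{KL}(\mathbb{Q}\,\|\,\mathbb{P}) = \tfrac{1}{\rho}\log Z - \tfrac{1}{\rho}D_{KL}(\mathbb{Q}\,\|\,\mathbb{Q}^{\star})$---is the standard Donsker--Varadhan argument. It buys you both directions at once and, as you note, identifies the optimizer explicitly, which is precisely what the downstream Theorem~\ref{theorem:information_theory} needs. The paper's Jensen step is slightly quicker to write but less informative; your decomposition is the sharper statement. One caveat: your sentence ``bounded above by $\tfrac{1}{\rho}\log Z$'' tacitly assumes $\rho>0$; since the paper later takes $\rho=-(\lambda-1)<0$, you should flag that for $\rho<0$ the sign of $-\tfrac{1}{\rho}D_{KL}$ flips and the variational formula becomes an infimum rather than a supremum (the paper's own statement is loose on this point as well).
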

		\begin{proof}
			One can denote the expression of expectation $\mathbb{E}_{\mbbP}$ as a function of the expectation $\mathbb{E}_{\mbbQ}$ as per Definition~\ref{def:Entropy}. Precisely, 
			\begin{align} 
				\mathbb{E}_{\mbbP}\Big[ e^{\rho \mathcal{J}} \Big] &= \int e^{\rho \mathcal{J} } {d\mbbP} = \int e^{\rho \mathcal{J} } \frac{d\mbbP}{d\mbbQ}  {d\mbbQ} \\
				\implies \log \mathbb{E}_{\mbbP}\Big[ e^{\rho \mathcal{J}} \Big] &= \log \Big[ \int e^{\rho \mathcal{J}}  \frac{d\mbbP}{d\mbbQ}  {d\mbbQ} \Big] \geq \int \log \Big(  e^{\rho \mathcal{J} }  \frac{d\mbbP}{d\mbbQ}   \Big)  {d\mbbQ} \label{eq:jensen} \\
				\implies \log \mathbb{E}_{\mbbP}\Big[ e^{\rho \mathcal{J} (\mathbb{S},\mathbb{A}) } \Big] & \geq \int \Big( {\rho \mathcal{J}} + \log \frac{d\mbbP}{d\mbbQ} \Big) {d\mbbQ} = \int      {\rho \mathcal{J} }     {d\mbbQ}  - D_{KL} (\mbbQ|| \mbbP)  \label{finaleqLegendreTransform}
			\end{align}
		Eq.~\eqref{eq:jensen} is Jensen's inequality applied to terms of logarithm of integrals. Then, multiplying eq~\eqref{finaleqLegendreTransform} with $\frac{1}{\rho}$, and applying $\mathbb{E}_{\mbbQ}(\mathcal{J}) = \int \mathcal{J} d \mbbQ $, and changing signs, one can prove eq~\eqref{inf_sup3}.
		\end{proof}
  \begin{remark}
      Above lemma comes from~\cite[Theorem 1]{theodorou2015nonlinear} and the value of $\rho<0$ describes risk sensitive attribute of the optimization. For our analysis, we will associate $\rho<0$ with a parameter of distribution that governs the costs/rewards observations.
  \end{remark}
		The next theorem exploits the relationship between EM-based optimal control according to eq~\eqref{optiphi3} and results of Lemma~\ref{lemma:radon} to provide a novel interpretation.
		\begin{theorem} \label{theorem:information_theory}
			Assuming $\rho=-(\lambda-1)$, where $\lambda$ is a parameter as per Assumption~\ref{assumption:pdf_Yk}, the EM-based optimal control approach is a special case of infimum operation on the free energy (as per Definition~\ref{def:FreeEnergy})) i.e., 
			\begin{align}
			 \inf_{\mathbb{A}_T} \Big[ \log \int e^{( \rho \sum_{k=1}^T Y_k(\mathbf{s}_k,\mba_k) )} \Big] =  \sup_\phi \mathcal{L}(\phi,\hat{\phi}^i)
			\end{align}
		\end{theorem}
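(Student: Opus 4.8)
The plan is to derive the identity directly from the Gibbs variational principle of Lemma~\ref{lemma:radon}, using the exponential cost model of Assumption~\ref{assumption:pdf_Yk} to translate exponentiated costs into observation log-densities. The two facts I would lean on are: (i) with $\rho=-(\lambda-1)$ and $y_k=e^{-Y_k}$ from \eqref{pdf_of_y_k}, one has $e^{\rho Y_k}=y_k^{\lambda-1}$, and since $p(y_k)=\lambda\,y_k^{\lambda-1}$ by Lemma~\ref{lemma_prob2}, the pointwise relation $\rho Y_k=\log p(y_k)-\log\lambda$ holds; and (ii) Lemma~\ref{lemma:radon} rewrites the free energy as a supremum of an expected-cost-minus-relative-entropy functional.

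First I would substitute $\mathcal{J}=\sum_{k=1}^T Y_k(\mathbf{s}_k,\mathbf{a}_k)$ into the duality \eqref{inf_sup3} and multiply through by $\rho$. Because $\rho<0$, the scalar multiplication turns the supremum over $\mathbb{Q}$ into an infimum and yields $\log\int e^{\rho\sum_k Y_k}\,d\mathbb{P}=\inf_{\mathbb{Q}}\big[\rho\,\mathbb{E}_{\mathbb{Q}}\sum_k Y_k-D_{KL}(\mathbb{Q}\|\mathbb{P})\big]$. Next I would use fact (i) to replace $\rho\,\mathbb{E}_{\mathbb{Q}}\sum_k Y_k$ by $\mathbb{E}_{\mathbb{Q}}[\sum_k\log p(y_k)]-T\log\lambda=\mathbb{E}_{\mathbb{Q}}[\log p_\phi(\mathbb{Y}_T\mid\mathbb{S}_{T+1})]-T\log\lambda$, invoking the Markov factorization of the augmented dynamic model \eqref{ltv_eq}. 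Identifying $\mathbb{Q}$ with the smoothing posterior $p_{\hat\phi^i}(\mathbb{S}_{T+1}\mid\mathbb{Y}_T)$ and $\mathbb{P}$ with the latent-state law $p_\phi(\mathbb{S}_{T+1})$, the relative-entropy term supplies the remaining log-densities, $-D_{KL}(\mathbb{Q}\|\mathbb{P})=\mathbb{E}_{\mathbb{Q}}[\log p_\phi(\mathbb{S}_{T+1})]+H(\mathbb{Q})$, where $H(\mathbb{Q})$ is the ($\phi$-independent) entropy of $\mathbb{Q}$. Collecting the conditional and prior log-densities reconstitutes the joint, so the bracketed functional equals $\mathbb{E}_{\mathbb{Q}}[\log p_\phi(\mathbb{S}_{T+1},\mathbb{Y}_T)]+H(\mathbb{Q})-T\log\lambda$, which is exactly $\mathcal{L}(\phi,\hat\phi^i)$ from \eqref{Ltheta_kheta} up to the $\phi$-independent constants $H(\mathbb{Q})$ and $T\log\lambda$. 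Finally, identifying the free variation over the action sequence $\mathbb{A}_T$ (equivalently, over the trajectory measure induced by the policy) on the left with the parametric variation over $\phi$ on the right, and using Lemma~\ref{thm:main_theorem} to confirm that increasing $\mathcal{L}$ corresponds to decreasing expected cost, would convert the infimum of the free energy into $\sup_\phi\mathcal{L}(\phi,\hat\phi^i)$, giving the claimed equality.

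The main obstacle I anticipate is the measure bookkeeping that makes the correspondence exact rather than merely up to a constant: one must choose the reference measure $\mathbb{P}$ and the variational measure $\mathbb{Q}$ so that the entropy $H(\mathbb{Q})$ and the offset $T\log\lambda$ are genuinely absorbed (for instance into $\log p(\mathbf{s}_1)$ and the normalizers appearing in \eqref{ltv_surrogate}), and so that the Legendre supremum over $\mathbb{Q}$ and the EM supremum over $\phi$ can be made to coincide. A second delicate point is the sign orientation: I would have to verify carefully that the $\rho<0$ (risk-sensitive) regime flips the Legendre supremum into the $\inf_{\mathbb{A}_T}$ appearing on the left in the correct direction, and that the marginalization $\int\prod_k p(y_k)\,d\mathbb{P}=p_\phi(\mathbb{Y}_T)$ is licensed by the conditional-independence structure of the observations in \eqref{ltv_eq}. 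Once these identifications are pinned down, the remainder is the routine substitution already outlined above.
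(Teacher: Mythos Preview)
Your route is genuinely different from the paper's. The paper takes the \emph{posterior} $p_{\hat\phi^i}(\mathbb{S}_{T+1}\mid\mathbb{Y}_T)$ as the reference measure $\mathbb{P}$ and an arbitrary $\widetilde p(\mathbb{S}_{T+1})$ as the variational measure $\mathbb{Q}$, then argues by structural analogy: the Legendre bound $-D_{KL}(\widetilde p\,\|\,p_{\hat\phi^i})+\rho\,\mathbb{E}_{\widetilde p}\mathcal{J}$ has the same shape as the EM lower bound $l(\phi,\widetilde p)$ of \eqref{Lemma_1_property}, and after setting $\widetilde p=p_{\hat\phi^i}$ one invokes \eqref{eq:28_new} to pass to $\mathcal{L}(\phi,\hat\phi^i)$. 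You reverse the roles---$\mathbb{P}=p_\phi(\mathbb{S}_{T+1})$, $\mathbb{Q}=p_{\hat\phi^i}(\mathbb{S}_{T+1}\mid\mathbb{Y}_T)$---and use the pointwise identity $\rho Y_k=\log p(y_k)-\log\lambda$ from Lemma~\ref{lemma_prob2} to rebuild the complete-data log-likelihood directly. Your argument is more explicit about \emph{why} the choice $\rho=-(\lambda-1)$ matters (it is exactly what turns $e^{\rho Y_k}$ into $p(y_k)/\lambda$), which the paper never unpacks; on the other hand, the paper's choice keeps $\mathbb{P}$ independent of $\phi$, which is more natural for a free-energy reference measure, whereas in your setup $\mathbb{P}=p_\phi$ varies with the very parameter being optimized on the right-hand side.

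There is, however, a concrete gap in your sign bookkeeping. The step ``multiply \eqref{inf_sup3} by $\rho<0$ to flip $\sup_{\mathbb{Q}}$ into $\inf_{\mathbb{Q}}$'' does not yield a valid identity: the Donsker--Varadhan formula gives $\log\int e^{\rho\mathcal{J}}d\mathbb{P}=\sup_{\mathbb{Q}}\big[\rho\,\mathbb{E}_{\mathbb{Q}}\mathcal{J}-D_{KL}(\mathbb{Q}\|\mathbb{P})\big]$ regardless of the sign of $\rho$, because the statement of Lemma~\ref{lemma:radon} with the $1/\rho$ prefactor and an outer $\sup$ is only correct for $\rho>0$. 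You rightly flag the sign orientation as delicate; once the direction over $\mathbb{Q}$ is kept as a supremum, your reconstruction of $\mathcal{L}(\phi,\hat\phi^i)$ (up to the $\phi$-independent constants $H(\mathbb{Q})$ and $T\log\lambda$) goes through, and the outer $\inf_{\mathbb{A}_T}$ then lines up with the paper's final nested $\sup_\phi\inf_{d\mathbb{Q}}$ step via \eqref{eq:28_new}.
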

		\begin{proof}
			Exploiting Lemma~\ref{lemma:radon} and substituting, $ \mathbb{P} = p_{\hat{\phi}^i}(\mathbb{S}_{T+1}|\mathbb{Y}_T)$ and $\mathbb{Q}= \widetilde{p}(\mathbb{S}_{K})$ and utilizing~Definition~\ref{def:FreeEnergy}, i.e., $ \mathcal{E} \triangleq \mathbb{E}( \mathcal{J}(\mathbb{S}_T,\mathbb{A}_T))) = \log \int e^{( \rho \mathcal{J}(\mathbb{S}_T,\mathbb{A}_T) )} $, where $\mathcal{J}(\mathbb{S}_T,\mathbb{A}_T) \triangleq \sum_{k=1}^T Y_k(\mbs_k,\mba_k) $, thus upon expansion of the free energy term and exploiting the Jensens's inequality,
			\begin{align}
				& \log \int e^{( \rho \sum_{k=1}^T Y_k(\mathbf{s}_k,\mba_k) )}\\&  \geq \int \log \Big(e^{\rho  \mathcal{J}(\cdot)} \frac{dp_{\hat{\phi}^i}(\mathbb{S}_{T+1}|\mathbb{Y}_T)}{d\widetilde{p}(\mathbb{S}_T)}  \Big) {d\widetilde{p}(\mathbb{S}_T)} \\
				\implies & \log \int e^{( \rho \sum_{k=1}^T Y_k(\mathbf{s}_k,\mba_k) )} \nonumber \\
				&    \geq  \mathbb{E}_{\widetilde{p}(s_k)}  \Big[ \log \frac{dp_{\hat{\phi}^i}(\mathbb{S}_{T+1}|\mathbb{Y}_T)}{d\widetilde{p}(\mathbb{S}_T)}  + \rho \mathcal{J} ((\mathbb{S}_T,\mathbb{A}_T)) \Big] \\
				\implies &  \log \int e^{( \rho \sum_{k=1}^T Y_k(\mathbf{s}_k,\mba_k) )} \nonumber\\
				&   \geq -\text{D}_{\text{KL}} (\widetilde{p}(\mathbb{S}_k) || p_{\hat{\phi}^i} ((\mathbb{S}_{T+1}|\mathbb{Y}_T))  ) + \mathbb{E}_{\widetilde{p}(s_k)} [\rho \mathcal{J}((\mathbb{S}_T,\mathbb{A}_T))]  \\
				 \implies & \mathcal{E}(\cdot)  = \sup_{d\mathbb{Q}} \Big[- \text{D}_{\text{KL}}  (\widetilde{p}(s_k) || p_{\hat{\phi}^i} (\cdot)  ) + \mathbb{E}_{\widetilde{p}(s_k)} \rho \mathcal{J}(\cdot)   \Big] ,
			\end{align}
			Essentially, one can define the minimization problem as:
			\begin{align}
				& \log \int e^{( \rho \sum_{k=1}^T Y_k(s_k,a_k) )}  \\ &= \sup_{d\mathbb{Q}} \Big[ - \text{D}_{\text{KL}}  ( \mathbb{Q} || p_{\hat{\phi}^i} (\cdot)  ) + \mathbb{E}_{\widetilde{p}(s_k)} \rho \mathcal{J}(\cdot)   \Big] \\
				&   \implies \mathcal{E}(\cdot) =  \sup_{d\mathbb{Q}} \Big[ - \text{D}_{\text{KL}}  ( \mathbb{Q} || p_{\hat{\phi}^i} (\cdot)  ) + \mathbb{E}_{\widetilde{p}(s_k)} \rho \mathcal{J}(\cdot)   \Big] , \label{e_step_free_energy}
			\end{align}
			where the $(\cdot)$ is used for shortness of expression. The eq.~\eqref{e_step_free_energy} is of a very similar structure as that of eq.~\eqref{Lemma_1_property}, which shows a special case where the free energy definition can be reduced to E-step for optimal policy. 
		Further, assuming $\rho=-(\lambda-1)$, one can see,
			\begin{align}
			\mathcal{E} (\cdot)	= &  \inf_{d\mathbb{Q}}  \Big[ \text{D}_{\text{KL}}  (\widetilde{p}(s_k) || p_{\hat{\phi}^i} (\mathbb{S}_{T+1}|\mathbb{Y}_T)  )   + \mathbb{E}_{\widetilde{p}(s_k)} (\lambda-1) \mathcal{J} (\cdot) \Big].
			\end{align}
	If  $\widetilde{p} (\mathbf{s}_k) =p_{\hat{\phi}^i} (\mathbb{S}_{T+1}|\mathbb{Y}_T)  )  $, then, one can minimize the free energy $\mathcal{E}(\cdot)$ on set of control action $\{\mba_1, \cdots, \mba_T\}$. Without loss of generality, the set of control actions $\{\mba_1, \cdots, \mba_T\}$ can be represented in terms of parameters $\phi$, therefore,
			\begin{align}
		\inf_{\mba_1,\cdots,\mba_T} \mathcal{E}(\cdot)	& =	\inf_{\mba_1,\cdots,\mba_T}  \Big[  \inf_{d\mathbb{Q}} \big[ \mathbb{E}_{\mathbb{Q}} (\lambda-1) \mathcal{J} (\cdot) \big] \Big]  \\
    & =\underbrace{\sup_{\phi} \underbrace{\inf_{d\mathbb{Q}}  \Big[ l(\phi , \mathbb{Q} ) \Big]}_{\text{E-step of EM-SOC}}}_{\text{M-step of EM-SOC}}  \\
    & =  \sup_{\phi} \mathcal{L}(\phi, \hat{\phi}^i). \label{lasteq_connection}
			\end{align} 
   The eq.~\eqref{lasteq_connection} is a direct result of infimum operation over a set of $\phi$ on eq.~\eqref{eq:28_new}.
		\end{proof}
	}

	\begin{remark}
	The parameter $\rho$ plays the similar role as that of $-(\lambda-1)$ and this  theorem establishes that EM-based optimal control according to eq.~\eqref{optiphi3} is a a special case of minimization of free energy-relative entropy. Also, it can be noted that the expectation operator is taken with respect to some p.d.f. of the states which evolve according to the controlled dynamics (as per the linear Gaussian model \eqref{ltv_eq}), whereas \cite[Section 3]{theodorouinformation} deals both with controlled and uncontrolled dynamics. Nevertheless, this specialised reduction of optimization operations under some assumptions can be reduced to be equivalent to optimization . This also provides novel interpretation of the duality between the already existing information-theoretic-based min-max SOC approach (from Eq. \eqref{inf_sup3}) and that of the proposed objective function (according to Eq. \eqref{optiphi3})
	\end{remark}
}
\begin{figure*}[t]
		\begin{subfigure}{\figsmallwidth}
			\includegraphics[width=\columnwidth]{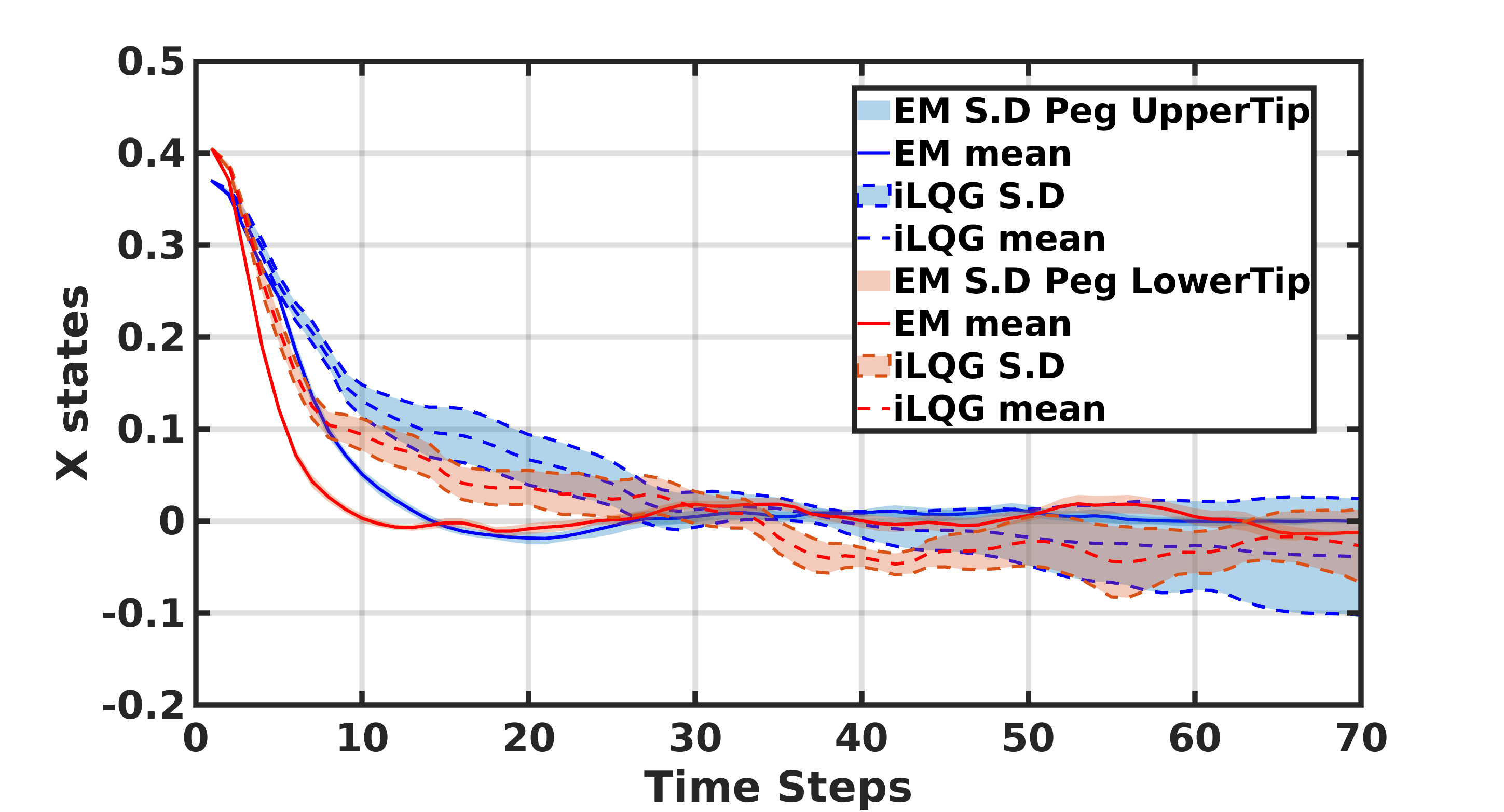}%
		\end{subfigure}
		\begin{subfigure}{\figsmallwidth}
			\includegraphics[width=\columnwidth]{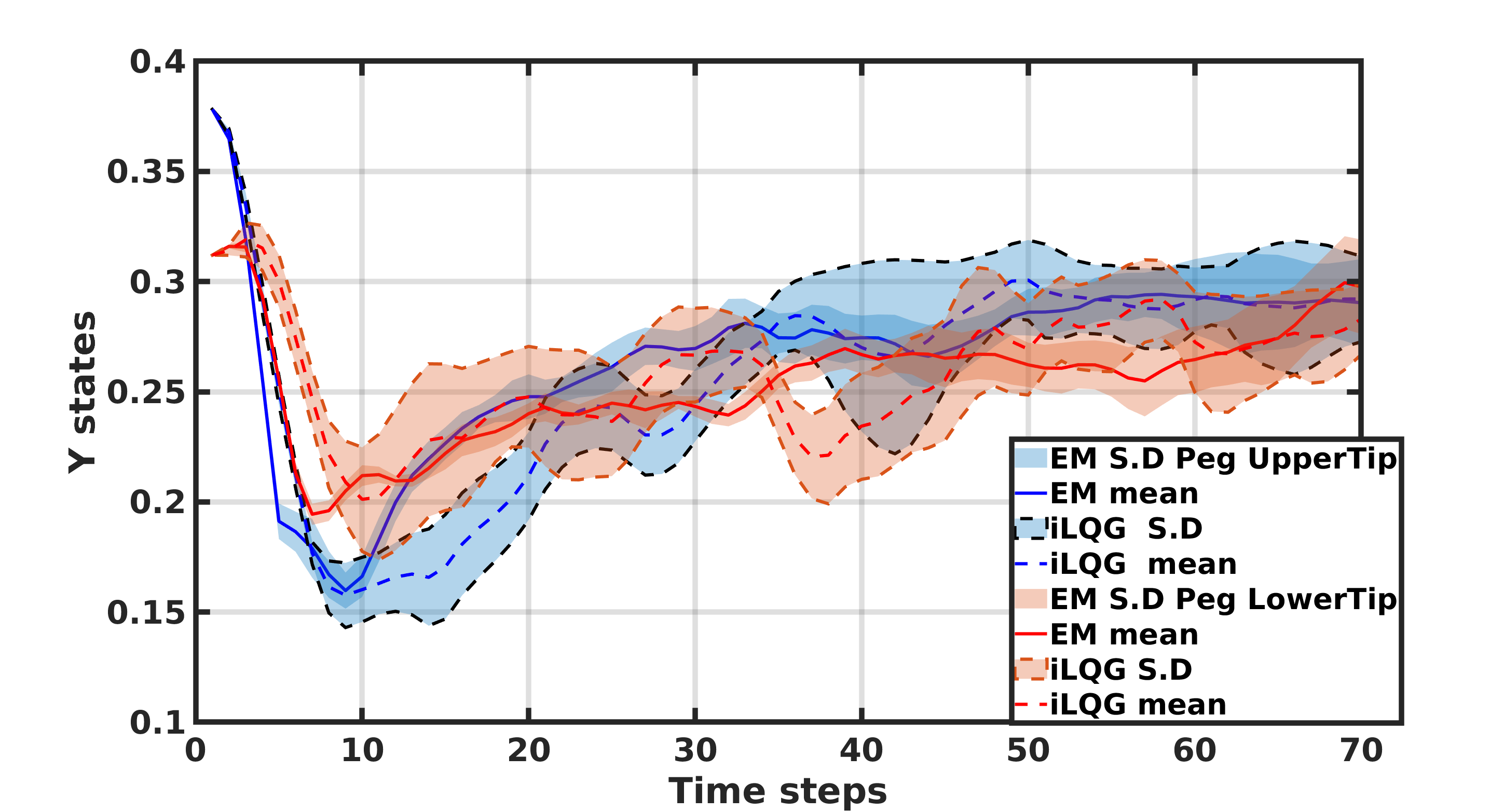}%
		\end{subfigure} 
		\begin{subfigure}{\figsmallwidth}
			\includegraphics[width=\columnwidth]{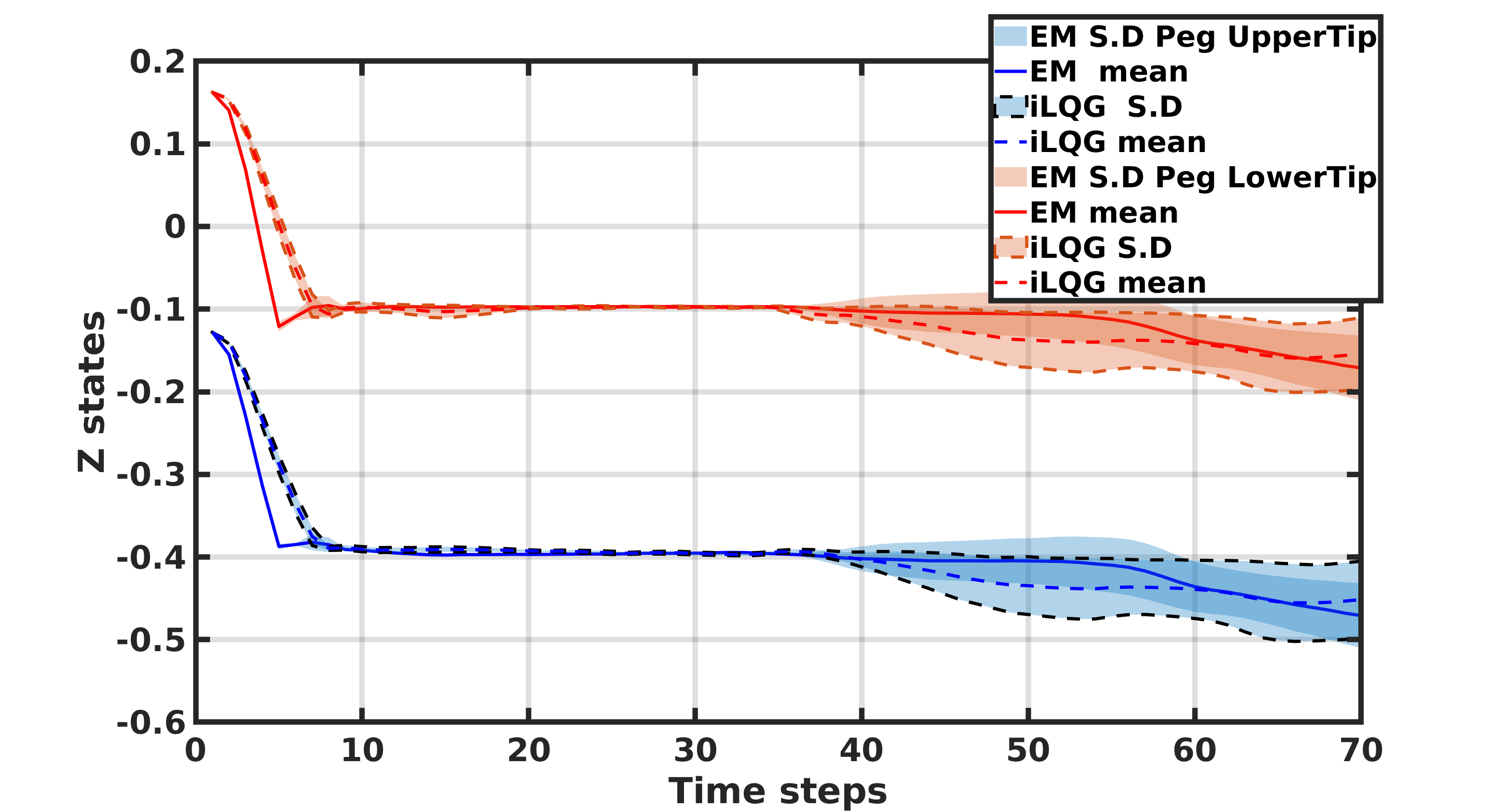}%
		\end{subfigure}
		\caption{ Three states i.e., $[x,y,z]$ positions of the peg which is connected to the end-effector of $\texttt{PR2}$ robot. Each plot of this figure is divided into two coloured plots that shows the mean and variance of the state trajectories. The dashed lines represents iLQG trajectories and solid lines represents that of parameters obtained from eq~\eqref{proposition:lastEq}.}
		\label{fig:contribution3StateNoise}
	\end{figure*}
 	\begin{figure*}[t]
		\begin{subfigure}{\figsmallwidth}
			\includegraphics[width=\columnwidth]{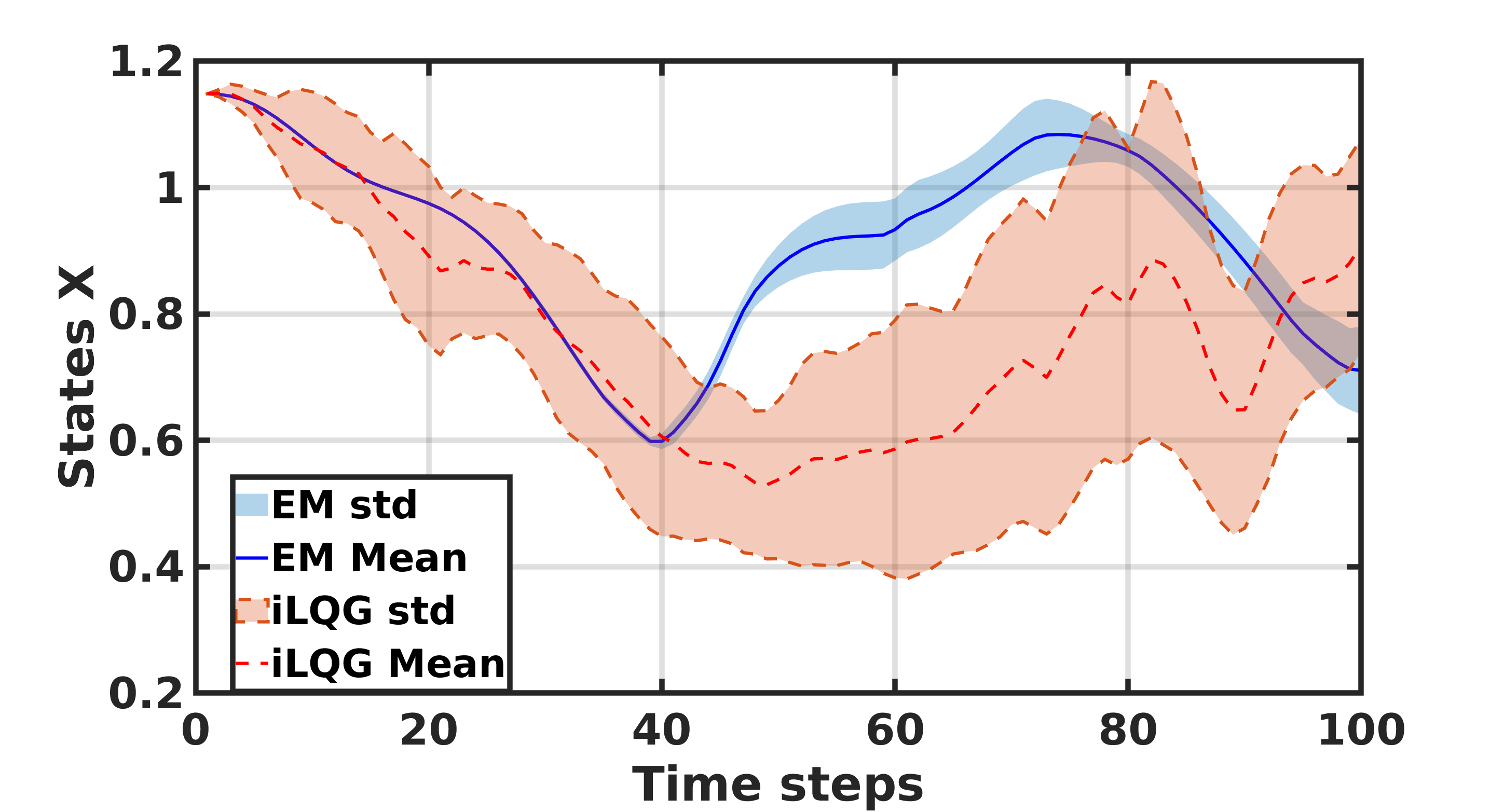}%
		\end{subfigure} 
		\begin{subfigure}{\figsmallwidth}
			\includegraphics[width=\columnwidth]{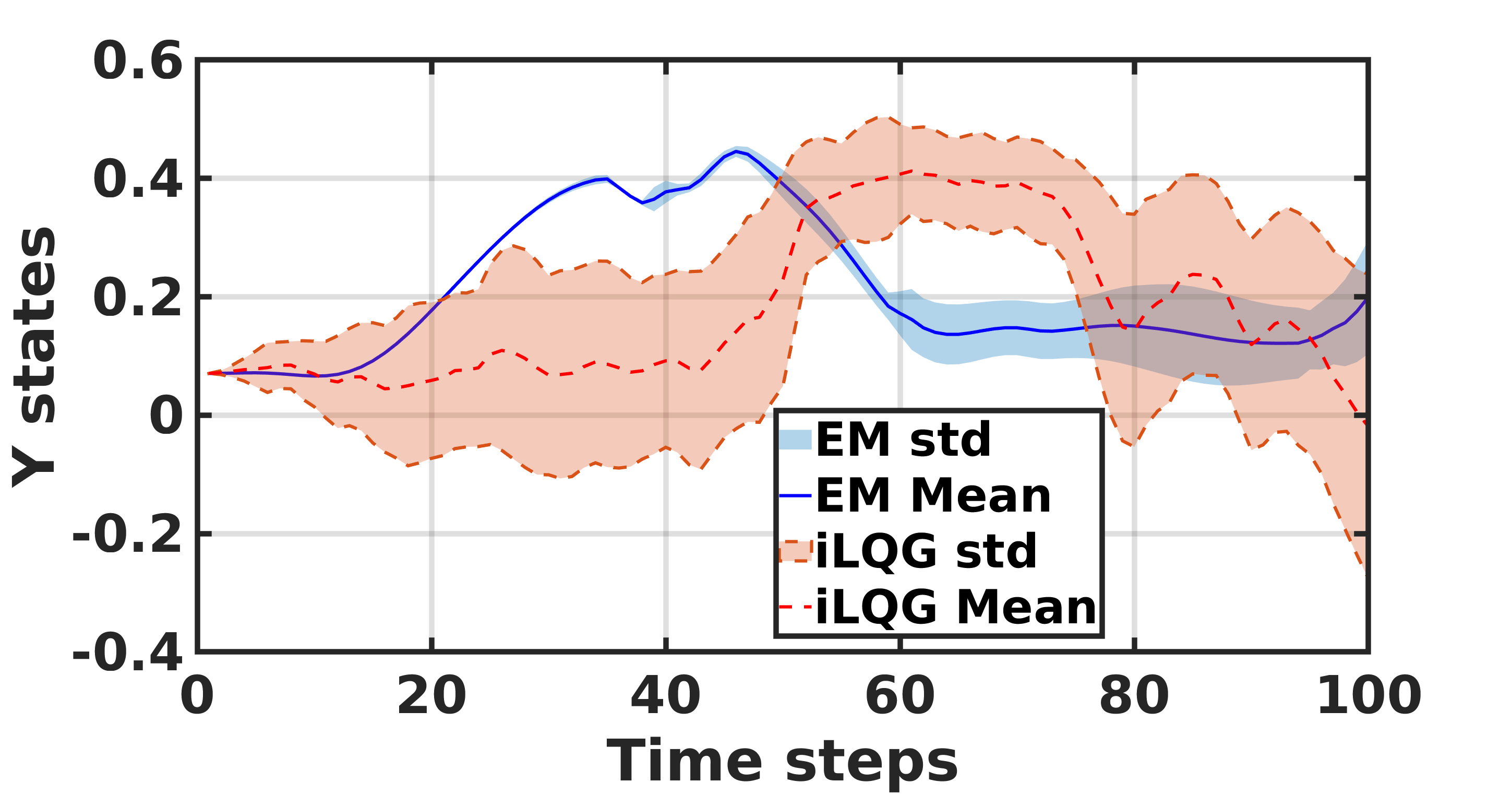}%
		\end{subfigure} 
		\begin{subfigure}{\figsmallwidth}
			\includegraphics[width=\columnwidth]{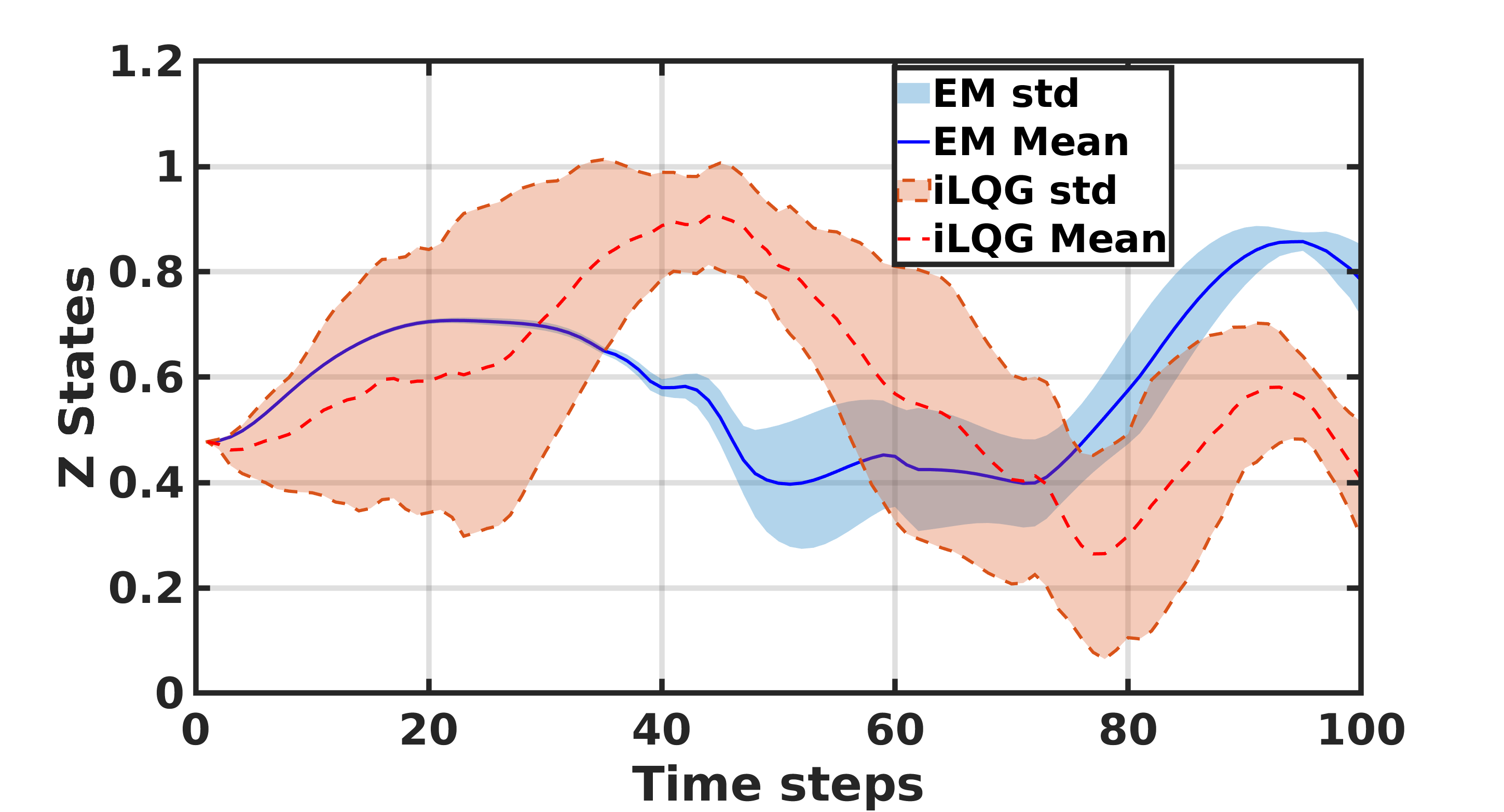}%
		\end{subfigure}
		\caption{Evolution of three states corresponding to position $[x,y,z]$ of the door-opening task, where the positions represent that of the hook which is connected to the end-effector. Each plot of this figure is divided into two coloured plots that shows the mean and variance of the state trajectories. The dashed lines represent $\phi$ obtained from iLQG and solid lines represent trajectories obtained from $\phi^*$ of eq~\eqref{proposition:lastEq}.}.
		\label{fig:contributionStateNoiseDoorOpening}
	\end{figure*}
\section{Experimental Results} \label{sec:results3}
We evaluated our method on simulations conducted on a platform known as \texttt{MUJOCO}; please see \cite{mujoco} for details of the kinematics and dynamics of the models embedded in the framework which utilizes a system of rigid links with noisy motors at the joints. Inside the \texttt{MUJOCO} framework we used a model of a \texttt{PR2} robot to try to address two tasks as mentioned below:
\begin{enumerate}
	\item Door opening task: This task requires opening a door with a $6$ degree of freedom (DOF) $3D$ robotic arm. In this task the arm must grasp the handle and then pull the door to a target angle, which is challenging for model-based methods due to the complicated contact dynamics between the hand and the handle. A contact needs to be established before the door can be opened. The cost function utilized is a weighted combination of the distance of the end effector to the door handle with that of the angle of the door. The right sub-figure of Fig~\ref{fig:contribution3MujocoSnaps} shows an instance of the simulation.
	\item Peg in a hole task: This particular task requires inserting the peg which is attached to the end effector of the \texttt{PR2} robot in a hole which is created in a board placed infront of \texttt{PR2} manipulator. The left sub-figure of Fig~\ref{fig:contribution3MujocoSnaps} depicts an instance of the task of inserting the peg in the hole.
\end{enumerate}
We consider sensor noise $\mbepsilon^s_k$ to be Gaussian in our experiments, i.e.,
\begin{align}
	\mathbf{s}_k &= \mathbf{x}_k+ \mbepsilon^s_k ,\;   \mbepsilon^s_k \sim \mathcal{N} (0,  \rho^2 \mathbf{I}_{n_s} ), 
	\label{cov_a_k}
\end{align}
where $\mathbf{x}_k$ is the real state. 
The sensor noise $\mbepsilon^s_k$  is indirectly propagated into the design of control action $\mathbf{a}_k$ that  forms the real  input 
to the system. For conducting the experiments I utilized $\rho^2=0.1$ and followed the steps as mentioned in
Algorithm~\ref{alg:DPSOCEM}.

{\it \textbf{Cost:}} 
For the door opening task, the mean$\pm$std-dev of the expected real  costs
$\sum_{l=1}^T Y_l(\mathbf{s}_l,\mathbf{a}_l)$ for $14$ iterations are depicted in the bottom two sub-plots of Fig.~\ref{fig:contribution3CostILQGEMILQG}. The datasets take into account $20$ samples; the bottom left utilizes Algorithm~\ref{alg:DPSOCEM} whereas the bottom right is when a particular baseline, i.e., iLQG, is used for $14$ continuous iterations.
The noise parameter was set as $\rho^2=0.1$. The continuous line in the bottom sub-plots in blue represents the mean of the cost for a particular iteration and the box plots represent a variation of the costs for iterations. The bottom right plot provides expected costs for a particular baseline i.e., iLQG for $14$ iterations and the bottom left plot shows the course of expected costs if on fourth iteration EM-based parameters $\hat{\phi}^{i+1}$ is utilized and then the iLQG is repeated from the $6^{th}$ iterations. A very interesting result can be seen as compared to running full iLQG iterations for $14$ iterations, without the intervention of EM-based optimization. The top plot shows the average min angle in radians of the door hinge  to which the door is pulled by the hook attached to the end effector. It can be clearly seen that the EM samples of the door provide a better minimum angle of opening compared to the baseline.

For the peg in the hole task, the mean$\pm$std-dev of the expected real costs
$\sum_{l=1}^T Y_l(\mathbf{s}_l,\mathbf{a}_l)$ for $13$ iterations for this task is shown in~\ref{fig:contributionStatesPeg}. I utilized $\phi^{12}$ as an intialization for obtaining $\hat{\phi}^{i+1}$ based on~\eqref{proposition:lastEq}

{\it \textbf{Trajectories:}} We compare the true state trajectories, i.e., $ \{\textbf{x}_1,..,\textbf{x}_T\}$ produced on the real platform excited 
by the control actions with the parameters obtained through repeated EM iterations. In this experiment, iLQG was used 
as the baseline. We performed 20 experiments and each experiment ran for 14 subsequent iterations. 
The mean$\pm$std-dev of the trajectories $[x_k, y_k, z_k]$, $k=1,\cdots,T$, for the position of the end-effector is illustrated in Fig.~\ref{fig:contribution3StateNoise} and Fig.~\ref{fig:contributionStateNoiseDoorOpening} for the tasks of peg in the hole and door opening, respectively.
It  can be observed that evolution of the state trajectories 
demonstrates better performance achieved by EM iterations in terms of less stochasticity in the states. In particular, there is a high variance in the dashed trajectories compared to the solid lines. The advantage gained by the EM iterations can be explained by the less deviation caused by noise. 
It has been shown that the exploitation functionality of the EM approach 
can effectively reduce the stochasticity in control policies. Fig.~\ref{fig:contributionStateNoiseDoorOpening} shows the evolution of the end effector position, i.e., $[x,y,z]$ states of the hook. All plots are divided into solid and dashed lines and the respective deviation is from a set of 20 experiments conducted from the same initial condition with a total horizon of the experiment being $T=100$. The superiority of the DPSOC-EM is noticeable in terms of the uncertainty reduction by the generated optimal control estimates. A similar profile, can be seen resulting from Fig~\ref{fig:contribution3StateNoise}, where each plot exhibits two variables corresponding to the top and bottom end of the peg; e.g., the top plot of Fig~\ref{fig:contribution3StateNoise} shows the $x$ position of the two ends of the peg, where the solid lines and respective deviations (in shade) correspond to the EM parameter estimates and the dash lines and the deviations correspond to the iLQG parameter estimates. This figure show the temporal evolution of states for a total time $T=70$.

\textbf{\textit{Performance comparison in the video:}} 
For the two tasks, the simulation video for the comparison of iLQG and EM-based optimal control can be visualized via gif file that can be found in the following \href{https://github.com/proxymallick/EM-Guided-Policy-Search}{Github} link.
In the above shown video links, the left video corresponds to the baseline policy and the right one corresponds to the parameters derived as a result of the approach mentioned in this chapter.
It can be visualised in the simulation (for both tasks) that the stochasticity of end effector is evident from the jittery nature of the joints of the \texttt{PR2} robot when its trying to accomplish the two tasks.

{
	{\it \textbf{Exploitation efficiency in control actions:}} 
	Fig~\ref{fig:contribution3ControlNoise} shows $6$ plots each corresponding to the control actions of joints of the PR$2$ robot for the door opening task as well as the peg in the hole tasks depicted in Fig~\ref{fig:contribution3MujocoSnaps}. Referring to Fig~\ref{fig:contribution3ControlNoise}, one can see that each sub-plot contains $4$ box plots with two different color. Each color shade is specific to each task, i.e., blue corresponds to Peg in hole task and red corresponds to door opening task. Each box plot shows a variation of actions, $a_k^{n} \quad \forall k  \in \{1,\cdots, T\} \quad \text{and} \quad \forall {n} \in \{1,\cdots, 40\},$ where $n$ denotes samples/trajectories.
Each color has two pairs of results, $\phi^i$ and $\hat{\phi}^{i+1}$ denoting the baseline iLQG and EM parameters respectively. One can notice that the superiority in the proposed optimization approach is evident from the lesser deviations in-terms of control action values. The observed improvement in control actions 
	is attributed to the exploitation mechanism in the DPSOC-EM approach
	that significantly reduces the stochasticity (variation) in the control policies over multiple samples over the the total horizon.}

The simulations for Step 1, 2 and 5 were performed using the 64-bit \textsf{Ubuntu} 16.04 OS on  \textsf{Dell Alienware} 15 R2 
of  \textsf{Intel} Core i7-6700HQ CPU @ 2.60GHz.
The simulations for Step 3 and 4 were conducted using multiple 2.6 GHz \textsf{Intel Xeon Broadwell (E5-2697A v4)} processors on the high performance computing (HPC) grid located at The University of Newcastle.

\section{Conclusion}
This paper focuses on two major parts; a) provides an alternate version of approximate DDP i.e., DPSOC-EM, b) delivers novel connections between information theory and stochastic control dualities with optimisation of mixture likelihood. The new approximate control paradigm has a similar analytical architecture as that mentioned in the paper~\cite[Section 4]{mallick2022stochastic}. As a result of this structure, the analytical guarantees of surrogate function holds. It has also been shown empirically to generate superior estimates as compared to baselines, e.g., the superiority of the DPSOC-EM is noticeable in terms of the uncertainty reduction (both in the control action and the states) by the generated estimates and also the reduction of the overall cost function.

\appendix

\subsection{Some EM Lemmas}

 \begin{lemma} \label{lemma:EM_proof_new}
Consider $L_\phi ({\mathbb{Y}_T})$ and $\mathcal{L}(\phi,\hat{\phi}^{i})$ defined 
in \eqref{Lphi} and \eqref{Ltheta_kheta}, respectively, with 
$\hat{\phi^i}$ a  known parameter estimate.
Let \begin{align}
     l(\phi, \widetilde{p}(\mathbb{S}_{T+1}) ) =\mathbb{E}_{ \widetilde{p}(\mathbb{S}_{T+1})} 
    \log  \frac{  {p_\phi (\mathbb{S}_{T+1},\mathbb{Y}_T)} }{  \widetilde{p}(\mathbb{S}_{T+1})} \label{lphi}
\end{align}
for any distribution  $\widetilde{p}(\mathbb{S}_{T+1})$.
One has 
\begin{align*}
  L_\phi ({\mathbb{Y}_T}) \geq  l(\phi, \widetilde{p}(\mathbb{S}_{T+1}) ), 
\end{align*}
that is, $ l(\phi, \widetilde{p}(\mathbb{S}_{T+1}) ) $ is a lower bound of $ {  L_\phi ({\mathbb{Y}_T})} $.
Moreover,  let \begin{align} \label{eq:166}
\widetilde{p}(\mathbb{S}_{T+1}) = p_{ \hat{\phi}^i} (\mathbb{S}_{T+1}|\mathbb{Y}_T)  ,
\end{align}
and  denote
\begin{align}
 l(\phi,  \hat{\phi}^i) = l(\phi, p_{\hat{\phi}^i} (\mathbb{S}_{T+1}|\mathbb{Y}_T) ). \label{lphiphii}
 \end{align}
One has
\begin{align}
\hat{ \phi}^{i*}  =  \arg\max_{\phi}  l(\phi,  \hat{\phi}^i) = \arg\max_{\phi} \mathcal{L} (\phi, \hat\phi^i). \label{maxlmaxcalL}
\end{align}
\end{lemma}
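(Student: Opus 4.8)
The plan is to recognize Lemma~\ref{lemma:EM_proof_new} as the standard variational (evidence lower bound) decomposition underlying EM, and to prove it in three short movements: a Jensen step for the lower bound, a likelihood--KL decomposition that identifies the gap exactly, and an entropy-splitting argument that reduces the maximization of $l$ to that of $\mathcal{L}$.

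First, for the lower-bound claim, I would start from $L_\phi(\mathbb{Y}_T) = \log p_\phi(\mathbb{Y}_T) = \log \int p_\phi(\mathbb{S}_{T+1}, \mathbb{Y}_T)\, d\mathbb{S}_{T+1}$, insert the arbitrary density $\widetilde p(\mathbb{S}_{T+1})$ to write the integrand as $\widetilde p(\mathbb{S}_{T+1})\, \tfrac{p_\phi(\mathbb{S}_{T+1}, \mathbb{Y}_T)}{\widetilde p(\mathbb{S}_{T+1})}$, and apply Jensen's inequality to the concave $\log$, exactly as in the step~\eqref{eq:jensen} of Lemma~\ref{lemma:radon}. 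Pulling the $\log$ inside the $\widetilde p$-expectation yields $L_\phi(\mathbb{Y}_T) \geq \mathbb{E}_{\widetilde p(\mathbb{S}_{T+1})} \log \tfrac{p_\phi(\mathbb{S}_{T+1}, \mathbb{Y}_T)}{\widetilde p(\mathbb{S}_{T+1})} = l(\phi, \widetilde p(\mathbb{S}_{T+1}))$, which is the asserted bound.

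Second, to make the gap explicit and thereby set up the equality of maximizers, I would factor the joint density as $p_\phi(\mathbb{S}_{T+1}, \mathbb{Y}_T) = p_\phi(\mathbb{S}_{T+1}\mid\mathbb{Y}_T)\, p_\phi(\mathbb{Y}_T)$ inside $l$. The factor $\log p_\phi(\mathbb{Y}_T)$ leaves the $\widetilde p$-expectation as $L_\phi(\mathbb{Y}_T)$ (it does not depend on $\mathbb{S}_{T+1}$), and the remaining term is precisely $-D_{KL}\big(\widetilde p \,\|\, p_\phi(\mathbb{S}_{T+1}\mid\mathbb{Y}_T)\big)$, giving the identity $l(\phi, \widetilde p) = L_\phi(\mathbb{Y}_T) - D_{KL}\big(\widetilde p \,\|\, p_\phi(\mathbb{S}_{T+1}\mid\mathbb{Y}_T)\big)$. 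This re-proves the bound through nonnegativity of the KL divergence. For the maximizer equivalence~\eqref{maxlmaxcalL}, I would substitute the specific choice $\widetilde p(\mathbb{S}_{T+1}) = p_{\hat\phi^i}(\mathbb{S}_{T+1}\mid\mathbb{Y}_T)$ from~\eqref{eq:166} and split the logarithm of the ratio, obtaining $l(\phi, \hat\phi^i) = \mathbb{E}_{\hat\phi^i}\big(\log p_\phi(\mathbb{S}_{T+1}, \mathbb{Y}_T)\mid\mathbb{Y}_T\big) - \mathbb{E}_{\hat\phi^i}\big(\log p_{\hat\phi^i}(\mathbb{S}_{T+1}\mid\mathbb{Y}_T)\mid\mathbb{Y}_T\big)$. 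The first term is exactly $\mathcal{L}(\phi, \hat\phi^i)$ by definition~\eqref{Ltheta_kheta}, while the second is the conditional entropy of $p_{\hat\phi^i}(\mathbb{S}_{T+1}\mid\mathbb{Y}_T)$, a constant with respect to the optimization variable $\phi$. Since $l(\phi,\hat\phi^i)$ and $\mathcal{L}(\phi,\hat\phi^i)$ differ only by this $\phi$-independent additive term, they share the same argmax, establishing $\hat{\phi}^{i*} = \arg\max_\phi l(\phi,\hat\phi^i) = \arg\max_\phi \mathcal{L}(\phi,\hat\phi^i)$.

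The argument is largely routine; the only points requiring care are the regularity conditions that legitimize the Jensen step and guarantee that the expectations and the KL term are finite, namely absolute continuity of $\widetilde p$ with respect to $p_\phi(\cdot\mid\mathbb{Y}_T)$ and the $L^1$ integrability already invoked in Definition~\ref{def:Entropy}. Under the Gaussian model~\eqref{ltv_eq} these hold automatically, since all conditional densities are strictly positive, so the only genuine conceptual step is recognizing that the entropy term arising in the final splitting is independent of $\phi$ and hence irrelevant to the maximization.
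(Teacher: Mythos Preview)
Your proposal is correct and follows essentially the same route as the paper's own proof: a Jensen step to obtain the ELBO lower bound, the $p_\phi(\mathbb{S}_{T+1},\mathbb{Y}_T)=p_\phi(\mathbb{S}_{T+1}\mid\mathbb{Y}_T)p_\phi(\mathbb{Y}_T)$ factorization to expose the KL gap, and finally the observation that under the choice~\eqref{eq:166} one has $l(\phi,\hat\phi^i)=\mathcal{L}(\phi,\hat\phi^i)$ plus a $\phi$-independent entropy term, so the two argmaxes coincide. Your added remark on the regularity conditions (absolute continuity and integrability, automatic under the Gaussian model~\eqref{ltv_eq}) is a welcome clarification that the paper leaves implicit.
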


\begin{proof} 
One can start by expanding $L_\phi ({\mathbb{Y}_T})$ as follows 
\begin{align}
    L_\phi(\mathbb{Y}_T)  & =  \log  p_\phi(\mathbb{Y}_T)  \nonumber\\
    &=  \log  \int p_\phi(\mathbb{Y}_T,\mathbb{S}_{T+1}) d\mathbb{S}_{T+1} \nonumber\\
    & = \log  \int \widetilde{p} (\mathbb{S}_{T+1}) \frac{p_\phi (\mathbb{S}_{T+1},\mathbb{Y}_T)}{\widetilde{p}(\mathbb{S}_{T+1})} d\mathbb{S}_{T+1}  \nonumber\\
    &  \geq \int \widetilde{p}(\mathbb{S}_{T+1}) \log  \frac{p_\phi (\mathbb{S}_{T+1},\mathbb{Y}_T)}{\widetilde{p}(\mathbb{S}_{T+1})} d\mathbb{S}_{T+1}  \nonumber \\
    &  =\mathbb{E}_{ \widetilde{p}(\mathbb{S}_{T+1})} 
     \log  \frac { {p_\phi (\mathbb{S}_{T+1},\mathbb{Y}_T)} }{  \widetilde{p}(\mathbb{S}_{T+1})} \nonumber\\
      &  = l(\phi, \widetilde{p}(\mathbb{S}_{T+1})),\label{loerbound_eq}
\end{align}
which holds for any distribution $\widetilde{p}(\mathbb{S}_{T+1})$. Nevertheless, one requires $\widetilde{p}(\mathbb{S}_{T+1})$ to deliver an optimal bound such that, at a known value (current estimate) of $\phi$, the bound $l(\phi, \widetilde{p}(\mathbb{S}_{T+1}))$ touches $ L_\phi(\mathbb{Y}_T)$.  We can further expand the lower bound $l(\phi, \widetilde{p}(\mathbb{S}_{T+1}))$ as follows,
\begin{align}
   & l(\phi, \widetilde{p}(\mathbb{S}_{T+1})) \nonumber \\
     = & \int  \widetilde{p}(\mathbb{S}_{T+1}) \log   \frac{p_\phi(\mathbb{S}_{T+1}, \mathbb{Y}_{T})}{ \widetilde{p}(\mathbb{S}_{T+1})} d\mathbb{S}_{T+1} \nonumber \\
     = & \int  \widetilde{p}(\mathbb{S}_{T+1}) \log   \frac{p_\phi(\mathbb{S}_{T+1}| \mathbb{Y}_{T}) p_\phi(\mathbb{Y}_T) }{ \widetilde{p}(\mathbb{S}_{T+1})} d\mathbb{S}_{T+1} \nonumber \\
     =  & \int  \widetilde{p}(\mathbb{S}_{T+1}) \log   \frac{p_\phi(\mathbb{S}_{T+1}| \mathbb{Y}_{T})  }{ \widetilde{p}(\mathbb{S}_{T+1})} d\mathbb{S}_{T+1} \nonumber  \\
       &+ \int  \widetilde{p}(\mathbb{S}_{T+1}) \log   {p_\phi( \mathbb{Y}_{T})  } d\mathbb{S}_{T+1}  \nonumber \\
        = &  \mathbb{E}_{ \widetilde{p}(\mathbb{S}_{T+1})} \log   \frac{p_\phi(\mathbb{S}_{T+1}| \mathbb{Y}_{T})  }{ \widetilde{p}(\mathbb{S}_{T+1})} + \mathbb{E}_{\widetilde{p}(\mathbb{S}_{T+1})} {  \log  p_\phi (\mathbb{Y}_T)} \nonumber  \\
       =&  - \text{D}_{\text{KL}} (\widetilde{p}(\mathbb{S}_{T+1} ) || p_\phi (\mathbb{S}_{T+1}|\mathbb{Y}_T) + \log  p_\phi (\mathbb{Y}_T) \label{Lemma_1_property} . \end{align}
It indicates that the tight optimal distribution \eqref{eq:166}  is  clearly is a consequence of $\text{D}_{\text{KL}} (\cdot) =0$ 
as $\text{D}_{\text{KL}} \geq 0$. So, the  selection \eqref{eq:166} basically utilizes the known estimate $\hat{\phi}^i$ to solve for $\widetilde{p}(\mathbb{S}_{T+1})$. 
This step of solving for optimal distribution of $\widetilde{p}(\mathbb{S}_{T+1})$ is called the E-step.  

With \eqref{eq:166},  we can denote \eqref{lphiphii}, which is re-calculated below, 
noting \eqref{Ltheta_kheta} and \eqref{lphi},
\begin{align}
    l(\phi,\hat{\phi}^i) =&  \mathbb{E}_{ {\hat{\phi}^i}}  (\log  {p_\phi (\mathbb{S}_{T+1},\mathbb{Y}_T)}|\mathbb{Y}_T)
     \nonumber\\
    & - \mathbb{E}_{p_{\hat{\phi}^i} ( \mathbb{S}_{T+1}|\mathbb{Y}_T)}  (\log  p_{\hat{\phi}^i} (\mathbb{S}_{T+1}|\mathbb{Y}_T))      \nonumber\\
 =& \mathcal{L} (\phi, \hat{\phi}^i) - \mathbb{E}_{p_{\hat{\phi}^i} ( \mathbb{S}_{T+1}|\mathbb{Y}_T)}  (\log  p_{\hat{\phi}^i} (\mathbb{S}_{T+1}|\mathbb{Y}_T)).  \label{eq:28_new}
\end{align}
Maximizing $l(\phi,\hat{\phi}^i)$ with respect to $\phi$ is exactly equivalent to maximizing $\mathcal{L} (\phi,\hat{\phi}^i)$ as the second term on the RHS of \eqref{eq:28_new} is independent on $\phi$, that is, 
\eqref{maxlmaxcalL} is proved.   In particular, the subsequent step for 
maximizing $\mathcal{L} (\phi, \hat\phi^i)$ is called the M-step.
\end{proof}
\begin{lemma}
  \label{lemma:EM_proof}
Suppose the parameter $\hat{\phi}^{i+1}$ is produced in an iteration, 
one that 
\begin{align}
L_{\hat{\phi}^{i+1}} (\mathbb{Y}_T)    -  L_{\hat{\phi}^i} (\mathbb{Y}_T) 
\geq \mathcal{L} (\hat{\phi}^{i+1},\hat{\phi}^i) -  \mathcal{L} (\hat{\phi}^i,\hat{\phi}^i) \label{LLcalLcalL}
\end{align}
where the equality holds if 
$p_{\hat{\phi}^{i+1}} (\mathbb{S}_{T+1}|\mathbb{Y}_T) =  p_{\hat\phi^{i}} (\mathbb{S}_{T+1}|\mathbb{Y}_T)$ . 
\end{lemma}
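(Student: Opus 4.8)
The plan is to reduce everything to the two identities already established in Lemma~\ref{lemma:EM_proof_new}: the decomposition \eqref{Lemma_1_property} of the lower bound $l(\phi,\widetilde{p}(\mathbb{S}_{T+1}))$ into a relative-entropy term plus the observation log-likelihood, together with the relation \eqref{eq:28_new} showing that $l(\phi,\hat{\phi}^i)$ and $\mathcal{L}(\phi,\hat{\phi}^i)$ differ only by a term independent of $\phi$. The inequality \eqref{LLcalLcalL} then follows purely from nonnegativity of the \kl divergence.

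First I would specialize \eqref{Lemma_1_property} to the E-step choice $\widetilde{p}(\mathbb{S}_{T+1}) = p_{\hat{\phi}^i}(\mathbb{S}_{T+1}|\mathbb{Y}_T)$ of \eqref{eq:166}, rearranging it into the exact identity $L_\phi(\mathbb{Y}_T) = l(\phi,\hat{\phi}^i) + \text{D}_{\text{KL}}(p_{\hat{\phi}^i}(\mathbb{S}_{T+1}|\mathbb{Y}_T)\,\|\,p_\phi(\mathbb{S}_{T+1}|\mathbb{Y}_T))$, valid for every $\phi$. Evaluating this identity at $\phi=\hat{\phi}^i$ kills the \kl term (a distribution has zero divergence from itself), so $L_{\hat{\phi}^i}(\mathbb{Y}_T)=l(\hat{\phi}^i,\hat{\phi}^i)$; evaluating it at $\phi=\hat{\phi}^{i+1}$ leaves the residual divergence $\text{D}_{\text{KL}}(p_{\hat{\phi}^i}(\mathbb{S}_{T+1}|\mathbb{Y}_T)\,\|\,p_{\hat{\phi}^{i+1}}(\mathbb{S}_{T+1}|\mathbb{Y}_T))$.

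Next I would subtract the two evaluations and substitute \eqref{eq:28_new} to convert the difference $l(\hat{\phi}^{i+1},\hat{\phi}^i)-l(\hat{\phi}^i,\hat{\phi}^i)$ into $\mathcal{L}(\hat{\phi}^{i+1},\hat{\phi}^i)-\mathcal{L}(\hat{\phi}^i,\hat{\phi}^i)$, since the additive entropy term $\mathbb{E}_{p_{\hat{\phi}^i}(\mathbb{S}_{T+1}|\mathbb{Y}_T)}(\log p_{\hat{\phi}^i}(\mathbb{S}_{T+1}|\mathbb{Y}_T))$ cancels. This yields the clean identity $L_{\hat{\phi}^{i+1}}(\mathbb{Y}_T)-L_{\hat{\phi}^i}(\mathbb{Y}_T) = \big[\mathcal{L}(\hat{\phi}^{i+1},\hat{\phi}^i)-\mathcal{L}(\hat{\phi}^i,\hat{\phi}^i)\big] + \text{D}_{\text{KL}}(p_{\hat{\phi}^i}(\mathbb{S}_{T+1}|\mathbb{Y}_T)\,\|\,p_{\hat{\phi}^{i+1}}(\mathbb{S}_{T+1}|\mathbb{Y}_T))$. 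The inequality \eqref{LLcalLcalL} is then immediate from $\text{D}_{\text{KL}}\geq 0$, and the equality case holds exactly when this divergence vanishes, i.e.\ when $p_{\hat{\phi}^{i+1}}(\mathbb{S}_{T+1}|\mathbb{Y}_T)=p_{\hat{\phi}^i}(\mathbb{S}_{T+1}|\mathbb{Y}_T)$, as claimed.

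There is no serious analytic obstacle here, since all the integral manipulations were already discharged in Lemma~\ref{lemma:EM_proof_new}; the only points requiring care are bookkeeping ones. I would double-check the orientation of the residual \kl term (it is $\text{D}_{\text{KL}}(p_{\hat{\phi}^i}\,\|\,p_{\hat{\phi}^{i+1}})$, the current posterior relative to the updated one), confirm that the $\phi$-independent entropy term genuinely cancels in the subtraction so that no spurious dependence on $\hat{\phi}^{i+1}$ survives, and verify that the equality condition invokes nonnegativity of relative entropy with equality iff the measures coincide (almost everywhere), which is precisely the stated hypothesis.
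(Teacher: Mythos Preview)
Your proposal is correct and is essentially the same argument as the paper's: both derive the exact identity
\[
L_{\hat{\phi}^{i+1}}(\mathbb{Y}_T)-L_{\hat{\phi}^i}(\mathbb{Y}_T)
=\big[\mathcal{L}(\hat{\phi}^{i+1},\hat{\phi}^i)-\mathcal{L}(\hat{\phi}^i,\hat{\phi}^i)\big]
+\text{D}_{\text{KL}}\big(p_{\hat{\phi}^i}(\mathbb{S}_{T+1}|\mathbb{Y}_T)\,\|\,p_{\hat{\phi}^{i+1}}(\mathbb{S}_{T+1}|\mathbb{Y}_T)\big)
\]
and conclude via $\text{D}_{\text{KL}}\geq 0$. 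The only cosmetic difference is that the paper expands $\mathcal{L}(\phi,\hat{\phi}^i)$ directly from its definition to obtain this identity, whereas you route through the lower bound $l(\phi,\hat{\phi}^i)$ and the decomposition \eqref{Lemma_1_property} from Lemma~\ref{lemma:EM_proof_new} before converting back via \eqref{eq:28_new}; the content is identical.
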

\begin{proof}
The proof is the fundamental development of \cite{dempster1977maximum} and other details of assumptions and convergence can be found in \cite{mclachlan2007algorithm}, \cite{wu1983convergence}. One can expand  $\mathcal{L} (\phi,\hat{\phi}^i)$ in a straightforward manner, using \eqref{Ltheta_kheta},
\begin{align*}
     \mathcal{L} (\phi,\hat{\phi}^i) =&  \mathbb{E}_{\hat{\phi^i}} (  \log  p_\phi (\mathbb{S}_{T+1}, \mathbb{Y}_T) | \mathbb{Y}_T ) 
     \\
      =& \mathbb{E}_{p_{\hat{\phi}^i} (\mathbb{S}_{T+1}|\mathbb{Y}_T)} (\log  p_{{\phi} } (\mathbb{S}_{T+1}|\mathbb{Y}_T)) \nonumber \\
     & + \mathbb{E}_{p_{\hat{\phi}^i} (\mathbb{S}_{T+1}|\mathbb{Y}_T)} (\log  p_\phi (\mathbb{Y}_T) ) \\
    =& \mathbb{E}_{p_{\hat{\phi}^i} (\mathbb{S}_{T+1}|\mathbb{Y}_T)} (\log  p_{{\phi} } (\mathbb{S}_{T+1}|\mathbb{Y}_T))  +  \log  p_\phi (\mathbb{Y}_T).
\end{align*}
Therefore,
\begin{align*}
     & \mathcal{L} (\phi,\hat{\phi}^i) -  \mathcal{L} (\hat{\phi}^i,\hat{\phi}^i) \\
=& \mathbb{E}_{p_{\hat{\phi}^i} (\mathbb{S}_{T+1}|\mathbb{Y}_T)} \log 
\frac{ p_{{\phi} } (\mathbb{S}_{T+1}|\mathbb{Y}_T) }{p_{\hat{\phi}^i } (\mathbb{S}_{T+1}|\mathbb{Y}_T)} 
+  L_\phi (\mathbb{Y}_T)    -  L_{\hat{\phi}^i} (\mathbb{Y}_T)\\
=& -\text{D}_{\text{KL}} ({p}_{\hat{\phi}^i} (\mathbb{S}_{T+1}|\mathbb{Y}_T) || {p}_{\phi} (\mathbb{S}_{T+1}|\mathbb{Y}_T))  +  L_\phi (\mathbb{Y}_T)    -  L_{\hat{\phi}^i} (\mathbb{Y}_T) \\
\leq & L_\phi (\mathbb{Y}_T)    -  L_{\hat{\phi}^i} (\mathbb{Y}_T),
\end{align*}
which verifies \eqref{LLcalLcalL} with $\phi=\hat{\phi}^{i+1}$.
\end{proof}

\bibliographystyle{IEEEtran}
\bibliography{IEEEabrv,main} 

\begin{IEEEbiography}[{\includegraphics[width=1in,height=1.25in,clip,keepaspectratio]{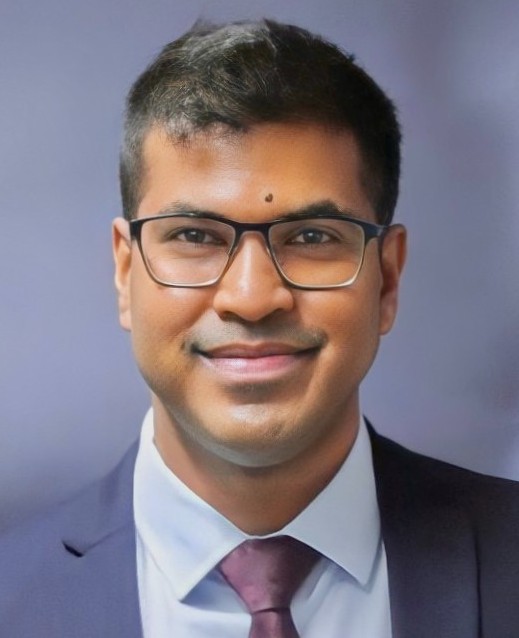}}]
{Prakash Mallick} holds a B.Tech, M.Eng, and Ph.D. from the National Institute of Technology, Rourkela, India, University of Melbourne, Australia, and University of Newcastle, Australia, respectively (awarded in 2012, 2017, and 2023). With extensive industrial research and development experience, including three years as an electrical engineer in Coal India Limited and one year in quantitative research at Ardea Investment Management, he currently serves as a postdoctoral research fellow at the Australian Institute of Machine Learning (AIML) funded by Defence Science and Technology Group (DSTG), Adelaide. \end{IEEEbiography}

\begin{IEEEbiography}[{\includegraphics[width=1.1in,height=1.25in,clip,keepaspectratio]{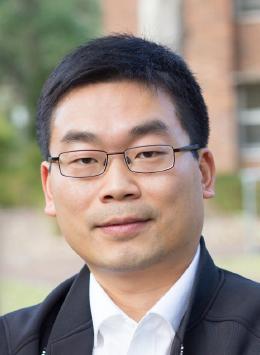}}]
{Zhiyong Chen} received the B.E. degree from the University of Science and Technology of China, and the M.Phil. and Ph.D. degrees from the Chinese University of Hong Kong, in 2000, 2002 and 2005, respectively. He worked as a Research Associate at the University of Virginia during 2005-2006. He joined the University of Newcastle, Australia, in 2006, where he is currently a Professor. He was also a Changjiang Chair Professor with Central South University, Changsha, China. His research interests include non-linear systems and control, biological systems, and multi-agent systems. He is/was an associate editor of Automatica, IEEE Transactions on Automatic Control and IEEE Transactions on Cybernetics.  \end{IEEEbiography}

\vfill

\end{document}